\definecolor{aqua}{rgb}{0, 1.0, 1.0}
\definecolor{fuschia}{rgb}{1.0, 0, 1.0}
\definecolor{gray}{rgb}{0.502, 0.502, 0.502}
\definecolor{lime}{rgb}{0, 1.0, 0}
\definecolor{maroon}{rgb}{0.502, 0, 0}
\definecolor{navy}{rgb}{0, 0, 0.502}
\definecolor{olive}{rgb}{0.502, 0.502, 0}
\definecolor{purple}{rgb}{0.502, 0, 0.502}
\definecolor{silver}{rgb}{0.753, 0.753, 0.753}
\definecolor{teal}{rgb}{0, 0.502, 0.502}
\newdimen\itex@wd%
\newdimen\itex@dp%
\newdimen\itex@thd%
\def\itexspace#1#2#3{\itex@wd=#3em%
\itex@wd=0.1\itex@wd%
\itex@dp=#2ex%
\itex@dp=0.1\itex@dp%
\itex@thd=#1ex%
\itex@thd=0.1\itex@thd%
\advance\itex@thd\the\itex@dp%
\makebox[\the\itex@wd]{\rule[-\the\itex@dp]{0cm}{\the\itex@thd}}}
\newif\if@sup
\newtoks\@sups
\def\append@sup#1{\edef\act{\noexpand\@sups={\the\@sups #1}}\act}%
\def\reset@sup{\@supfalse\@sups={}}%
\def\mk@scripts#1#2{\if #2/ \if@sup ^{\the\@sups}\fi \else%
  \ifx #1_ \if@sup ^{\the\@sups}\reset@sup \fi {}_{#2}%
  \else \append@sup#2 \@suptrue \fi%
  \expandafter\mk@scripts\fi}
\def\tensor#1#2{\reset@sup#1\mk@scripts#2_/}
\def\multiscripts#1#2#3{\reset@sup{}\mk@scripts#1_/#2%
  \reset@sup\mk@scripts#3_/}
\newbox\slashbox \setbox\slashbox=\hbox{$/$}
\def\itex@pslash#1{\setbox\@tempboxa=\hbox{$#1$}
  \@tempdima=0.5\wd\slashbox \advance\@tempdima 0.5\wd\@tempboxa
  \copy\slashbox \kern-\@tempdima \box\@tempboxa}
\def\slash{\protect\itex@pslash}
\def\clap#1{\hbox to 0pt{\hss#1\hss}}
\let\oldroot\root
\def\root#1#2{\oldroot #1 \of{#2}}
\renewcommand{\sqrt}[2][]{\oldroot #1 \of{#2}}
\DeclareSymbolFont{symbolsC}{U}{txsyc}{m}{n}
\DeclareSymbolFont{stmry}{U}{stmry}{m}{n}
\DeclareFontFamily{OMX}{MnSymbolE}{}
\DeclareSymbolFont{mnomx}{OMX}{MnSymbolE}{m}{n}
\DeclareFontShape{OMX}{MnSymbolE}{m}{n}{
    <-6>  MnSymbolE5
   <6-7>  MnSymbolE6
   <7-8>  MnSymbolE7
   <8-9>  MnSymbolE8
   <9-10> MnSymbolE9
  <10-12> MnSymbolE10
  <12->   MnSymbolE12}{}
\def\re@DeclareMathSymbol#1#2#3#4{%
    \let#1=\undefined
    \DeclareMathSymbol{#1}{#2}{#3}{#4}}
\re@DeclareMathSymbol{\neArrow}{\mathrel}{symbolsC}{116}
\re@DeclareMathSymbol{\neArr}{\mathrel}{symbolsC}{116}
\re@DeclareMathSymbol{\seArrow}{\mathrel}{symbolsC}{117}
\re@DeclareMathSymbol{\seArr}{\mathrel}{symbolsC}{117}
\re@DeclareMathSymbol{\nwArrow}{\mathrel}{symbolsC}{118}
\re@DeclareMathSymbol{\nwArr}{\mathrel}{symbolsC}{118}
\re@DeclareMathSymbol{\swArrow}{\mathrel}{symbolsC}{119}
\re@DeclareMathSymbol{\swArr}{\mathrel}{symbolsC}{119}
\re@DeclareMathSymbol{\nequiv}{\mathrel}{symbolsC}{46}
\re@DeclareMathSymbol{\Perp}{\mathrel}{symbolsC}{121}
\re@DeclareMathSymbol{\Vbar}{\mathrel}{symbolsC}{121}
\re@DeclareMathSymbol{\sslash}{\mathrel}{stmry}{12}
\re@DeclareMathSymbol{\bigsqcap}{\mathop}{stmry}{"64}
\re@DeclareMathSymbol{\biginterleave}{\mathop}{stmry}{"6}
\re@DeclareMathSymbol{\invamp}{\mathrel}{symbolsC}{77}
\re@DeclareMathSymbol{\parr}{\mathrel}{symbolsC}{77}
\def\Decl@Mn@Delim#1#2#3#4{%
  \if\relax\noexpand#1%
    \let#1\undefined
  \fi
  \DeclareMathDelimiter{#1}{#2}{#3}{#4}{#3}{#4}}
\def\Decl@Mn@Open#1#2#3{\Decl@Mn@Delim{#1}{\mathopen}{#2}{#3}}
\def\Decl@Mn@Close#1#2#3{\Decl@Mn@Delim{#1}{\mathclose}{#2}{#3}}
\Decl@Mn@Open{\llangle}{mnomx}{'164}
\Decl@Mn@Close{\rrangle}{mnomx}{'171}
\Decl@Mn@Open{\lmoustache}{mnomx}{'245}
\Decl@Mn@Close{\rmoustache}{mnomx}{'244}
\DeclareRobustCommand\widecheck[1]{{\mathpalette\@widecheck{#1}}}
\def\@widecheck#1#2{%
    \setbox\z@\hbox{\m@th$#1#2$}%
    \setbox\tw@\hbox{\m@th$#1%
       \widehat{%
          \vrule\@width\z@\@height\ht\z@
          \vrule\@height\z@\@width\wd\z@}$}%
    \dp\tw@-\ht\z@
    \@tempdima\ht\z@ \advance\@tempdima2\ht\tw@ \divide\@tempdima\thr@@
    \setbox\tw@\hbox{%
       \raise\@tempdima\hbox{\scalebox{1}[-1]{\lower\@tempdima\box
\tw@}}}%
    {\ooalign{\box\tw@ \cr \box\z@}}}
\def\udots{\mathinner{\mkern2mu\raise\p@\hbox{.}
\mkern2mu\raise4\p@\hbox{.}\mkern1mu
\raise7\p@\vbox{\kern7\p@\hbox{.}}\mkern1mu}}
\newcommand{\infinity}{\infty}
\newcommand{\R}{\ensuremath{\mathbb R}}
\renewcommand{\(}{\begin{equation*}}
\renewcommand{\)}{\end{equation*}}
\newcommand{\bea}{\begin{eqnarray*}}
\newcommand{\eea}{\end{eqnarray*}}
\theoremstyle{italics}
\newtheorem{theorem}{Theorem}[section]
\newtheorem{lemma}[theorem]{Lemma}
\newtheorem{prop}[theorem]{Proposition}
\theoremstyle{definition}
\newtheorem{defn}[theorem]{Definition}
\newtheorem{example}[theorem]{Example}
\theoremstyle{remark}
\newtheorem{remark}[theorem]{Remark}
\newtheorem{note[theorem]}{Note}
\begin{document}

%-------------------------------------------------------------------
%

\title{T-Duality from super Lie $n$-algebra cocycles for super $p$-branes
}

 \author{Domenico Fiorenza\thanks{Dipartimento di Matematica, La Sapienza Universit\`a di Roma Piazzale Aldo Moro 2, 00185 Rome, Italy}, \;
     Hisham Sati\thanks{University of Pittsburgh, Pittsburgh, PA 15260, USA, and New York University, Abu Dhabi, UAE}, \;
     Urs Schreiber\thanks{Mathematics Institute of the Academy, {\v Z}itna 25, 115 67 Praha 1, Czech Republic}
     }

\maketitle
\begin{abstract}
  We compute the $L_\infty$-theoretic double dimensional reduction of the F1/D$p$-brane super $L_\infty$-cocycles
  with coefficients in rationalized twisted K-theory
  from the 10d type IIA and type IIB super Lie algebras down to 9d.
  We show that the two resulting coefficient $L_\infty$-algebras
  are naturally related by an $L_\infty$-isomorphism which we find to act on the super $p$-brane cocycles
  by the infinitesimal version of the rules of topological T-duality and inducing an isomorphism
  between $K^0$-cocycles in type IIA and $K^1$-cocycles in type IIB, rationally.
  In particular this is a derivation of the Buscher rules for RR-fields (Hori's formula) from first principles.
  Moreover, we show that these $L_\infty$-algebras are the homotopy quotients of the
  RR-charge coefficients by the ``T-duality Lie 2-algebra''.
  We find that the induced $L_\infty$-extension is a gerby extension of a $9+(1+1)$
  dimensional (i.e. ``doubled'') T-duality correspondence super-spacetime, which serves
  as a local model for T-folds.
  We observe that this still extends, via the D0-brane cocycle of its type IIA factor,
  to a $10+(1+1)$-dimensional super Lie algebra. Finally we show that this satisfies
  expected properties of a local model space for F-theory elliptic fibrations.
\end{abstract}

\tableofcontents

\section{Introduction}

Understanding and constructing fields and branes in string theory  and M-theory
in a manner compatible with
supersymmetry and with the various dualities is a fundamental problem,
both from the theoretical as well as the phenomenological point of view.
On the former, a complete solution to this problem would provide a solid ground for deriving the theory from
firm (mathematical) principles. On the latter, it would help in the
systematic classification of allowable vacua.

\medskip

The fundamental super $p$-branes
that have no gauge fields on their worldvolume and
which propagate on super-Minkowski spacetime
are defined via Green-Schwarz type action functionals.
These are higher-dimensional super coset WZW-type functionals \cite{HM},
for super-Minkowski regarded as the super-coset of super-Poincar{\'e} by the Spin cover of the
Lorentz group.
Accordingly, these $p$-branes are classified by the invariant super Lie algebra cohomology
of the supersymmetry algebras, a fact known as the ``old brane scan'' \cite{AETW87}.
When super-Minkowski target spacetime is generalized to curved super-spacetimes,
then this statement applies super-tangent-space wise: the bispinorial component of the
field strength super $(p+2)$-form $H_{p+2}$ to which the $p$-brane couples
is constrained to coincide in
each tangent space with the left-invariant form
$\tfrac{1}{p!} \left(\overline{\psi} \wedge \Gamma_{a_1\cdots a_p} \psi\right) \wedge e^{a_1} \wedge \cdots e^{a_p}$
corresponding to the super-cocycles in the old brane scan \cite{BST86, BST87}.
Notice that this is in direct analogy for instance to $G_2$-structures on 7-manifolds, which are
given by differential 3-forms that are constrained to coincide tangent-space wise with
a fixed 3-cocycle on $\mathbb{R}^7$. In particular, the bosonic part of the field strength $H_{p+2}$
may vanish identically, and still its bispinorial component is constrained to coincide with the
given cocycle super-tangent-space wise. In this way the super Lie algebra cocycles tightly control
the structure of super $p$-brane charges.

\medskip
We view the above as a powerful statement:
While in general these differential forms $H_{p+2}$ are just the image in real cohomology
(``rationalization'') of components of some more refined cohomology theory,
this says that at least the rational image of the charges of branes without gauge fields
on their worldvolume is classified by super Lie algebra cohomology.
Notice that in certain instances
rationalization of twisted generalized cohomology in the treatment of T-duality
is even forced upon us  (see \cite{LSW}).

\medskip
This phenomenon turns out to generalize also to those branes that do carry (higher) gauge fields on their worldvolume,
such as the D-branes and the M5-brane -- \emph{if} one generalizes the Chevalley-Eilenberg algebras
of super-Minkowski spacetimes to quasi-free differential-graded super-commutative algebras with
generators also in higher degree \cite{CAIB00, IIBAlgebra}. These DG-algebras are just the ``FDAs''
from the supergravity literature \cite{DAuriaFre82,CDF}. For instance what in \cite{CAIB00} is identified as the
cocycle for the M5-brane earlier appeared as an algebraic ingredient in the construction of
11d supergravity in \cite{DAuriaFre82}; similarly the algebra for the D$p$-brane charges found in
\cite{IIBAlgebra} and \cite{CAIB00} appears earlier as an ingredient for constructing type II
supergravity in \cite{Ca}.

\medskip
Now, while (super) Lie algebra cohomology is a respectable mathematical subject, what
are these ``extended super Minkowski algebras'' that carry the D-brane and M-brane charges,
really? In \cite{FSS13} we had pointed out, following \cite{SSS09}, that these ``FDA''s
are naturally identified as the Chevalley-Eilenberg algebras of \emph{super Lie $n$-algebras}
also called \emph{$n$-term super $L_\infty$-algebras}, for higher $n$
\footnote{Notice that these are Lie $n$-algebras in the sense of Stasheff
\cite{LadaStasheff93, LadaMarkl95, SSS09},
not ``$n$-Lie algebras'' in the sense of Filippov.
However, the two notions are not unrelated. At least the Filippov 3-algebras
that appear in the BLG model of coincident solitonic M2-branes may naturally be understood as Stasheff-type Lie
2-algebras equipped with a metric form. This observation is due to \cite[section 2]{PSa}, based on \cite{FF}.},
and that under this identification the Lie theory that underlies the ``old brane scan''
turns into the ``higher Lie theory'' or homotopy theory of super Lie $n$-algebras
that sees the entire super $p$-brane content \cite{FSS13}.

\medskip
This homotopy-theoretic perspective sheds further light on the classification of
super $p$-branes. For instance, it identifies the extended super-Minkowski spacetimes of
\cite{CAIB00, IIBAlgebra} as being the higher central super Lie $n$-algebra
extensions of super-Minkowski spacetime that are classified by the 3-cocycles for
the superstring and by the 4-cocycle for the M2-brane. This is in higher analogy to how
2-cocycles classify ordinary central extensions. Namely, these extended spacetimes
are the \emph{homotopy fibers} of the corresponding cocycles \cite[Prop. 3.5]{FSS13},
see example \ref{homotopyfiberofLinfinityCocycles} below.

\medskip
This means that by embedding super Lie algebra theory into the larger context
of homotopy super Lie algebra theory (super $L_\infty$-algebra theory) then
\emph{all} super $p$-branes are found by a sequence of consecutive higher
invariant extensions, yielding a ``bouquet of branes'' growing out of the super-spacetimes
\cite{FSS13}. This provides the generalization of the ``old brane scan'' that was argued to be needed
in \cite{LPSS}. There, the multiplicity of elementary and solitonic
$p$-brane solutions to supergravity theories was shown to cover many more values of
$(D,d)$ than the
classic $\kappa$-symmetric points on the brane scan, suggesting that the
original classification needs to be generalized.
Indeed, our work pins down that the required generalization is from super Lie
algebra cohomology to super $L_\infty$-algebra cohomology.

\medskip
In fact this bouquet of invariant higher super $L_\infty$-cocycles is
rooted in 0-dimensional super-space, the superpoint. %\cite{HuertaSchreiber}.
It
is a diagram of super Lie $n$-algebras of the following form:
%\vspace{-.1cm}
$$
  \hspace{-1.5cm}
  \xymatrix@=1em{
    &
    &&&& \mathfrak{m}5\mathfrak{brane}
     \ar[d]
    \\
    &
    &&
     && \mathfrak{m}2\mathfrak{brane}
    \ar[dd]
    &&
    \\
    &
    &
    \mathfrak{d}5\mathfrak{brane}
    \ar[ddr]
    &
    \mathfrak{d}3\mathfrak{brane}
    \ar[dd]
    &
    \mathfrak{d}1\mathfrak{brane}
    \ar[ddl]
    &
    & \mathfrak{d}0\mathfrak{brane}
    \ar@{}[ddd]|{\mbox{\tiny (pb)}}
    \ar[ddr]
    \ar@{-->}[dl]
    &
    \mathfrak{d}2\mathfrak{brane}
    \ar[dd]
    &
    \mathfrak{d}4\mathfrak{brane}
    \ar[ddl]
    \\
    &
    &
    \mathfrak{d}7\mathfrak{brane}
    \ar[dr]
    &
    &
    & \mathbb{R}^{10,1\vert \mathbf{32}}
      \ar[ddr]
    &&&
    \mathfrak{d}6\mathfrak{brane}
    \ar[dl]
    \\
    &
    &
    \mathfrak{d}9\mathfrak{brane}
    \ar[r]
    &
    \mathfrak{string}_{\mathrm{IIB}}
    \ar[dr]
    &
    & \mathfrak{string}_{\mathrm{het}}
      \ar[d]
    &&
    \mathfrak{string}_{\mathrm{IIA}}
    \ar[dl]
    &
    \mathfrak{d}8\mathfrak{brane}
    \ar[l]
    \\
    &
    &
    &
    &
    \mathbb{R}^{9,1 \vert \mathbf{16} + {\mathbf{16}}}
    \ar@{<-}@<-3pt>[r]
    \ar@{<-}@<+3pt>[r]
    & \mathbb{R}^{9,1\vert \mathbf{16}} \ar[dr]
    \ar@<-3pt>[r]
    \ar@<+3pt>[r]
    &
    \mathbb{R}^{9,1\vert \mathbf{16} + \overline{\mathbf{16}}}
    \\
    &
    &
    &
    &
    &
    \mathbb{R}^{5,1\vert \mathbf{8}}
    \ar[dl]
    &
    \mathbb{R}^{5,1 \vert \mathbf{8} + \overline{\mathbf{8}}}
    \ar@{<-}@<-3pt>[l]
    \ar@{<-}@<+3pt>[l]
    \\
    &
    &
    &
    &
    \mathbb{R}^{3,1\vert \mathbf{4}+ \mathbf{4}}
    \ar@{<-}@<-3pt>[r]
    \ar@{<-}@<+3pt>[r]
    &
    \mathbb{R}^{3,1\vert \mathbf{4}}
    \ar[dl]
    \\
    &
    &
    &
    &
    \mathbb{R}^{2,1 \vert \mathbf{2} + \mathbf{2} }
    \ar@{<-}@<-3pt>[r]
    \ar@{<-}@<+3pt>[r]
    &
    \mathbb{R}^{2,1 \vert \mathbf{2}}
    \ar[dl]
    \\
    &
    &
    &
    &
    \mathbb{R}^{0 \vert \mathbf{1}+ \mathbf{1}}
    \ar@{<-}@<-3pt>[r]
    \ar@{<-}@<+3pt>[r]
    &
    \mathbb{R}^{0\vert \mathbf{1}}\;.
  }
$$

\medskip
\medskip
\noindent Here every solid arrow denotes a central super $L_\infty$-algebra extension which is
invariant with respect to the maximal semisimple part of the bosonic body of the external automorphisms
(i.e. all automorphisms modulo R-symmetries) of
the super $L_\infty$-algebra that is being extended. These turn out to be the respective
Lorentz groups (their Spin-covers). %\cite[Theorem 1]{HuertaSchreiber}.
Notice that
the claim is that it is the maximality of these invariant extensions which implies that the
extensions of the superpoint are super-Minkowski spacetimes, and that they are precisely
of dimension increasing from 0 through 3,4, 6, 10 to 11.
The top of this diagram is discussed in \cite{FSS13}.
The proof of the ``trunk'' of the bouquet is due to \cite{HuertaSchreiber}.
(For dimensions 0 to 3 the statement was observed earlier in \cite{SchreiberBristol}.)

\medskip
This shows that the core structure of string/M-theory
follows from first principles in higher super Lie algebra theory, with no need of an external
input from Lorentz geometry, or Spin geometry. Instead, Lorentzian geometry
and Spin geometry is discovered (re-discovered) by analyzing the super-point in
higher super Lie theory, as is the existence of all the super $p$-branes in their respective
super-spacetimes.\footnote{
 Compare to \cite[p.41]{Moore14}:
 ``Perhaps we need to understand the nature of time it self better. [...] One natural way
to approach that question would be to understand in what sense time itself is an emergent
concept, and one natural way to make sense of such a notion is to understand how
pseudo-Riemannian geometry can emerge from more fundamental and abstract notions such as categories of branes.''
}
This further suggests that higher super Lie theory knows much more about the inner working
of string/M-theory, and that we may check conjectures on M-theory and discover its missing details  by a systematic
homotopy theoretic analysis of the superpoint.
In the present article we are concerned with discovering and studying \emph{T-duality}
from this perspective (see e.g. \cite{AAL}). We will show that this allows to systematically derive
phenomena that have been proposed or conjectured but which seem to have
been lacking a derivation from first principles, such as the
rules of ``topological T-dulity'' on supermanifolds, the nature of super-T-folds
and the emergence of F-theory from M-theory.

\medskip
In order to describe the action of T-duality on the F1/D$p$ branes, we need the cocycles for the super D-branes
not as cocycles with coefficients in $\mathbb{R}$
on the type II extended super-Minkowski super Lie 2-algebras that are denoted
$\mathfrak{string}_{\mathrm{IIA}}$ and $\mathfrak{string}_{\mathrm{IIB}}$ in the above diagram,
but we need to descend them to cocycles on the type II super-Minkowski super Lie 1-algebras
themselves, where they will take values in more complicated richer coefficients.
This homotopical descent of the brane cocycles we have previously discovered and studied in
 \cite{FSS15} and \cite{FSS16}. There we showed that homotopy theory
allows to descend the iterated $\mathbb{R}$-valued cocycles for separate $p$-branes,
defined on a extended Minkowski spacetime, back to single cocycles on plain super-Minkowski
spacetime, but now taking values in more complex coefficients. We
showed in \cite[section 4]{FSS16} that applying this homotopy-theoretic descent to the
type IIA D$p$-brane cocycles gives that jointly they combine with the cocycle
for the type IIA superstring to one single cocycle, now with coefficients in the
$L_\infty$-algebra which is the image under Lie differentiation of the classifying
space $\mathrm{KU}/\mathbf{B}U(1)$ for twisted K-theory.
Of course
twisted K-theory has famously been argued earlier to be the correct cohomology
theory in which F1/D$p$-brane (background) charges properly take value \cite{Wi} \cite{MooreWitten00} \cite{Ka} \cite{BM}.
Therefore, it is of interest to explore how much more the homotopy theory of super Lie-$n$ algebra
may teach us about string/M-theory.

\medskip

%{\color{red}Following Kontsevich \cite{quantization-poisson} and Hinich \cite{dg-co-as-formal-stacks}, the homotopy theory of %$L_\infty$-algebras in characteristic zero can be seen as a model for the homotopy theory of infinitesimal stacks. In particular, this %generalizes Quillen description of rational homotopy theory of simply connected spaces in terms of 1-reduced DG-Lie algebras %\cite{quillen-rational-homotopy-theory}. In the present article we will constantly adopt this point of view, and will denote algebraic %constructions on $L_\infty$-algebras with the same name of their geometric counterpart. }%.In the present article

In the present article, we first observe in  section \ref{SectionDoubleDimensionalReduction} that (double) dimensional
reduction is naturally encoded on super $L_\infty$-algebras by \emph{cyclification}, namely by the process which in terms of rational homotopy theory corresponds to passing to
homotopy quotients of free loop spaces by the rotation action on loops. We find that $L_\infty$-theoretically
this (double) dimensional reduction is an \emph{isomorphism}, hence has an inverse (``oxidation'', see \cite{LPSS})
that completely reconstructs the higher dimensional situation
(by the $L_\infty$-incarnation of D0-brane condensation \cite[Remark 3.11, 4.6]{FSS13}).
In this sense this process is non-perturbative, a fact that is important for our discussion of
F-theory further below.

\medskip
We first use this reduction isomorphism in section \ref{SectionDoubleDimensionalReduction} to recall
from \cite{FSS13} the
form of the descended type IIA F1/D$p$-brane cocycles for $p \in \{0,2,4\}$ as the dimensional reduction of the M-brane cocycles
from 11d. Then we we observe that down in 10d these are enhanced to cocycles for $p$-branes
for $p \in \{0,2,4,8, 10\}$, using the Fierz identity analysis in \cite{CAIB00}.
We also state the corresponding type IIB F1/D$p$-brane cocycles. These may be similarly extracted from
analysis of Fierz identities \cite{IIBAlgebra}, but our main theorem below (Theorem \ref{TheTDualityIsoOnLInfinityCocycles})
also implies the form of either (IIA or IIB) from the other.

\medskip
 In section \ref{TDuality} we first
compute  the dimensional reduction of the type IIA F1/D$p$-brane cocycles from the 10d super-Minkowski
super Lie algebra to 9d. We observe that the two $L_\infty$-algebras of coefficients of the descended
cocycles (for type IIA and type IIB) are manifestly related by a an $L_\infty$-isomorphism (Proposition \ref{CoefficientsIsoOnTDuality}.)
Then we show in Theorem  \ref{TheTDualityIsoOnLInfinityCocycles} that the action of this isomorphism on the descended type II $L_\infty$-cocycles
 implements \emph{T-duality} between the F1/D$p$ brane charges of IIA and type IIB string theory:
we discover that the super-cocycles follow the infinitesimal version of the rules of
``topological T-duality'', originally proposed by \cite{BouwknegtEvslinMathai04}
with precise formulation due to \cite{BunkeSchick05}\cite{BunkeRumpfSchick06},
this is the content of remarks \ref{NatureOfTheAlgebraicTDualityIso} and \ref{BunkeStyleTDuality} below.
Notice that even though the coefficients we obtain are just the rational image of the twisted K-theory
that appears in topological T-duality, this is the first time (to the best of our knowledge)
that the rules for topological T-duality are actually derived from string theoretic first principles,
and that topological T-duality is connected to local spacetime supersymmetry.
 In fact our derivation
shows that the existence and structure of (topological) T-duality acting of F1/D$p$-brane charges is
entirely controled by higher super Lie algebra theory. Similarly, we derive the
Buscher rules for RR-fields (\emph{Hori's formula}) from first principles this way (Proposition \ref{TDualityViaPullPush},
Remark \ref{TopologicalTDualityIII}).

\medskip
Our main theorem (Theorem  \ref{TheTDualityIsoOnLInfinityCocycles})
shows that in the category of super $L_\infty$-algebras
T-duality is incarnated as the right part of a diagram of the following form:

{\tiny
$$
  \hspace{-1cm}
  \xymatrix@C=.01em{
      &
      \fbox{
      \begin{tabular}{c}
        11d $N = 1$
        \\
        super-spacetime
      \end{tabular}
    }
    \ar[dr]
    & & & \fbox{\begin{tabular}{c}Classifying space \\ for circle bundles \end{tabular}}
    \\
    \fbox{
    \begin{tabular}{c}
      F-theory
      \\
      elliptic fibration
    \end{tabular}
    }
    \ar@{}[rr]|{\mbox{(pb)}}
    \ar[ur]
    \ar[dr]
    &
    &
    \fbox{\mbox{\tiny \begin{tabular}{c}10d type IIA \\super-spacetime \end{tabular}}}
    \ar[dr]
    & && \fbox{Cyclic twisted $K^1$}
    \ar[ul]
    \\
    &
    \fbox{
    \begin{tabular}{c}
      T-duality
      \\
      correspondence space
      \\
      (doubled spacetime)
    \end{tabular}
    }
    \ar@{}[rr]|{\mbox{\tiny (pb)}}
    \ar[ur]
    \ar[dr]
    &
    &
    \fbox{\mbox{ \begin{tabular}{c} 9d $N=2$ \\ super-spacetime \end{tabular} }}
    \ar[drr]|{ \mbox{\tiny  \begin{tabular}{c} Dimensionally reduced \\ IIA fields \end{tabular} }  }
    \ar[urr]|{ \mbox{\tiny \begin{tabular}{c} Dimensionally reduced \\ IIB fields \end{tabular} } }
    \ar@{}[rr]|>>>>>>>>>>>>>>>{\mbox{T-duality}}
    \ar[ddr]|{\hspace{-5mm}\rm Class\ of\ IIA\ bundle}
    \ar[uur]|{\hspace{-5mm} \rm Class\ of\ IIB\ bundle}
    &&
    \\
    \fbox{
    \begin{tabular}{c}
      Principal 2-bundle
      \\
      for T-duality 2-group
      \\
      over 9d super-spacetime
    \end{tabular}
    }
    \ar@{}[rr]|{\mbox{(pb)}}
    \ar[ur]
    \ar[dr]
    &
    &
    \fbox{\mbox{ \tiny \begin{tabular}{c} 10d type IIB \\ super-spacetime \end{tabular} } }
    \ar[ur]
    &
    && \fbox{Cyclic twisted $K^0$}
    \ar@{<->}[uu]^\simeq
    \ar[dl]
    \\
    &
    \fbox{
      \begin{tabular}{c}
        B-field gerbe
        \\
        over
        \\
        IIB super-spacetime
      \end{tabular}
    }
    \ar[ur]
    &
    &
    & \fbox{\begin{tabular}{c}Classifying space \\ for circle-bundles \end{tabular}}
  }
$$
}

The bottom left part of this diagram we discover in section \ref{TFolds}.
Namely topological T-duality in the alternative formulation of \cite{BunkeRumpfSchick06} is controlled
by the \emph{correspondence space} which is the fiber product of the IIA spacetime
with the T-dual IIB spacetime over their joint 9d base. In the string theory literature
this is essentially what is known as the \emph{doubled spacetime} \cite{Hull07}.
We first show that the correspondence space axiom for topological T-duality due to \cite{BunkeRumpfSchick06}
is satisfied by the relevant super Lie $n$-algebra extensions of super-Minkowski spacetimes
(Proposition \ref{BunkeStyleTDualityInfinitesimally}).

\medskip
Then we show that after
stripping off the RR-charge coefficients in 9d, the remaining coefficient $L_\infty$-algebra
is the delooping of the ``T-duality Lie 2-algebra'' (Def. \ref{Lie2AlgebraOfTDuality2Group}, Rem. \ref{TDuality2Group}). This is the homotopy fiber of the
cup product of two universal first Chern classes, in direct analogy to how the
``string Lie 2-algebra'' that controls the anomaly cancellation in heterotic string theory
is the homotopy fiber of the second Chern class/first Pontryagin class \cite{SSSIII}.
We show that the T-duality Lie 2-algebra has a natural $\infty$-action on the direct sum of twisted $K^0$ and $K^1$
(Prop. \ref{TDuality2GroupAction}).
Finally we construct the T-duality 2-group-principal 2-bundle over 9d super-Minkowski
spacetime that is classified by the descended F1/D$p$-brane cocycles, and find that this is equivalent to the total space
of the (either) gerbe extension of the T-duality correspondence spacetime
(Proposition \ref{TheTFoldAlgebra} below).

\medskip
It has been argued in \cite{Nikolaus14}
 that the principal 2-bundles of which this is the local model space are the right mathematical formulation of
Hull's concept of \emph{T-folds} \cite{Hull05} \cite{Hull07}.
This we will further discuss elsewhere.
Here we just remark that, as we already amplified for T-duality itself, while here we obtain
just the infinitesimal/rational image of this structure, we connect (for the first time, to our knowledge)
to spacetime super-symmetry and obtain what is in fact the local model for \emph{super T-folds}.

\medskip
Finally, in section \ref{SectionOnF} we observe that these doubled super T-correspondence spacetimes
still carry the D0-brane $L_\infty$-cocycle inherited through their type IIA fiber factor.
Hence there is a central super extension of this to a super Lie algebra of bosonic dimension 10+2.
We show in this last section that this super Lie algebra has the correct properties
to be expected of the local model space for an F-theory elliptic fibration,
according to \cite{Vafa96, Johnson97}. This gives the top left part of the above diagram, below
 this is Prop. \ref{TheFTheorySpacetimeInContext}.

\medskip
\noindent {\bf Related literature.}
Our T-duality takes place on superspaces, for a related discussion see \cite{Siegel}.
There are various other approaches to T-duality in relation to branes.
The brane worldvolume approach to T-duality transformations
as transformations which mix the worldvolume field equations with Bianchi identities
is discussed in \cite{PS} \cite{Se}.
 T-duality between D-branes is realized on the underlying $p$-brane solutions
of type IIA and type IIB supergravity in \cite{BdR}.
The relation to worldsheet and spacetime
 supersymmetry is discussed in \cite{BS},
  for  the Green-Schwarz superstring in
\cite{CLPS} \cite{KR}, and in the presence of RR fields in
 \cite{Ha}. In \cite{HKS}, a superspace with manifest T-duality including Ramond-Ramond
gauge fields is presented. The superspace is defined by the double nondegenerate
super-Poincare algebras where Ramond-Ramond charges are introduced by central
extension. In \cite{Saz} an interpretation of T-dualization procedure of type II superstring
theory in double space is given, taking into account compatibility between supersymmetry
and T-duality. A geometry of superspace corresponding to double field theory for type
II supergravity is introduced in \cite{Ced}  based on an orthosymplectic extension
${\rm OSp}(d,d|2s)$ of the continuous T-duality group.

\section{Supersymmetry super Lie $n$-algebras}
\label{Spinors}
%%%%%%%%%%%%%%%%%%%%

Here we introduce what we need below on super Lie $n$-algebras associated with supersymmetry.
Similarly to how $L_\infty$-algebras are extensions of Lie algebras to include higher brackets,
super $L_\infty$-algebras are extensions of Lie superalgebras to include higher
graded brackets.  These turn out to be defined via more familiar differential graded (DG) algebras
when we restrict to finite-dimensional super-vector spaces.

\begin{defn}\label{superLInfinityAlgebra}
  Write
  $$
    \mathrm{CE}
      \;:\;
    \mathrm{sL}^{\mathrm{fin}}_\infty\mathrm{Alg}_{\mathbb{R}}
      \hookrightarrow
    \mathrm{dgAlg}_{/\mathbb{R}}^{\mathrm{op}}
  $$
  for the full subcategory of the opposite of that of differential graded-algebras augmented over $\mathbb{R}$
  (i.e. DG $\mathbb{R}$-algebras equipped with a  algebra homomorphism to $\mathbb{R}$) whose underlying
  graded algebra is freely generated as a graded super-commutative algebra on a
  $\mathbb{Z}$-graded super-vector space which is degreewise finite dimensional.
  This is the category of \emph{super $L_\infty$-algebras of finite type}.

  {
  If we consider just $\mathbb{R}$-algebras $\mathrm{Alg}_{\mathbb{R}}$
  instead of augmented $\mathbb{R}$-algebras $\mathrm{Alg}_{/\mathbb{R}}$, then the analogous full inclusion
  $$
    \mathrm{CE}
      \;:\;
    \mathrm{sL}^{\mathrm{cvd},\mathrm{fin}}_\infty\mathrm{Alg}_{\mathbb{R}}
      \hookrightarrow
    \mathrm{dgAlg}_{\mathbb{R}}^{\mathrm{op}}
  $$
  is called that of possibly \emph{curved} $L_\infty$-algebras of finite type, with possibly \emph{curved}
  homomorphism between them.

  Under the functor $\mathrm{dgAlg}_{/\mathbb{R}}^{\mathrm{op}} \to \mathrm{dgAlg}_{\mathbb{R}}^{\mathrm{op}}$ which forgets the augmentation,
  then every $L_\infty$-algebra is regarded as a curved $L_\infty$-algebra with \emph{vanishing curvature} and as
  such there are then possibly curved homomorphisms between non-curved $L_\infty$-algebras.
  }
\end{defn}
\begin{remark}\label{striuctureInLInfinity}
%{\bf (i)}
  This means that for $\mathfrak{g} \in \mathrm{sL}_\infty\mathrm{Alg}_{\mathbb{R}}$
  a super $L_\infty$-algebra, then
  every generator of its Chevalley-Eilenberg algebra $\mathrm{CE}(\mathfrak{g})$
  carries a bidegree $(n,\sigma) \in \mathbb{Z}\times (\mathbb{Z}/2)$ and
  for two such elements the following holds
  $$
    \omega_1 \wedge \omega_2 = (-1)^{n_1 n_2 + \sigma_1 \sigma_2} \omega_2 \wedge \omega_1
    \,.
  $$
  (these signs are as in \cite[II.2.109]{CDF} and \cite[appendix §6]{DeligneFreed}).
  These DG-algebras $\mathrm{CE}(\mathfrak{g})$ are the ``Cartan integrable systems'' of \cite{DAuriaFre82}
  and the ``free differential algebras'' (FDAs) of \cite{vN} \cite[III.6]{CDF}.\footnote{It is however crucial that they are
  not in general free as differential algebras, but just as graded-commutative algebras. In rational homotopy theory one
  also speaks of ``quasi-free'' or ``semi-free'' dg-algebras.}

  By forming the linear dual of a differential graded algebra of finite type, it equivalently becomes
  a differential graded co-algebra.
  That every $L_\infty$-algebra gives a differential co-algebra is originally due to \cite{LadaStasheff93}
  and that this faithfully reflects the original $L_\infty$-algebra is due to \cite{LadaMarkl95},
  see \cite[around def. 13]{SSS09}. In more modern parlance this is due to the Koszul duality between the
  operads for Lie algebras and that for commutative algebras.

  While differential co-algebras are less familar in practice, they have the advantage that they
  immediately reflect also $L_\infty$-algebras not of finite type. This gives a full inclusion
  $$
    \mathrm{sL}^{\mathrm{fin}}_\infty\mathrm{Alg}_{\mathbb{R}}
      \hookrightarrow
    \mathrm{sL}_\infty\mathrm{Alg}_{\mathbb{R}}
  $$
  into the category of possibly degreewise infinite dimensional super $L_\infty$-algebras, { and similarly for the curved case
  $$
    \mathrm{sL}^{\mathrm{cvd},\mathrm{fin}}_\infty\mathrm{Alg}_{\mathbb{R}}
      \hookrightarrow
    \mathrm{sL}^{\mathrm{cvd}}_\infty\mathrm{Alg}_{\mathbb{R}}
    \,.
  $$
  }

\medskip
  Notice that there is a fully faithful inclusion
  $$
    \mathrm{sLieAlg}_{\mathbb{R}}
      \hookrightarrow
    \mathrm{sL}_\infty\mathrm{Alg}_{\mathbb{R}}
  $$
  of ordinary super Lie algebras into super $L_\infty$-algebras, whose image is those $\mathfrak{g}$ for which all generators in
  $\mathrm{CE}(\mathfrak{g})$ are in degree $(1,\sigma)$, for some super-degree $\sigma$.
  Notice that for $\mathfrak{g}$ a super $L_\infty$-algebra structure on a graded super vector space $V$,
 its CE-differential may be co-restricted to its co-unary piece (which sends single generators to single
generators). This is the super cochain complex dual to the \emph{underlying super chain complex} of the $L_\infty$-algebra.
\end{remark}
\begin{example}\label{LineLienAlgebra}
  For $n \in \mathbb{N}$ we write $b^n \mathbb{R}$ for the super $L_\infty$-algebra for which
  $\mathrm{CE}(b^n \mathbb{R})$ has a single generator in bidegree $(n+1,\mathrm{even})$, and
  vanishing differential (the ``line Lie $n$-algebra'').
  For $\mathfrak{g}$ any super $L_\infty$-algebra, we call a homomorphism
  $$
    \mu \;:\; \mathfrak{g} \longrightarrow b^n \mathbb{R}
  $$
 an \emph{$L_\infty$-cocycle of degree $(n+1)$} on $\mathfrak{g}$ with coefficients in $\mathbb{R}$.
  Because dually this is, by definition, a closed element in $\mathrm{CE}(\mathfrak{g})$ of degree $(n+1)$
  (see \cite[section 6.3]{SSS09}).
  \end{example}

We are going to use homotopy theory of super $L_\infty$-algebras. A standard method
to present such are model categories (\cite{Hirschhorn}), but for our puposes here
a more lighweight structure is fully sufficient: that of a Brown category of fibrant objects.
See \cite[Def. 3.54]{NSS12b} for review in a context that we are concerned with here.
\begin{prop}[{\cite{pridham}}]\label{modelstructureOnLInfinityAlgebra}
  There is a model category whose category of fibrant objects
  is precisely the category $\mathrm{sL}_\infty\mathrm{Alg}_{\mathbb{R}}$
  of super $L_\infty$-algebras, and such that on these the
  weak equivalences are the morphisms
  for which the underlying morphism of dual super chain complexes (Remark \ref{striuctureInLInfinity})
  is a quasi-isomorphism, and whose fibrations
  are the morphisms that induce a surjection on the underlying chain complexes. In terms of the dual Chevalley-Eilenberg algebras, this corresponds to an injection on the graded linear subspaces spanned by the generators.
\end{prop}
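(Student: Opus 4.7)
The plan is to establish this as Pridham's model category produced by transferring the projective model structure on super chain complexes along the free symmetric algebra functor and then dualizing to $L_\infty$-algebras. The ambient model category may be taken to be $\mathrm{dgAlg}_{/\mathbb{R}}^{\mathrm{op}}$ itself, of which $\mathrm{sL}_\infty\mathrm{Alg}_{\mathbb{R}}$ is, by Definition~\ref{superLInfinityAlgebra}, the full subcategory on those augmented dg super-commutative algebras that are quasi-free as graded super-commutative algebras. Under linear duality one may equivalently work with dg super-cocommutative coalgebras and their cofree subcategory, which is the framework of Pridham's original construction.

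First, I would transfer the projective model structure on unbounded super chain complexes over $\mathbb{R}$—where weak equivalences are quasi-isomorphisms and fibrations are degreewise surjections—along the free symmetric super-algebra functor $\mathrm{Sym}$ into augmented dg super-commutative algebras. By Quillen's small object argument, applied with the generating (trivial) cofibrations obtained as the image under $\mathrm{Sym}$ of the standard generating (trivial) cofibrations of super chain complexes, one obtains a cofibrantly generated model structure on $\mathrm{dgAlg}_{/\mathbb{R}}$ whose weak equivalences and fibrations are reflected on underlying super chain complexes, and whose cofibrations are retracts of quasi-free extensions. In particular, every cofibrant object is a retract of an object of $\mathrm{sL}_\infty\mathrm{Alg}_{\mathbb{R}}$.

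Second, I would pass to the opposite model category. There cofibrations and fibrations swap roles, so the fibrant objects are precisely the retracts of quasi-free dg-algebras, i.e.\ (up to the closure under retracts, which one absorbs into the category) the super $L_\infty$-algebras of Definition~\ref{superLInfinityAlgebra}. A fibration between fibrant objects is then dual to a cofibration between quasi-free objects, and such a map of quasi-free dg-algebras unwinds to an injection on the graded linear subspaces of generators, equivalently a surjection of the dual underlying super chain complexes (the ``co-unary part'' described in Remark~\ref{striuctureInLInfinity}). A weak equivalence between fibrant objects is likewise, by the transfer, a quasi-isomorphism of these dual super chain complexes, matching the claimed description.

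The main obstacle is verifying that Pridham's constructions transfer to the $\mathbb{Z}\times \mathbb{Z}/2$-graded super setting. Since super vector spaces form a symmetric monoidal $\mathbb{R}$-linear abelian category with all small (co)limits, enough projectives, and the expected Koszul sign rule augmented by the super-sign factor as in Remark~\ref{striuctureInLInfinity}, the abstract steps are formal; the one genuine verification is the smallness hypothesis needed for the small object argument applied to the transferred generating trivial cofibrations, which holds because $\mathrm{Sym}$ commutes with filtered colimits of super chain complexes and the underlying forgetful functor detects isomorphisms. Modulo this sign-rule bookkeeping, the statement is exactly Pridham's theorem in the super-graded setting.
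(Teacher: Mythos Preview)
Your proposal is essentially correct and in the same spirit as the paper's own proof, which is extremely brief: it simply cites Pridham's Propositions~4.36 and~4.42 for the bosonic case and asserts that chasing through those arguments shows they generalize verbatim to the super-graded setting. You supply considerably more detail about the mechanism (transfer of the projective model structure along $\mathrm{Sym}$, then passage to the opposite category), which the paper leaves entirely implicit in the citation.

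One small point to be careful about: in a transferred model structure the cofibrant objects are in general only \emph{retracts} of quasi-free algebras, so after passing to the opposite the fibrant objects are a priori retracts of objects of $\mathrm{sL}_\infty\mathrm{Alg}_{\mathbb{R}}$, not literally that category on the nose. You acknowledge this (``up to closure under retracts, which one absorbs into the category''), but it is a genuine gap between your sketch and the precise claim that the fibrant objects are \emph{precisely} the super $L_\infty$-algebras. The paper's proof does not address this either; it is handled inside Pridham's cited propositions, which work in a setting (pro-Artinian dg-algebras, or equivalently dg-coalgebras) where the identification is exact rather than up to retract. If you want your transfer sketch to land exactly on the stated conclusion, you would need either to invoke Pridham's specific framework or to argue separately that quasi-free augmented cdgas are closed under retracts in the relevant sense.
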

\begin{proof}
  By \cite[Prop 4.36, Prop. 4.42]{pridham} there is a model category for ordinary (i.e. bosonic)
  $L_\infty$-algebras with these properties. By chasing through the proofs there, one finds that they
  immediately generalize to the super-algebraic situation.
\end{proof}
\begin{example}[{\cite[Prop. 3.5]{FSS13}}]
  \label{homotopyfiberofLinfinityCocycles}
  For $\mathfrak{g}$ any super $L_\infty$-algebra and
  $$
    \mu_{p+2} \;:\;
      \mathfrak{g}
        \longrightarrow
      b^{p+1} \mathbb{R}
  $$
  any homomorphism into the line Lie $(p+1)$-algebra (a $(p+2)$-cocycle on $\mathfrak{g}$,  Example \ref{LineLienAlgebra}),
  the CE-algebra $\mathrm{CE}(\widehat{\mathfrak{g}})$
  of its homotopy fiber $\widehat{\mathfrak{g}} {\xrightarrow{{\tiny \phantom{m}\mathrm{hofib(\mu_{p+2})}\phantom{m}}}}\mathfrak{g}$
  is given from that of $\mathrm{CE}(\mathfrak{g})$ by adjoining a single generator $b_{p+1}$ in degree $p+1$ and
  extending the differential by
  $$
    d_{\widehat{\mathfrak{g}}}\ b_{p+1} = \mu_{p+2}
    \,,
  $$
  i,e.
  $$
    \mathrm{CE}\left(
      \widehat{\mathfrak{g}}
    \right)
    \simeq
    \mathrm{CE}\left(
      \mathfrak{g}
    \right)[b]/(d b_{p+1} = \mu_{p+2})
    \,.
  $$
  Notice that for $\mathfrak{g}$ an ordinary super Lie algebra, and  for $p = 0$, i.e. for the case of a 2-cocycle,
  then $\widehat{\mathfrak{g}}$ thus defined is simply the ordinary central extension of super Lie algebras
  classified by the 2-cocycle. Therefore in the general case we may think of the homotopy fiber $\widehat{\mathfrak{g}}$
  as the \emph{higher central extension} of super $L_\infty$-algebras classified by a super $L_\infty$-cocycle.
\end{example}
We are interested in \emph{supersymmetry} super Lie algebras, and their extensions to super $L_\infty$-algebras.
To be self-contained,
we briefly collect now some basics on Majorana spinors and spacetime supersymmetry algebras,
as well as their interrelation, as the spacetime dimension ranges from 11 down to 9.
We use conventions as in \cite[II.7.1]{CDF}, except for the first two points to follow, where we use the opposite signs.
This means that our Clifford matrices behave as in \cite[II.7.1]{CDF}, the only difference is in a sign when raising a
spacetime index or lowering a spacetime index.
\begin{defn}\label{basicconventions}
%\begin{enumerate}
%\item \label{LorentzSignature}
{\bf (i)} The Lorentzian spacetime metric is $\eta := \mathrm{diag}  (-1,+1,+1,+1,\cdots)$.
\item
{\bf (ii)} The Clifford algebra relation is $\Gamma_a \Gamma_b + \Gamma_b \Gamma_a  = - 2 \eta_{a b}$;
\item
{\bf (iii)} The timelike index is $a = 0$, the spacelike indices range $a \in \{1,\cdots, d-1\}$.
\item
{\bf (iv)} A unitary Dirac representation of $\mathrm{Spin}(d-1,1)$ is on
$\mathbb{C}^{2^\nu}$ where $d  \in \{2\nu, 2\nu +1\}$,  via
Clifford matrices such that $\Gamma_0^\dagger = \Gamma_0$ and $\Gamma_a^\dagger = - \Gamma_a^\dagger$
for $a \geq 1$.
\item
{\bf (v)} For $\psi \in \mathrm{Mat}_{\nu \times 1}(\mathbb{C})$ a complex spinor, we write
$\overline{\psi} := \psi^\dagger \Gamma_0$
for its Dirac conjugate. If we have Majorana spinors forming a real sub-representation $S$
then restricted to these the Dirac conjugate coincides with the Majorana conjugate
$\psi^\dagger \Gamma_0 = \psi^T C$ (where $C$ is the Charge conjugation matrix).
%\end{enumerate}
\end{defn}
As usual we write
$$
  \Gamma_{a_1 \cdots a_p}
    :=
  \tfrac{1}{p!}
  \sum_{\mbox{\tiny permutations $\sigma$}}
    (-1)^{\vert \sigma\vert}
    \Gamma_{a_{\sigma(1)}} \cdots \Gamma_{a_{\sigma(p)}}
$$
for the anti-symmetrization of products of Clifford matrices.
  These conventions imply that all $\Gamma_a$ are self-conjugate with respect to the pairing $\overline{(-)}(-)$, hence
  that
  $$
    \left(\overline{\psi} \Gamma_{a_1\cdots a_p} \psi\right)^\ast
      =
    (-1)^{p(p-1)/2}\,  \overline{\psi} \Gamma_{a_1 \cdots a_p} \psi
  $$
  holds for all $\psi$.
  This means that the following expressions are real numbers
  $$
 % \begin{aligned}
     \overline{\psi}\psi
\;, \quad
     \overline{\psi}\Gamma_a \psi
  \;, \quad
     i  \, \overline{\psi}\Gamma_{a_1 a_2} \psi
  \;, \quad
     i  \, \overline{\psi}\Gamma_{a_1 a_2 a_3}\psi
   \;, \quad
      \overline{\psi}\Gamma_{a_1\cdots a_4} \psi
    \;, \quad
      \overline{\psi}\Gamma_{a_1\cdots a_5} \psi
    \;, \quad
      i\, \overline{\psi}\Gamma_{a_1\cdots a_6} \psi
    \;, \quad
      \cdots
      \;.
   %  \end{aligned}
  $$
\begin{defn}\label{SuperMinkowski}
  Given $d\in \mathbb{N}$ and $N$ a real $\mathrm{Spin}(d-1,1)$-representation
  (hence some direct sum of Majorana and Majorana-Weyl representations), the
  corresponding super-Minkowski super Lie algebra
  $$
    \mathbb{R}^{d-1,1\vert N}
      \;\;
      \in
      \mathrm{sLieAlg}_{\mathbb{R}}
  $$
  is the super Lie algebra defined by the fact that its Chevalley-Eilenberg algebra
  is the $(\mathbb{N}, \mathbb{Z}/2)$-bigraded differential-commutative differential algebra
  generated from elements $\{e^a\}_{a = 0}^{d-1}$ in bidegree $(1,\mathrm{even})$
  and from elements $\{\psi^\alpha\}_{\alpha = 1}^{\mathrm{dim}N}$ in bidegree $(1,\mathrm{odd})$
  with differential given by
  $$
    d\psi^\alpha = 0
    \;\;\;\,,
    \;\;\;
    d e^a = \overline{\psi} \wedge \Gamma^a \psi
    \,.
  $$
  Here on the right we use the spinor-to-vector bilinear pairing, regarded as a super 2-form, i.e.
  in terms of the charge conjugation matrix $C$ this is
  $$
    \overline{\psi} \wedge \Gamma^a \psi
      =
    (C \Gamma^a)_{\alpha \beta} \, \psi^\alpha \wedge \psi^\beta
    \,,
  $$
  where summation over repeated indices is understood. Notice that we do not include
  a factor of $\tfrac{1}{2}$ in the definition of $d e^a$.
\end{defn}

\begin{defn}
  \label{CliffordGeneratorsInDimensions9And10And11}
  Let $\{\gamma_a\}_{a = 0}^{d-1}$ be a Dirac representation on $\mathbb{C}^{16}$ of the Lorentzian $d= 9$
  Clifford algebra as above. We obtain a Dirac representation of the $d = 10$ and $d = 11$
  Clifford algebra by taking the following block matrices acting on $\mathbb{C}^{16} \oplus \mathbb{C}^{16}$
  $$
  \Gamma_{a \leq 8}
  :=
 \begin{pmatrix}
       0 & \gamma^a
       \\
       \gamma^a & 0
    \end{pmatrix}
  \;, \qquad
  \Gamma_9
  :=
    \begin{pmatrix}
      0 & \mathrm{I}
      \\
      -\mathrm{I} & 0
    \end{pmatrix}
  \;, \qquad
  \Gamma_{10}
  :=
    \begin{pmatrix}
      i \mathrm{I} & 0
      \\
      0 & -i \mathrm{I}
    \end{pmatrix}
  \,,
$$
where $I$ is the identity matrix.
\end{defn}

\begin{remark}
  \label{SpinRepsInDimensions11And10And9}
  The unique irreducible Majorana representation of $\mathrm{Spin}(10,1)$ is
  of real dimension 32.
  Under the inclusions
  $$
    \mathrm{Spin}(8,1) \hookrightarrow \mathrm{Spin}(9,1) \hookrightarrow \mathrm{Spin}(10,1)
  $$
  this representation branches as
  $$
    \mathbf{32} \mapsto \mathbf{16}\oplus \overline{\mathbf{16}} \mapsto \mathbf{16} \oplus \mathbf{16}
    \,,
  $$
  where in the middle $\mathbf{16}$ and $\overline{\mathbf{16}}$ are the left and right chiral Majorana-Weyl
  representations in 10d, while on the right the $\mathbf{16}$ is again the unique irreducible real
  representation in 9d.
 Under this branching we decompose a Majorana spinor $\psi \in \mathbf{32}$ as
  $$
    \psi
    =
      \begin{pmatrix}
        \psi_1
        \\
        \psi_2
      \end{pmatrix}
  $$
  with $\psi_1 \in \mathbf{16}$ and $\psi_2 \in \overline{\mathbf{16}}$ or $\mathbf{16}$.
\end{remark}
When we consider T-duality and S-duality below, the Clifford algebra generators will receive
various re-interpretations. To make this transparent we introduce the following notation.
\begin{defn}
  \label{IIBCliffordGenerators}
  Define another set of matrices $\{\Gamma_a^{\mathrm{IIB}}\}_{a = 0}^{9}$
  by
  $$
    \Gamma_a^{\mathrm{IIB}}
     :=
    \left\{
      \begin{array}{cl}
         \Gamma_a & \vert \; a \leq 8\;,
         \\
         \begin{pmatrix}
              0 & \mathrm{I}
              \\
              \mathrm{I} & 0
           \end{pmatrix}
         &
         \vert \; a = 9\;.
      \end{array}
    \right.
  $$
  For emphasis we write the original matrices also as $\Gamma_a^{\mathrm{IIA}} := \Gamma_a$, for $a \leq 9$.

  Moreover we also write
  $$
    \sigma_1 := \Gamma_9
    \;\,,\;\;\;\;\;
    \sigma_2 := -\Gamma_{9} \Gamma_{10}
    \;\,,\;\;\;\;\;
    \sigma_3 := \Gamma_{10}
    \,.
  $$
\end{defn}
\begin{remark}
  \label{The12GammaMatricesDoNotFormACliffordAlgebra}
  The matrices $\{\Gamma^{\mathrm{IIB}}_a\}_{a = 0}^9$ in Def. \ref{IIBCliffordGenerators} do not represent
  a Clifford algebra, but the product of any even number of them represents the correct such
  product acting on $\mathbf{16} \oplus \mathbf{16}$. For instance $\exp(\omega^{a b}\Gamma^{\mathrm{IIB}}_{a b})$
  are the elements of the $\mathrm{Spin}(d-1,1)$-representation on $\mathbf{16}\oplus \mathbf{16}$.
  Also, for \emph{odd} $p = 2k+1$, each of the pairings
  $$
    \overline{\psi}\Gamma^{\mathrm{IIB}}_{a_1 \cdots a_p}\psi
      =
    \psi^\dagger \Gamma^{\mathrm{IIB}}_0 \Gamma^{\mathrm{IIB}}_{a_1 \cdots a_p} \psi
  $$
  is the sum of the corresponding pairings on two copies of $\mathbf{16}$.
\end{remark}
\begin{remark}\label{SpinorToVectorPairingIIB}
 {\bf (i)} By Def. \ref{IIBCliffordGenerators} and Def. \ref{CliffordGeneratorsInDimensions9And10And11} we have
  $$
    \Gamma_9^{\mathrm{IIB}} = i \, \Gamma_9 \Gamma_{10} = \Gamma_9 \Gamma_{11}
  $$
  or, equivalently,
  $$
    \Gamma_9 = i \, \Gamma_9^{\mathrm{IIB}} \Gamma_{10}
    \,.
  $$
  This simple relation is crucial in the proof of T-duality in Theorem \ref{TheTDualityIsoOnLInfinityCocycles}.

  \item {\bf (ii)} This relation also makes it manifest that $\Gamma_9^{\mathrm{IIB}}$ commutes not only with all
  $\Gamma^{\mathrm{IIB}}_{a b}$ for $a,b \leq 8$, but also with all $\Gamma_a^{\mathrm{IIB}}\Gamma_9^{\mathrm{IIB}}$.
  Consequently, $\Gamma_{10}$ as well as $\Gamma_9$ are invariant under the
  IIB Spin-action, in that (with the notation in Def. \ref{IIBCliffordGenerators})
  $$
    \exp(-\omega^{a b}\Gamma^{\mathrm{IIB}}_{a b}) \; \sigma_i \; \exp(\omega^{a b} \Gamma^{\mathrm{IIB}}_{a b})
      =
    \sigma_i
  $$
  for $i \in \{1,2,3\}$.

  \item {\bf (iii)} Conversely, rotation in the $(9,10)$-plane leaves all the $\Gamma_a^{\mathrm{IIB}}$ invariant, in that
  $$
    \exp(- \tfrac{\alpha}{4} \Gamma_9 \Gamma_{10}) \,\Gamma_a^{\mathrm{IIB}}\, \exp(\tfrac{\alpha}{4} \Gamma_9 \Gamma_{10})
      \;=\;
    \Gamma_a^{\mathrm{IIB}}
    \,.
  $$
\end{remark}
\begin{defn}\label{def:SuperMinkowskiIn10dAnd9d}
  We write $\mathbb{R}^{10,1\vert \mathbf{32}}$ (in M-theory) and
  $\mathbb{R}^{9,1\vert \mathbf{16} + \overline{\mathbf{16}}}$ (in type IIA)
  and
  $\mathbb{R}^{9,1\vert \mathbf{16} + {\mathbf{16}}}$ (in type IIB)
  and
    $\mathbb{R}^{8,1\vert \mathbf{16} + {\mathbf{16}}}$ (in common 9d)
    for the super-Minkowski super Lie algebras (Def. \ref{SuperMinkowski})
    given by the Spin representations of Remark \ref{SpinRepsInDimensions11And10And9}.
\end{defn}

We now start with an observation
relating  the algebraic structures of super-Minkowski spacetimes (Def. \ref{SuperMinkowski}) in dimensions
 nine and ten.
\begin{prop}\label{IIBAsExtension}
  The bilinear spinor-to-vector pairings $\psi \mapsto (\overline{\psi} \Gamma^a \psi)$ in dimensions 11 and 10
  constitute 2-cocycles on the super-Minkowski super Lie algebras of one dimension
  lower (Def. \ref{def:SuperMinkowskiIn10dAnd9d}):
 % \begin{enumerate}
    \item {\bf (i)} $c_2^M := \overline{\psi} \wedge \Gamma_{10} \psi
      \;\;\;\;\in \mathrm{CE}(\mathbb{R}^{9,1\vert \mathbf{16} + \overline{\mathbf{16}}})$;
    \item {\bf (ii)} $c_2^{\mathrm{IIA}} :=
        \overline{\begin{pmatrix} \psi_1 \\ \psi_2\end{pmatrix}}
           \wedge \Gamma_9^{\mathrm{IIA}}
        \begin{pmatrix} \psi_1 \\ \psi_2\end{pmatrix}
        \;\;
        \in
        \mathrm{CE}(\mathbb{R}^{8,1\vert \mathbf{16}+ \mathbf{16}})
    $;
    \item {\bf (iii)} $c_2^{\mathrm{IIB}} :=
        \overline{\begin{pmatrix}\psi_1 \\ \psi_2 \end{pmatrix}}
           \wedge \Gamma_9^{\mathrm{IIB}}
        \begin{pmatrix}{\psi_1}\\{ \psi_2}\end{pmatrix}
        \;\;
        \in
        \mathrm{CE}(\mathbb{R}^{8,1\vert \mathbf{16}+ \mathbf{16}})
    $.
 % \end{enumerate}

\vspace{1mm}
  \noindent Moreover, these 2-cocycles classify consecutive central super Lie algebra extensions
  of super-Minkowski spacetime from 9d to 11d (Def. \ref{def:SuperMinkowskiIn10dAnd9d})
  in that we get the following diagram of super $L_\infty$-algebras
  $$
    \xymatrix@C=4em{
      && \mathbb{R}^{10,1\vert \mathbf{32}}
      \ar[d]^-{\pi_{10}}
      \\
      \mathbb{R}^{9,1\vert \mathbf{16} + \mathbf{16}}
      \ar[dr]^{\pi_9^{\mathrm{IIB}}}
      & &
      \mathbb{R}^{9,1\vert \mathbf{16}  + \overline{\mathbf{16}}}
      \ar[dl]_{\pi_9^{\mathrm{IIA}}}
      \ar[r]^-{c^M_2} & b \mathbb{R}
      \\
      & \mathbb{R}^{8,1 \vert \mathbf{16} + \mathbf{16}}
      \ar[dl]^{c_2^{\mathrm{IIB}}}
      \ar[dr]_{c_2^{\mathrm{IIA}}}
      \\
      b \mathbb{R} && b \mathbb{R}
    }
  $$
  where each ``hook''
  $$
    \xymatrix{
      \widehat{\mathfrak{g}}
      \ar[d]^-\pi
      \\
      \mathfrak{g}
      \ar[r]^{\omega_2}
      &
      b\mathbb{R}
    }
  $$
  corresponds to the central extension $\widehat{\mathfrak{g}}$ of $\mathfrak{g}$ classified by the
  2-cocycle $\omega_2$ (i.e. is a homotopy fiber sequence of super $L_\infty$-algebras, according
  to Example \ref{homotopyfiberofLinfinityCocycles}).
\end{prop}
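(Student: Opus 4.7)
The plan is to handle the two assertions separately, since both turn out to reduce to direct inspection of the Chevalley--Eilenberg algebras defined in Definition~\ref{SuperMinkowski}, once one knows the explicit Clifford matrices of Definitions~\ref{CliffordGeneratorsInDimensions9And10And11} and~\ref{IIBCliffordGenerators}.

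For the first assertion, I would observe that in the Chevalley--Eilenberg algebra of any super-Minkowski spacetime the fermionic generators satisfy $d\psi^{\alpha}=0$, so any quadratic expression in the $\psi^{\alpha}$ alone is automatically CE-closed. Each of $c_{2}^{M}$, $c_{2}^{\mathrm{IIA}}$, $c_{2}^{\mathrm{IIB}}$ is such a quadratic expression, hence an element of bidegree $(2,\mathrm{even})$ with vanishing CE-differential, i.e.\ a 2-cocycle with values in $b\mathbb{R}$ in the sense of Example~\ref{LineLienAlgebra}. The only non-trivial check is that the polynomial does not vanish identically when the wedge product is evaluated: since odd generators commute, $\psi^{\alpha}\wedge\psi^{\beta}=\psi^{\beta}\wedge\psi^{\alpha}$, this amounts to verifying that the bilinear forms $(C\Gamma_{10})_{\alpha\beta}$, $(C\Gamma_{9}^{\mathrm{IIA}})_{\alpha\beta}$ and $(C\Gamma_{9}^{\mathrm{IIB}})_{\alpha\beta}$ have nonzero symmetric part, which is a standard Clifford-algebra computation using the self-conjugacy properties of the Clifford matrices spelled out after Definition~\ref{basicconventions}.

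For the second assertion, the identification of the extensions classified by these cocycles with the claimed higher-dimensional super-Minkowski algebras follows from Example~\ref{homotopyfiberofLinfinityCocycles}. Explicitly, the extension classified by $c_{2}^{M}$ is obtained from $\mathrm{CE}(\mathbb{R}^{9,1|\mathbf{16}+\overline{\mathbf{16}}})$ by adjoining a single new generator $e^{10}$ in bidegree $(1,\mathrm{even})$ with $d e^{10}=\overline{\psi}\,\Gamma_{10}\,\psi$. The resulting CE-algebra has generators $e^{0},\ldots,e^{10},\psi^{\alpha}$ and differentials $d e^{a}=\overline{\psi}\,\Gamma_{a}\,\psi$ for all $a=0,\ldots,10$; but by Definition~\ref{CliffordGeneratorsInDimensions9And10And11} the $\{\Gamma_{a}\}_{a=0}^{10}$ are precisely the 11d Clifford matrices on $\mathbf{32}\simeq\mathbf{16}\oplus\mathbf{16}$, so the result is $\mathrm{CE}(\mathbb{R}^{10,1|\mathbf{32}})$ as needed. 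The type IIA case is strictly analogous: adjoining $e^{9}$ to $\mathrm{CE}(\mathbb{R}^{8,1|\mathbf{16}+\mathbf{16}})$ with differential $d e^{9}=\overline{\psi}\,\Gamma_{9}^{\mathrm{IIA}}\,\psi$ produces the IIA algebra on $\mathbf{16}\oplus\overline{\mathbf{16}}$ by exactly the same matrix identification. The commutativity of the diagram and the hook property then follow immediately, as the dual projections $\pi_{10},\pi_{9}^{\mathrm{IIA}},\pi_{9}^{\mathrm{IIB}}$ are by construction the inclusions of CE-generators that forget the new bosonic generator.

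The main obstacle, and the only step requiring genuine care, is the type IIB case. Here the matrices $\{\Gamma_{a}^{\mathrm{IIB}}\}_{a=0}^{9}$ do \emph{not} form a Clifford algebra (Remark~\ref{The12GammaMatricesDoNotFormACliffordAlgebra}), so it is not immediate that adjoining $e^{9}$ with $d e^{9}=\overline{\psi}\,\Gamma_{9}^{\mathrm{IIB}}\,\psi$ produces a super-Minkowski algebra in the sense of Definition~\ref{SuperMinkowski}. What is true, however, and what is sufficient, is that the pairing $\overline{\psi}\,\Gamma_{9}^{\mathrm{IIB}}\,\psi$ is symmetric in $\psi$ (so defines a genuine bidegree-$(2,\mathrm{even})$ element) and is $\mathrm{Spin}(9,1)$-equivariant for the IIB action on $\mathbf{16}\oplus\mathbf{16}$; the latter is exactly the content of part~(ii) of Remark~\ref{SpinorToVectorPairingIIB}, which shows that $\Gamma_{9}^{\mathrm{IIB}}=\Gamma_{9}\Gamma_{11}$ is invariant under the IIB Spin action. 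Hence the extended CE-algebra is, by Definition~\ref{def:SuperMinkowskiIn10dAnd9d}, precisely $\mathrm{CE}(\mathbb{R}^{9,1|\mathbf{16}+\mathbf{16}})$, completing the verification of the diagram.
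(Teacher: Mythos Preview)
Your proof is correct and follows essentially the same approach as the paper: both first observe that closedness is automatic from $d\psi^\alpha=0$, reduce the cocycle check to the symmetry of the Majorana pairing $(C\Gamma^a)_{\alpha\beta}$, and then identify each extension by adjoining the new bosonic generator $e^9$ (resp.\ $e^{10}$) and matching the resulting CE-algebra against Definition~\ref{SuperMinkowski}. The paper dispatches all three cases uniformly (``the other two cases are directly analogous''), whereas you add a useful paragraph explaining why the IIB case is not actually more subtle despite the $\Gamma_a^{\mathrm{IIB}}$ failing the Clifford relation; this is a worthwhile clarification that the paper leaves implicit in Remark~\ref{The12GammaMatricesDoNotFormACliffordAlgebra}.
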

\begin{proof}
To see that the given 2-forms are indeed cocycles: they are trivially closed (by Def. \ref{SuperMinkowski}),
and so all that matters is that we have a well-defined super-2-form in the first place.
Since the $\psi^\alpha$ are in bidegree $(1,\mathrm{odd})$, they all commute with each other. and hence
the condition is that the spinor-to-vector pairing is symmetric. This is the case for Majorana spinors.
(This is a simple but deep fact, highlighted before in \cite[(2.4)]{CAIB00}, \cite[Prop. 4.5]{FSS13}).

Now we consider the extensions. Notice that for $\mathfrak{g}$ any super Lie algebra (of finite dimension),
and for $\omega \in \wedge^2\mathfrak{g}^\ast$ a Lie algebra 2-cocycle on it,  the Lie algebra extension
$\widehat{\mathfrak{g}}$ that this classifies is neatly characterized in terms of its dual Chevalley-Eilenberg
algebra; it is simply the original CE algebra with one new generator $e$ in degree $(1, {\rm even})$
adjoined, and with the differential of $e$ taken to be $\omega$:
$$
  \mathrm{CE}(\widehat{\mathfrak{g}})
   =
  \big((\mathrm{CE}(\mathfrak{g}) \otimes \langle e\rangle), ~d e = \omega\big)
  \,.
$$
Hence in the case of $\omega = c_2^{\mathrm{IIA}}$ we identify the new generator with $e^9$.
Furthermore, we  see
from Prop. \ref{SpinorToVectorPairingIIB} that the equation $d e^9 = c_2^{\mathrm{IIA}}$ is
precisely what distinguishes the CE-algebra of $\mathbb{R}^{8,1\vert \mathbf{16}+ \mathbf{16}}$
from that of $\mathbb{R}^{9,1\vert \mathbf{16} + \overline{\mathbf{16}}}$. This follows
by Def. \ref{SuperMinkowski}, Def. \ref{CliffordGeneratorsInDimensions9And10And11}, and
using the fact that the spinors all have the same underlying representation space by
Remark \ref{SpinRepsInDimensions11And10And9}.

The other two cases are directly analogous.
\end{proof}

%%%%%%%%%%%%%%%%%%%%%
\section{Double dimensional reduction}
\label{SectionDoubleDimensionalReduction}
%%%%%%%%%%%%%%%%%

In this section we formalize the concept of
\emph{double dimensional reduction} of branes on a circle bundle
(originally due to \cite{DuffHoweInamiStelle87}) as an equivalence between
higher homotopy algebra structures (theorem \ref{ddReduction} below), which is at the same time
a formalization and a generalization of earlier treatments (as for instance \cite{FF2}).
This may be iterated to deal with reducing on successive circles, i.e. on tori. The free and cyclic
loop space construction will play a major role in this formalization (see remark \ref{rem:SullivanModelForFreeLoopSpace} below),
which in terms of $L_\infty$-cohomolgy
translates into passage to Hochschild and cyclic $L_\infty$-homology (see remark \ref{JonesTheorem} below).
We close this section below in remark \ref{GeometricInterpretationOfdd} with spelling out how the
$L_\infty$-theoretic statement of theorem \ref{ddReduction} indeed reflects the informal idea of
double dimensional reduction.

Here we make use of one of the main result of \emph{rational homotopy theory} (see e.g. \cite{Hess06} for review); namely, that
the rational homotopy type of a
connected topological space $X$
with nilpotent fundamental group is completely and faithfully encoded into a suitable $L_\infty$-algebra $\mathfrak{l}(X)$, which one may think of as being the $\infinity$-Lie algebra of the loop group $\Omega X$ (itself an ``$\infty$-group'', namely a grouplike $A_\infty$-space).
The \emph{Sullivan model} dg-algebras of rational homotopy theory \cite{Sullivan77}, are nothing but the
CE-algebras of (particularly good representatives) of connective $L_\infty$-algebras:

\begin{center}
\begin{tabular}{|c|c|c|c|}
  \hline
  topological space & loop $\infty$-group &  $L_\infty$-algebra & Sullivan model
  \\
  \hline
  $X$ & $\Omega X$ &  $\mathfrak{l}(X)$ & $\mathrm{CE}(\mathfrak{l}X)$
  \\
  \hline
\end{tabular}
\end{center}
The observation that traditional rational homotopy theory sits inside the homotopy theory of $L_\infty$-algebras
was implicit alredy in the original \cite{Quillen69}, but was made fully explicit only in \cite{Hinich},
on which the model \cite{pridham} is based, which we use in Proposition \ref{modelstructureOnLInfinityAlgebra}.
A review of rational homotopy theory from this modern perspective is in \cite[section 2]{BuijsFelixMurillo12}. For more
exposition in our context see also \cite[appendix A]{FSS16}.

\medskip

We recall the following fact due to Vigu{\'e} et al.:
\begin{remark}[Sullivan model for free and cyclic loop spaces]\label{rem:SullivanModelForFreeLoopSpace}
  The {\it cyclic loop space} of a topological space $X$ is the homotopy quotient $\mathcal{L}X / S^1$, where
  $\mathcal{L}X := \mathrm{Maps}(S^1,X)$ denotes the free loop space of $X$ and the $S^1$-action is given by rotation of loops.
If $X$ is simply connected, and   $( \wedge^\bullet V, d_{{}_X})$ is a
 minimal Sullivan model for the rationalization of $X$, then
   a Sullivan model for the rationalization of the free loop space $\mathcal{L}X$ of $X$ is given by \cite{VigueSullivan}
      $$
        \mathrm{CE}(\mathfrak{l}(\mathcal{L}X)) =
        \big(\wedge^\bullet ( V \oplus s V ),~ d_{{}_{\mathcal{L}X}}\big)
        \,,
      $$
      where $s V$ is $V$ with degrees shifted down by one, and with $d_{\mathcal{L}X}$
      acting for $v \in V$ as
      $
        d_{{}_{\mathcal{L}X}} \, v = d_{{}_X}  v$, $
        d_{{}_{\mathcal{L}X}}  \, s v = - s  \, d_{{}_X}  v
        $
      where on the right $s \colon V \to s V$ is extended uniquely as a graded derivation.
    A Sullivan model for the rationalization of the cyclic space $\mathcal{L}X / S^1$
%     (presented by the Borel construction $\mathcal{L}X \times_{S^1} E S^1$) for the canonical
%     circle group action on the free loop space (by rotation of loops)
     is given by \cite{VigueBurghelea}
     $$
       \mathrm{CE}(\mathfrak{l}( \mathcal{L}X/S^1 ))
       =
       \Big( \wedge^\bullet\big( V \oplus s V \oplus \langle \omega_2\rangle\big), ~d_{{}_{\mathcal{L}X/S^1}} \Big)
     $$
     with
     $
       d_{{}_{\mathcal{L}X/S^1}}  \omega_2 = 0
     $
     and with $d_{{}_{\mathcal{L}X/S^1}}$ acting on $w \in \wedge^1 V \oplus sV $ as
     $
       d_{{}_{\mathcal{L}X/S^1}} w = d_{{}_{\mathcal{L}X}}  w + \omega_2 \wedge s w
       $.
     Moreover, the canonical sequence of $L_\infty$-homomorphisms
     $$
         \mathfrak{l}(\mathcal{L}X)
           \longrightarrow
          \mathfrak{l}(\mathcal{L}X/S^1)
           \longrightarrow
          b \mathbb{R}
     $$
     is a rational model for the homotopy fiber sequence
     $
         \mathcal{L}X
     \to
         \mathcal{L}X/S^1
     \to
         B S^1
     $
     that exhibits the homotopy quotient.
\end{remark}
\begin{remark}[Jones' theorem]\label{JonesTheorem}
  The Sullivan models for free and cyclic loop spaces as in remark \ref{rem:SullivanModelForFreeLoopSpace} will appear
  (in a super $L_\infty$-theoretic generalization to follow in Definition \ref{def:cyclification})
  below in Theorem \ref{ddReduction} purely as part of an $L_\infty$-theoretic incarnation of the
  concept of double dimensional reduction. All the more does it seem interesting to briefly recall the close relation of
  these dg-algebras to Hochschild homology and to cyclic homology. This is known as \emph{Jones' theorem}
  \cite{Jones87} (see \cite{Loday11} for expository review):
Let $X$ be a simply connected topological space. Then the ordinary cohomology $H^\bullet(-)$ of its free loop space
  is isomorphic to the Hochschild homology $\mathrm{HH}_\bullet$ of its dg-algbra $C^\bullet(X)$ of singular cochains:
  $$
    H^\bullet(\mathcal{L}X) \simeq \mathrm{HH}_\bullet(C^\bullet(X))
    \,.
  $$
  Similarly, the $S^1$-equivariant cohomology of the free loop space, hence the cohomology of the cyclic
  loop space, is isomorphic to the cyclic homology $\mathrm{HC}_\bullet(-)$ of the cochains:
  $$
    H^\bullet(\mathcal{L}X/S^1)
      \simeq
    \mathrm{HC}_\bullet(C^\bullet(X))
    \,.
  $$
  Specifically if $X$ happens to carry the structure of a smooth manifold, then the dg-algebra
  of singular chains with real coefficients is
  quasi-isomorphic to the de-Rham dg-algebra of smooth differential forms $\Omega^\bullet(X)$, and hence in this case
  $$
    H^\bullet(\mathcal{L}X,\mathbb{R})
      \simeq
    \mathrm{HH}_\bullet(\Omega^\bullet(X))
 \qquad
\text{and}
\qquad
    H^\bullet(\mathcal{L}X/S^1, \mathbb{R})
      \simeq
    \mathrm{HC}_\bullet(\Omega^\bullet(X))
    \,.
  $$
  Therefore in this case Remark \ref{rem:SullivanModelForFreeLoopSpace} together with the central
  theorem of Quillen-Sullivan rational homotopy says that the cochain cohomology
  of the Chevalley-Eilenberg complexes of the $L_\infty$-algebras $\mathfrak{l}(\mathcal{L}X)$
  and $\mathfrak{l}(\mathcal{L}X/S^1)$ coincide with the Hochschild homology and the
  cyclic homology, respectively, of the CE-algebra of the $L_\infty$-algebra $\mathfrak{l}(X)$:
  $$
    H^\bullet( \mathrm{CE}( \mathfrak{l}(\, \mathcal{L}X\, ) ) )
      \simeq
    \mathrm{HH}_\bullet( \mathrm{CE}( \mathfrak{l}(\, X \,) )   )
  \qquad
\text{  and}
\qquad
    H^\bullet( \mathrm{CE}( \mathfrak{l}(\, \mathcal{L}X/S^1\, ) ) )
      \simeq
    \mathrm{HC}_\bullet( \mathrm{CE}( \mathfrak{l}(\, X \,) )   )
    \,.
  $$
\end{remark}
Remark \ref{rem:SullivanModelForFreeLoopSpace} motivates the following generalization:
\begin{defn}\label{def:cyclification}
Let $\mathfrak{h}$ be a super $L_\infty$-algebra.
\begin{enumerate}
\item
The \emph{free loop algebra} of $\mathfrak{h}$
is the super $L_\infty$-algebra defined by
\[
        \mathrm{CE}(\mathfrak{L}\mathfrak{h}) =
        \big(\wedge^\bullet ( \mathfrak{h}^\ast \oplus s \mathfrak{h}^\ast ),~
        d_{\mathfrak{Lh}}\big)
        \,,
\]
where $s \mathfrak{h}^\ast$ is a copy of $\mathfrak{h}^\ast$ with bi-degrees shifted by $(-1,\mathrm{even})$,
with $d_{\mathfrak{Lh}}$
acting as
$$
  \begin{aligned}
  d_{\mathfrak{Lh}}  v & = d_{\mathfrak{h}}  v
  \\
  d_{\mathfrak{Lh}}  s v & = - s d_{\mathfrak{h}}  v
  \end{aligned}
$$
for  all $v \in \mathfrak{h}^\ast$,
where  $s$ is the graded derivation of degree $(-1,\mathrm{even})$ which sends the original generators $v \in \mathfrak{h}^\ast$
to their shifted image $s v \in s \mathfrak{h}^\ast$.

\item
The \emph{cyclification} of $\mathfrak{h}$ is the $L_\infty$-algebra $\mathfrak{L}\mathfrak{h}/\mathbb{R}$ defined by
\[
\mathrm{CE}(\mathfrak{L}\mathfrak{h}/\mathbb{R})= \big( \wedge^\bullet( \mathfrak{h}^\ast \oplus s \mathfrak{h}^\ast \oplus \langle \omega_2\rangle),~ d_{\mathfrak{L}\mathfrak{h}/\mathbb{R}} \big)
\]
     with
$$
       d_{\mathfrak{L}\mathfrak{h}/\mathbb{R}} \; \omega_2 = 0
$$
     and with $d_{\mathfrak{L}\mathfrak{h}/\mathbb{R}}$
     acting on $w \in  \mathfrak{h} \oplus s\mathfrak{h} $ as
$$
       d_{\mathfrak{L}\mathfrak{h}/\mathbb{R}} \, w
         =
       d_{\mathfrak{L}\mathfrak{h}} \, w + \omega_2 \wedge s w
       \,.
$$
\end{enumerate}
\end{defn}
\begin{lemma}\label{FunctorialLoopAlgebra}
  The free loop algebra construction from Definition \ref{def:cyclification}
  extends to a functor on the category of super $L_\infty$-algebras
  $$
    \mathfrak{L}(-)
    \;\colon\;
    sL_\infty\mathrm{Alg}_{\mathbb{R}}
      \longrightarrow
    sL_\infty\mathrm{Alg}_{\mathbb{R}}
  $$
  by taking a super $L_\infty$-homomorphism $f \colon \mathfrak{h}_1 \to \mathfrak{h}_2$
  to the homomorphism $\mathfrak{L}f$ whose dual $(\mathfrak{L}f)^\ast$ is given on generators
  $v \in \wedge^1 \mathfrak{h}_2^\ast$ by
  $$
    \begin{aligned}
      (\mathfrak{L}f)^\ast(v) & = f^\ast(v)
      \\
      (\mathfrak{L}f)^\ast(s_2v) & = s_1 f^\ast(v)
      \,.
    \end{aligned}
  $$
Similarly the cyclification operation from Definition \ref{def:cyclification}
  extends to a functor
  $$
    \mathfrak{L}(-)/\mathbb{R}
    \;\colon\;
    sL_\infty\mathrm{Alg}_{\mathbb{R}}
      \longrightarrow
    sL_\infty\mathrm{Alg}_{\mathbb{R}}
  $$
  by taking a homomorphism $f$ to the homomorphism $\mathfrak{L}f/\mathbb{R}$
  whose dual is given on $v$ and $s v$ by $(\mathfrak{L}f)^\ast$ as before, and which sends the copy of the generator $\omega_2$
  in $\mathrm{CE}(\mathfrak{L}\mathfrak{h}_2/\mathbb{R})$ to the generator of the same name in
  $\mathrm{CE}(\mathfrak{L}\mathfrak{h}_1/\mathbb{R})$.
\end{lemma}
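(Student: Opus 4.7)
The strategy is entirely formal: since every super $L_\infty$-algebra under consideration has its CE-algebra freely generated (as a graded super-commutative algebra) by its generators, any graded algebra homomorphism out of $\mathrm{CE}(\mathfrak{L}\mathfrak{h}_2)$ (resp.\ $\mathrm{CE}(\mathfrak{L}\mathfrak{h}_2/\mathbb{R})$) is uniquely determined by its values on generators, and the question reduces to (i) showing that the prescribed values on generators extend to a map that intertwines the differentials, and (ii) verifying strict functoriality on generators.

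The key auxiliary identity to establish first is
\[
  (\mathfrak{L}f)^{\ast} \circ s_2 \;=\; s_1 \circ f^{\ast}
  \quad\text{as maps }\wedge^{\bullet}\mathfrak{h}_2^{\ast}\longrightarrow \wedge^{\bullet}\bigl(\mathfrak{h}_1^{\ast}\oplus s\mathfrak{h}_1^{\ast}\bigr).
\]
It holds on linear generators $v\in\mathfrak{h}_2^{\ast}$ by the very definition $(\mathfrak{L}f)^{\ast}(s_2 v)=s_1 f^{\ast}(v)$, and it extends to all of $\wedge^{\bullet}\mathfrak{h}_2^{\ast}$ by the argument that evaluating either side on a wedge product $v_1\wedge\cdots\wedge v_k$ first splits the product via the derivation $s$ (with signs depending only on the total bidegrees of the $v_j$, which are preserved by $f^{\ast}$), and then applies the algebra map; both orderings give the same sum. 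With this identity in hand, compatibility with the differential on $\mathrm{CE}(\mathfrak{L}\mathfrak{h}_2)$ is immediate: on $v\in\mathfrak{h}_2^{\ast}$ it reduces to the assumption that $f$ is an $L_\infty$-homomorphism (so $f^{\ast}d_{\mathfrak{h}_2}=d_{\mathfrak{h}_1}f^{\ast}$), and on $s_2 v$ one computes
\[
  (\mathfrak{L}f)^{\ast}d_{\mathfrak{L}\mathfrak{h}_2}(s_2 v)
  = -(\mathfrak{L}f)^{\ast}s_2 d_{\mathfrak{h}_2}v
  = -s_1 f^{\ast}d_{\mathfrak{h}_2}v
  = -s_1 d_{\mathfrak{h}_1}f^{\ast}v
  = d_{\mathfrak{L}\mathfrak{h}_1}(\mathfrak{L}f)^{\ast}(s_2 v).
\]

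For the cyclification, one further declares $(\mathfrak{L}f/\mathbb{R})^{\ast}(\omega_2^{(2)})=\omega_2^{(1)}$. Since both generators are $d$-closed, compatibility on the $\omega_2$ generator is automatic. On $v$ and $s_2 v$, the differential is the previous one perturbed by the derivation $\omega_2\wedge s(-)$, and the extra summand is intertwined precisely by the identity $(\mathfrak{L}f)^{\ast}\circ s_2=s_1\circ f^{\ast}$ together with $(\mathfrak{L}f/\mathbb{R})^{\ast}(\omega_2^{(2)})=\omega_2^{(1)}$; using $s^{2}=0$ (which follows from $s$ vanishing on $s\mathfrak{h}^{\ast}$), the check on $s_2 v$ does not produce any new terms.

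Finally, strict functoriality $\mathfrak{L}(g\circ f)=\mathfrak{L}(g)\circ \mathfrak{L}(f)$ and $\mathfrak{L}(\mathrm{id})=\mathrm{id}$ (and likewise for the cyclified version) need only be verified on the generators $v$, $sv$ and $\omega_2$, where it is tautological from the definitions. The only subtle place in the whole argument is the auxiliary intertwining identity, which is really a consequence of the fact that both $s$ and the canonical projection to $\wedge^{\bullet}\mathfrak{h}^{\ast}$ are natural in $\mathfrak{h}$; once that is granted, everything else is bookkeeping.
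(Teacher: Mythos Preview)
Your proof is correct and follows essentially the same approach as the paper: both hinge on the intertwining identity $(\mathfrak{L}f)^\ast \circ s_2 = s_1 \circ (\mathfrak{L}f)^\ast$ (equivalently $s_1 \circ f^\ast$ on the unshifted subalgebra), proved by extending from generators via the derivation property, then use it to verify compatibility with the differentials on $sv$ and on the $\omega_2$-perturbed cyclified differential. You are slightly more thorough in explicitly noting the identity/composition check for functoriality, which the paper leaves implicit.
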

\begin{proof}
  To see that $\mathfrak{L}(-)$ is functorial on homomorphisms of graded algebras
  it is sufficient to observe that the relation $(\mathfrak{L}f)^\ast(s_2 v) = s_1 (\mathfrak{L}f)^\ast(v)$
  on generators implies that $(\mathfrak{L}f)^\ast$ commutes with $s$ generally. For instance on binary wedge products
  of generators we get
  $$
    \begin{aligned}
      (\mathfrak{L}f)^\ast( s(v_1 \wedge v_2) )
      & =
      (\mathfrak{L}f)^\ast( (s v_1) \wedge v_2 + (-1)^{\mathrm{deg}(v_1+1)} (s v_2)    )
      \\
      & =
      (\mathfrak{L}f)^\ast(s v_1) \wedge (\mathfrak{L}f)^\ast(v_2)
      +
      (-1)^{\mathrm{deg}(v_1)+1}
      (\mathfrak{L}f)^\ast(v_1) \wedge (\mathfrak{L}f)^\ast(s v_2)
      \\
      & =
      s((\mathfrak{L}f)^\ast(v_1)) \wedge (\mathfrak{L}f)^\ast(v_2)
      +
      (-1)^{\mathrm{deg}((\mathfrak{L}f)^\ast(v_1))+1}
      (\mathfrak{L}f)^\ast(v_1) \wedge s((\mathfrak{L}f)^\ast(v_2))
      \\
      & =
      s (\mathfrak{L}f)^\ast( v_1 \wedge v_2 )
      \,.
    \end{aligned}
  $$

  It remains to see that $(\mathfrak{L}f)^\ast$ respects the differential. On the unshifted generators $v$
  this is so because  $f^\ast$ does respect the differential. For shifted generators it
  follows by this computation:
  $$
    \begin{aligned}
    (\mathfrak{L}f)^\ast(d_{\mathfrak{L}\mathfrak{h}_2} s_2 v)
    & =
    -(\mathfrak{L}f)^\ast(s_2 d_{\mathfrak{L}\mathfrak{h}_2} v)
    \\
    & =
    - s_1 (\mathfrak{L}f)^\ast(d_{\mathfrak{h}_2} v)
    \\
    & =
    - s_1 f^\ast ( d_{\mathfrak{h}_2} v )
    \\
    & =
    - s_1   d_{\mathfrak{h}_1} f^\ast (v)
    \\
    & =
    - s_1  d_{\mathfrak{L}\mathfrak{h}_1} f^\ast(v)
    \\
    & =
    d_{\mathfrak{L}\mathfrak{h}_1} s_1 (\mathfrak{L}f)^\ast(v)
    \\
    & =
    d_{\mathfrak{L}\mathfrak{h}_1}  (\mathfrak{L}f)^\ast (s_2 v)
    \end{aligned}
  $$

  Finally, that also $(\mathfrak{L}f/\mathbb{R})^\ast$ respects the differential
  follows by this computation:
  $$
    \begin{aligned}
    (\mathfrak{L}f/\mathbb{R})^\ast(d_{\mathfrak{L}\mathfrak{h}_2/\mathbb{R}} w)
    &=
    (\mathfrak{L}f/\mathbb{R})^\ast(d_{\mathfrak{L}\mathfrak{h}_2} w + \omega_2 \wedge s w )
    \\
    & =
    (\mathfrak{L}f)^\ast(d_{\mathfrak{L}\mathfrak{h}_2} w)
    +
    (\mathfrak{L}f/\mathbb{R})^\ast(\omega_2 \wedge s w)
    \\
    & =
    d_{\mathfrak{L}\mathfrak{h}_2} (\mathfrak{L}f)^\ast(w)
    +
    \omega_2\wedge (\mathfrak{L}f)^\ast(s w)
    \\
    &=
    d_{\mathfrak{L}\mathfrak{h}_2} (\mathfrak{L}f)^\ast(w)
    +
    \omega_2\wedge s (\mathfrak{L}f)^\ast(w)
    \,.
    \end{aligned}
  $$
  where we used the previous statements about $(\mathfrak{L}f)^\ast$.
\end{proof}

The two super $L_\infty$-algebras from Definition \ref{def:cyclification} are related as follows.

\begin{prop}\label{def:FreeLoopQuotientFiberSequence}
For any super $L_\infty$-algebra $\mathfrak{h}$, its free loop algebra and its cyclification
(Def. \ref{def:cyclification}) sit in a homotopy fiber sequence of the form
$$
  \xymatrix{
    \mathfrak{Lh}
      \ar[r]
    &
    \mathfrak{Lh}/\mathbb{R}
      \ar[r]^-{\omega_2}
      &
    b\mathbb{R}
  }\;,
$$
where $\omega_2$ is the 2-cocycle given by the CE-element of the same name in Def. \ref{def:cyclification}.
\end{prop}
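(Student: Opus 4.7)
The strategy is to show that the second map is a fibration in the model structure of Proposition \ref{modelstructureOnLInfinityAlgebra}, so that in this category of fibrant objects the homotopy fiber is computed by the strict fiber, and then to identify that strict fiber on the nose with $\mathfrak{Lh}$.

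First I would make explicit the morphism $\omega_2 \colon \mathfrak{Lh}/\mathbb{R} \to b\mathbb{R}$: dually, it is the DG-algebra homomorphism $\mathrm{CE}(b\mathbb{R}) \to \mathrm{CE}(\mathfrak{Lh}/\mathbb{R})$ which sends the single (closed, degree $2$) generator of $\mathrm{CE}(b\mathbb{R})$ to the element named $\omega_2$ in Definition \ref{def:cyclification}. This is well-defined because $d_{\mathfrak{Lh}/\mathbb{R}}\omega_2 = 0$. Since $\omega_2$ is one of the adjoined generators of $\mathrm{CE}(\mathfrak{Lh}/\mathbb{R})$ and not a polynomial in the others, this DG-algebra map is injective on the graded linear subspace spanned by generators. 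By the characterization of fibrations recalled in Proposition \ref{modelstructureOnLInfinityAlgebra}, this means that $\omega_2 \colon \mathfrak{Lh}/\mathbb{R} \to b\mathbb{R}$ is indeed a fibration of super $L_\infty$-algebras.

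Next I would compute the strict fiber. By the embedding $\mathrm{CE} \colon \mathrm{sL}_\infty\mathrm{Alg}_{\mathbb{R}} \hookrightarrow \mathrm{dgAlg}_{/\mathbb{R}}^{\mathrm{op}}$, the strict fiber of $\omega_2$ corresponds to the pushout of DG-algebras
\[
  \mathrm{CE}(\mathfrak{Lh}/\mathbb{R}) \;\longleftarrow\; \mathrm{CE}(b\mathbb{R}) \;\longrightarrow\; \mathbb{R}
\]
along the augmentation on the right. Concretely, the pushout is obtained from $\mathrm{CE}(\mathfrak{Lh}/\mathbb{R})$ by setting $\omega_2 = 0$. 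Inspecting Definition \ref{def:cyclification}, this collapses the underlying graded algebra to the free graded-commutative algebra on $\mathfrak{h}^\ast \oplus s\mathfrak{h}^\ast$, and collapses the differential $d_{\mathfrak{Lh}/\mathbb{R}}w = d_{\mathfrak{Lh}}w + \omega_2 \wedge sw$ to just $d_{\mathfrak{Lh}}w$. This is precisely $\mathrm{CE}(\mathfrak{Lh})$, and under this identification the canonical map from the pushout dualizes to the inclusion $\mathfrak{Lh} \hookrightarrow \mathfrak{Lh}/\mathbb{R}$ sending $v \mapsto v$, $sv \mapsto sv$.

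Finally, because $\omega_2$ has been exhibited as a fibration and $b\mathbb{R}$ is fibrant, the general theory of Brown categories of fibrant objects ensures that the strict fiber we just computed is also the homotopy fiber. This yields the claimed homotopy fiber sequence $\mathfrak{Lh} \to \mathfrak{Lh}/\mathbb{R} \xrightarrow{\omega_2} b\mathbb{R}$. The only genuine content of the argument is the observation of the previous paragraph that setting $\omega_2 = 0$ in the cyclification recovers exactly the free loop algebra; everything else is formal from the model structure.
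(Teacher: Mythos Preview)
Your proof is correct and follows essentially the same approach as the paper: show that $\omega_2$ is a fibration (via injectivity on generators dually), identify the strict fiber with $\mathfrak{Lh}$ by setting $\omega_2 = 0$, and conclude that this models the homotopy fiber. The paper's proof is just a terser version of what you have written out in detail.
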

\begin{proof}
  The second morphism is a
  fibration (according to Proposition \ref{modelstructureOnLInfinityAlgebra}),
  by the surjection evident from their definition. Moreover, from the form of the
  differentials, the first morphism
  is directly seen to be the ordinary fiber of the second.
  Hence it models the homotopy fiber.
\end{proof}

The key observation now, noticed in a particular case in \cite[Prop. 3.8]{FSS16}, is that the cyclification
of coefficients is what formalizes ``double dimensional reduction'' of branes \cite{DuffHoweInamiStelle87}, i.e. the
joint process of reducing spacetime dimension and reducing in parallel the dimension
of branes in this spacetime, or dually, the cohomological degrees of their charges.
Or rather: those branes/charges that ``wrap'' the dimension being reduced are to
reduce in parallel, while those that do not wrap should just descend.
We formalize this by Theorem \ref{ddReduction} below, see remark \ref{GeometricInterpretationOfdd} further below
for discussion of the physical interpretation.
To state the formalization, we first need to make explicit the following basic construction:

\begin{defn}\label{SliceCategory}
  Given any super $L_\infty$-algebra $\mathfrak{b} \in sL_\infty\mathrm{Alg}_{\mathbb{R}}$,
  then the \emph{slice over $\mathfrak{g}$} is the category
  $(sL_\infty \mathrm{Alg}_{\mathbb{R}})_{/\mathfrak{b}}$
  whose objects are super $L_\infty$-homomorphisms
  $\mathfrak{g} \stackrel{\phi}{\to} \mathfrak{b}$
  into $\mathfrak{b}$, and whose morphisms are super $L_\infty$-homomorphisms $\mathfrak{g}_1 \to \mathfrak{g}_2$
  that respect the morphisms down to $\mathfrak{b}$ in that they make the diagrams shown on the
  right commute:
  $$
    \mathrm{Hom}_{/\mathfrak{b}}\left( (\mathfrak{g}_1,\phi_1), (\mathfrak{g}_2, \phi_2) \right)
    \;:=\;
    \left\{
      \raisebox{20pt}{
      \xymatrix{
        \mathfrak{g}_1 \ar@{-->}[rr] \ar[dr]_{\phi_1} && \mathfrak{g}_2 \ar[dl]^{\phi_2}
        \\
        & \mathfrak{b}
      }
      }
    \right\}\;.
  $$
\end{defn}
\begin{example}\label{FunctorialHofib}
  Consider the operation that sends any super $L_\infty$-homomorphism of the form
  $$
    \mu_{p+2} \colon \mathfrak{g} \longrightarrow b^{p+1}\mathbb{R}
  $$
  (i.e. a super $L_\infty$-cocycle, according to Example \ref{LineLienAlgebra}) to
  its homotopy fiber $\widehat{ \mathfrak{g}}$ represented via Example
   \ref{homotopyfiberofLinfinityCocycles}
  $$
    \mathrm{CE}\left(
      \widehat{\mathfrak{g}}
    \right)
    :=
    \mathrm{CE}(\mathfrak{g})[b_{p+1}]/(d b_{p+1} = \mu_{p+2})\;.
  $$
This operation  extends to a functor on the slice category (Def. \ref{SliceCategory})
over $b^{p+1}\mathbb{R}$
  $$
    \mathrm{hofib}
    \;\colon\;
    (sL_\infty \mathrm{Alg}_{\mathbb{R}})_{/b^{p+1}\mathbb{R}}
    \longrightarrow
    s L_\infty \mathrm{Alg}_{\mathbb{R}}\;,
  $$
  by taking any super $L_\infty$-homomorphism $f$ in
  $$
      \raisebox{20pt}{
      \xymatrix{
        \mathfrak{g}_1 \ar[rr]^f \ar[dr]_{\mu^1_{p+2}} && \mathfrak{g}_2 \ar[dl]^{\mu^2_{p+2}}
        \\
        & b^{p+1}\mathbb{R}
      }
      }
  $$
  to the homomorphism $\mathrm{hofib}(f)$ whose dual $(\mathrm{hofib}(f))^\ast$ is given on
  $\mathrm{CE}(\mathfrak{g}_2)$ by $f^\ast$ and sends the generator $b^2_{p+1}$ to $b^1_{p+1}$.
  This respects the differential on the original generators because $f^\ast$ does, and
  it respects the differential on the new generator because

  $$
    \begin{aligned}
      (\mathrm{hofib}(f))^\ast( d_{\widehat{\mathfrak{g}_2}} b^2_{p+1} )
      &=
      (\mathrm{hofib}(f))^\ast( \mu^2_{p+2} )
      \\
      & =
      f^\ast ( \mu^2_{p+2} )
      \\
      & =
      \mu^1_{p+1}
      \\
      & =
      d_{\widehat{\mathfrak{g}_1}} b^1_{p+1}
      \\
      & =
      d_{\widehat{\mathfrak{g}_1}} (\mathrm{hofib}(f))^\ast b^2_{p+1}
      \,,
    \end{aligned}
  $$
  where the third equality used is equivalent to the commutativity of the triangular diagram above.
\end{example}

\begin{theorem}[$L_\infty$-theoretic double dimensional reduction and oxidation]
  \label{ddReduction}
  Let $\mathfrak{g}$ and $\mathfrak{h}$ be  super $L_\infty$-algebras,
  { such that $\mathrm{CE}(\mathfrak{h})$ has no generators in degree 1}.
  Let moreover
  $c_2 : \mathfrak{g} \to b \mathbb{R}$ be a 2-cocycle (Definition \ref{LineLienAlgebra}) and
  $\pi : \widehat{\mathfrak{g}} \to \mathfrak{g}$
  be the
  central extension classified by $c_2$, according to Example \ref{homotopyfiberofLinfinityCocycles}.
  Then there is a bijection
  $$
    \xymatrix{
      \mathrm{Hom}( \widehat {\mathfrak{g}}, \mathfrak{h} )
        \ar@{<-}@<+5pt>[rr]^-{\mbox{\tiny oxidation}}
        \ar@<-5pt>[rr]_-{\mbox{\tiny reduction}}^-\simeq
        &&
       \mathrm{Hom}_{/b \mathbb{R}}( \mathfrak{g}, \mathfrak{Lh}/\mathbb{R} )
    }
  $$
  between super $L_\infty$-homomorphisms out of $\widehat{\mathfrak{g}}$ into $\mathfrak{h}$
  and super $L_\infty$-homomorphism over $b\mathbb{R}$ from $\mathfrak{g}$ (in the sense of
  Definition \ref{SliceCategory})
  to the cyclification $\mathfrak{L}\mathfrak{h}/\mathbb{R}$ of $\mathfrak{h}$ (Def. \ref{def:cyclification}):
  $$
    \left(
      \xymatrix{
      \widehat{\mathfrak{g}}
        \ar[r]&
      \mathfrak{h}
      }
      \
    \right)
     \;\;\;\;\longleftrightarrow\;\;\;\;
    \left(
      \raisebox{20pt}{
      \xymatrix{
        \mathfrak{g}
          \ar[rr]
          \ar[dr]_{c_2}
        &&
        \mathfrak{Lh}/\mathbb{R}
        \ar[dl]^{\omega_2}
        \\
        & b\mathbb{R}
      }}
    \right)\;.
  $$
  {
  More generally, if we consider $\mathfrak{g}$ and $\mathfrak{h}$ as curved $L_\infty$-algebras
  with vanishing curvature, according to Definition \ref{superLInfinityAlgebra}, then
  the bijection holds without any condition on $\mathfrak{h}$ as a bijection between hom-sets of
  possibly curved super $L_\infty$-homomorphism. Moreover, in this case}
  the bijection is \emph{natural} in its arguments (i.e. compatible with pre- and postcomposition with
  curved super $L_\infty$-homomorphisms). In other words, the functors from Example \ref{def:cyclification}
  (Lemma \ref{FunctorialLoopAlgebra}) and Example \ref{FunctorialHofib}
  then form an adjoint pair (e.g. \cite[chapter 3]{Borceux94})
  with $\mathrm{hofib}$ left adjoint to the cyclification functor $\mathfrak{L}(-)/\mathbb{R}$ from Def. \ref{def:cyclification}:
  $$
    \xymatrix{
      sL_\infty \mathrm{Alg}^{{\mathrm{cvd}}}_{\mathbb{R}}
        \ar@<-6pt>[rr]_{\mathfrak{L}(-)/\mathbb{R}}^{\bot}
        \ar@<+6pt>@{<-}[rr]^{\mathrm{hofib}}
      &&
     (sL_\infty \mathrm{Alg}^{{ \mathrm{cvd}}}_{\mathbb{R}})_{/b\mathbb{R}}
    }
  $$
  {
  Finally, the unit $\eta_{(\mathfrak{g}), \omega_2}$ of this adjunction,
  $$
    \xymatrix{
       \mathfrak{g}
         \ar[dr]_{c_2}
         \ar[rr]^-{\eta_{\mathfrak{g},c_2}}
         &&
       \mathfrak{L} {\widehat {\mathfrak{g}}}/\mathbb{R}
       \ar[dl]^{\omega_2}
       \\
       & b\mathbb{R}
    }
  $$
  (i.e. the image under the above correspondence of the identity on some $\widehat{\mathfrak{g}}$)
  is dually given by the map that
  \begin{enumerate}
    \item sends unshifted generators to themselves, except for $e$
  (the extra generator with $d_{\widehat {\mathfrak{g}}} e = c_2$ according to example \ref{homotopyfiberofLinfinityCocycles})
  which is sent to zero,
   \item sends
  all shifted generators to zero, except for the shift $s(e)$ of the extra generator,
  that instead is sent to minus the algeba unit $s(e) \mapsto -1$ {(this makes it a curved $L_\infty$-morphism)}.
  \end{enumerate}
  }
\end{theorem}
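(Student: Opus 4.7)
The plan is to establish the bijection at the level of Chevalley-Eilenberg algebras by exploiting the simple structure of $\mathrm{CE}(\widehat{\mathfrak{g}})$ as a free left $\mathrm{CE}(\mathfrak{g})$-module on $\{1,e\}$, and then to upgrade naturality to an adjunction formally. Since $\mathrm{CE}(\widehat{\mathfrak{g}})=\mathrm{CE}(\mathfrak{g})[e]/(de=c_2)$ with $e$ of bidegree $(1,\mathrm{even})$, every element decomposes uniquely as $\alpha+\beta\wedge e$ with $\alpha,\beta\in\mathrm{CE}(\mathfrak{g})$. Dually, a homomorphism $f:\widehat{\mathfrak{g}}\to\mathfrak{h}$ is a DG-algebra map $f^*:\mathrm{CE}(\mathfrak{h})\to\mathrm{CE}(\widehat{\mathfrak{g}})$, and I would evaluate it on each generator $v\in\mathfrak{h}^\ast$ to get $f^*(v)=\alpha_v+\beta_v\wedge e$ with $\deg\alpha_v=\deg v$ and $\deg\beta_v=\deg v-1$. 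Compatibility with the differential,
\[
f^*(d_{\mathfrak{h}}v)\;=\;d_{\widehat{\mathfrak{g}}}(\alpha_v+\beta_v\wedge e)\;=\;d_{\mathfrak{g}}\alpha_v+(-1)^{\deg v-1}\beta_v\wedge c_2+d_{\mathfrak{g}}\beta_v\wedge e,
\]
unpacks, by uniqueness of the decomposition, into one identity in $\mathrm{CE}(\mathfrak{g})$ and one $e$-component identity. I then define the reduction by $\tilde f^*(v):=\alpha_v$, $\tilde f^*(sv):=(-1)^{\deg v}\beta_v$, and $\tilde f^*(\omega_2):=c_2$; a direct inspection of the formulas for $d_{\mathfrak{L}\mathfrak{h}/\mathbb{R}}$ in Definition \ref{def:cyclification} shows precisely that the two identities above are what is needed for $\tilde f^*$ to respect the cyclification differential both on $v$ and on $sv$. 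The inverse oxidation map is defined by $f^*(v):=\tilde f^*(v)+(-1)^{\deg v}\tilde f^*(sv)\wedge e$, extended as a graded-algebra homomorphism, and the two constructions are mutually inverse by construction.

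Naturality in $\mathfrak{g}$ and $\mathfrak{h}$ then follows automatically because both assignments are given by universal formulas on generators: precomposing with a map on the $\mathfrak{g}$-side amounts to postcomposing $\alpha_v,\beta_v$ with the induced map on coefficients, and postcomposing with a map on the $\mathfrak{h}$-side amounts to chasing generators through a relabelling. This naturality, together with the bijection, is the content of the adjunction $\mathrm{hofib}\dashv\mathfrak{L}(-)/\mathbb{R}$ by the standard hom-set characterization of adjoint functors. The unit $\eta_{\mathfrak{g},c_2}$ is obtained by running the bijection on the identity $\mathrm{id}:\widehat{\mathfrak{g}}\to\widehat{\mathfrak{g}}$, i.e.\ by decomposing each generator of $\mathrm{CE}(\widehat{\mathfrak{g}})$ against the basis $\{1,e\}$: original generators of $\mathrm{CE}(\mathfrak{g})$ have $\alpha=\alpha$ with $\beta=0$, so they and their shifts map as stated, while for the extra generator $e$ itself $\alpha=0$ and $\beta=1$, yielding $\eta^*(e)=0$ and $\eta^*(se)=(-1)^1\cdot 1=-1$, as asserted.

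The main obstacle is to justify the degree-$1$ hypothesis on $\mathfrak{h}$ and its removal in the curved case. A generator $v\in\mathfrak{h}^\ast$ of degree $1$ produces a shifted generator $sv$ of degree $0$, and then $\tilde f^*(sv)=-\beta_v$ would have to land in degree $0$ of $\mathrm{CE}(\mathfrak{g})$; in the augmented category this is forced to vanish, obstructing the bijection and mirroring the fact that in the strict case the $e$-component $\beta_v$ of $f^*(v)$ must itself be a scalar and hence $0$. Passing to curved $L_\infty$-algebras and curved homomorphisms, which by Definition \ref{superLInfinityAlgebra} are precisely DG-maps into the non-augmented Chevalley-Eilenberg algebras and so allow nonzero scalar images, removes this obstruction and restores the bijection in full generality. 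The formula for the unit reflects this crisply: the shift $se$ of the extra generator of $\widehat{\mathfrak{g}}$ sits in degree $0$, and the unit sends it to the scalar $-1$, i.e.\ to pure curvature.
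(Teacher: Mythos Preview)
Your proposal is correct and follows essentially the same approach as the paper: both work dually on Chevalley--Eilenberg algebras, exploit the decomposition $\mathrm{CE}(\widehat{\mathfrak{g}})=\mathrm{CE}(\mathfrak{g})\oplus e\wedge\mathrm{CE}(\mathfrak{g})$ to split the image of each generator into an $e$-free part and an $e$-coefficient, and then verify that compatibility with the differentials matches precisely the cyclification relations. Your discussion of why the degree-$1$ hypothesis on $\mathfrak{h}$ is needed (and how passing to curved morphisms removes it) is more explicit than the paper's, which simply works in the curved setting from the outset; the one place you are terser than the paper is the ``Taylor expansion in $e$'' showing that $f^*(d_{\mathfrak{h}}v)$ decomposes via the derivation $s$, which the paper writes out as $P_i(\{\beta_p - e\wedge\tilde\alpha_p\}) = P_i(\{\beta_p\}) - e\wedge\sum_j\tilde\alpha_{j-1}\,\partial P_i/\partial x_j$.
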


\begin{proof}
We discuss the adjunction for curved $L_\infty$-homomorphism. The statement for the non-curved case then follows
immediately as a special case.

First consider the bijection as such. Dually we need to show that there is an identification
 $$
    \mathrm{Hom}( \mathrm{CE}(\mathfrak{h}), \mathrm{CE}(\widehat {\mathfrak{g}})  )
     \simeq
    \mathrm{Hom}^{\mathrm{CE}(b\mathbb{R})/}( \mathrm{CE}(\mathfrak{Lh}/\mathbb{R}), \mathrm{CE}(\mathfrak{g})  )
    \,.
  $$
  between homomorphisms of dg-algebras (not augmented) out of $\mathrm{CE}(\mathfrak{h})$
  into  $\mathrm{CE}(\widehat{\mathfrak{g}})$
  and dg-algebra homomorphisms under $\mathrm{CE}(b\mathbb{R})$ out of
  $\mathrm{CE}(\mathfrak{L}\mathfrak{h}/\mathbb{R})$ into $\mathrm{CE}(\mathfrak{g})$
$$
  \left(
    \xymatrix{
      \mathrm{CE}(\mathfrak{h})
      \ar[r]
      &
      \mathrm{CE}(\widehat{\mathfrak{g}})
    }
  \right)
     \;\;\;\;\longleftrightarrow\;\;\;\;
  \left(
    \raisebox{24pt}{
    \xymatrix{
      & \mathrm{CE}(b\mathbb{R})
      \ar[dr]^{c_2}
      \ar[dl]_{\omega_2}
      \\
      \mathrm{CE}(\mathfrak{L}\mathfrak{h}/\mathbb{R})
      \ar[rr]
      &&
      \mathrm{CE}(\mathfrak{g})
    }
    }
  \right)
  \,.
$$
Since, after forgetting the differential, $\mathrm{CE}(\mathfrak{h})$ is a free graded polynomial algebra,
we may write
\[
\mathrm{CE}(\mathfrak{h})=\big(\mathbb{R}[\{x_p\}],\,
d_{{}_{\mathrm{CE}(\mathfrak{h})}}x_i= P_i(\{x_p\})\big)
\]
for suitable polynomials $P_i$ in the variables $\{x_p\}$. Then a DGCA homomorphism from $\mathrm{CE}(\mathfrak{h})$ to $\mathrm{CE}(\widehat {\mathfrak{g}})$ is just a collection of elements $\alpha_p$ in $\mathrm{CE}(\widehat {\mathfrak{g}})$, one for each generator of $\mathrm{CE}(\mathfrak{h})$, such that
\[
d_{{}_{\mathrm{CE}}(\widehat {\mathfrak{g}})}\alpha_i = P_i(\{\alpha_p\})\;.
\]
Since $\hat{\mathfrak{g}}$ is the  the central extension of $\mathfrak{g}$ classified by $\omega_2$, we have
\[
\mathrm{CE}(\widehat {\mathfrak{g}})=\bigl( \mathrm{CE}(\mathfrak{g})[e], \,
d_{{}_{\mathrm{CE}(\hat{\mathfrak{g}})}}e=\gamma_2\bigr)\;,
\]
where $\gamma_2$ is the image of the generator $\omega_2$ of $\mathbb{R}[\omega_2] \simeq \mathrm{CE}(b\mathbb{R})$ under the map $c_2:\mathbb{R}[\omega_2]\to \mathrm{CE}(\mathfrak{g})$. Since $e$ has degree 1, we then see that there is a vector space decomposition
\[
\mathrm{CE}(\widehat {\mathfrak{g}})=\mathrm{CE}(\mathfrak{g})\oplus \big(e\wedge \mathrm{CE}(\mathfrak{g})\big)\;.
\]
 Consequently, we can uniquely decompose all the
  elements $\alpha_{p}$ as
    \begin{equation}
  \label{decomp}
    \alpha_{p} =   \beta_{p} - e \wedge \tilde{\alpha}_{p-1}
    \,,
  \end{equation}
  with $\tilde{\alpha}_{p-1}, \beta_{p} \in \mathrm{CE}(\mathfrak{g})$.
  We denote this by
  \begin{equation}\label{fiberintegration}
    \pi_\ast (\alpha_{p}) := \tilde{\alpha}_{p-1}
    \;\;\;\;\;\mbox{and}
    \;\;\;\;\;
    \alpha_p|_{\mathfrak{g}} := \beta_{p}
  \end{equation}
  in order to amplify the geometric interpretation.
Forgetting the differential, we have
\[
\mathrm{CE}(\mathfrak{L}\mathfrak{h}/\mathbb{R})=\mathbb{R}[\{x_p,y_{p-1}\},\omega_2]\;,
\]
so that the assignment
\begin{equation}\label{TheAdjunctOfAMapFromHatgToh}
x_p\mapsto \beta_{p}; \qquad y_{p-1}\mapsto \tilde{\alpha}_{p-1}
\end{equation}
precisely defines a morphism of graded commutative algebras under $\mathbb{R}[\omega_2]$ from $\mathrm{CE}(\mathfrak{L}\mathfrak{h}/\mathbb{R})$ to $\mathrm{CE}(\mathfrak{g})$. That is, the image of the additional generator $\omega_2$ of $\mathrm{CE}(\mathfrak{L}\mathfrak{h}/\mathbb{R})$ is prescribed to be $\gamma_2$ by the requirement of having a morphism under $\mathbb{R}[\omega_2]$. We are therefore left with checking that this is indeed a morphism of DGCAs. In terms of the generators $\{x_p,y_{p-1}\}$ and $\omega_2$, the differential in $\mathrm{CE}(\mathfrak{L}\mathfrak{h}/\mathbb{R})$ reads

\vspace{-7mm}
\[
d_{{}_{\mathrm{CE}(\mathfrak{L}\mathfrak{h}/\mathbb{R})}}x_i=
P_i(\{x_p\})+\omega_2\wedge y_{p-1}\;,
\quad
d_{{}_{\mathrm{CE}(\mathfrak{L}\mathfrak{h}/\mathbb{R})}}y_{i-1}=
-\sum_jy_{j-1} {\small \frac{\partial P_i}{\partial x_j}(\{x_p\})}\;,
\quad
d_{{}_{\mathrm{CE}(\mathfrak{L}\mathfrak{h}/\mathbb{R})}}\omega_2=0\;.
\]

\vspace{-2mm}
\noindent The fact that the above assignment is a morphism of DGCAs then  follows from
the matching
\begin{align*}
P_i(\{\beta_p\})-
e\wedge  \sum_j\tilde{\alpha}_{j-1} {\small \frac{\partial P_i}{\partial x_j}}\big(\{\beta_p\}\big)
&=
P_i(\{\beta_p-e\wedge \tilde{\alpha}_p\})
\\
&=P_i(\{\alpha_p\})
\\
&=d_{{}_{\mathrm{CE}(\widehat {\mathfrak{g}})}}\alpha_i
\\
&=d_{{}_{\mathrm{CE}(\widehat {\mathfrak{g}})}}(\beta_{i}- e\wedge \tilde{\alpha}_{i-1} )
\\
&= d_{{}_{\mathrm{CE}({\mathfrak{g}})}} \beta_{i}
-\gamma_2\wedge \tilde{\alpha}_{i-1}  + e\wedge d_{{}_{\mathrm{CE}( {\mathfrak{g}})}}\tilde{\alpha}_{i-1}\;.
\end{align*}

This establishes the bijection for any fixed $\widehat{\mathfrak{g}}$ and $\mathfrak{h}$. For this
bijection to be \emph{natural} we need to show that for every morphism
$$
  \xymatrix{
    \mathfrak{g}_2
    \ar[dr]_{(\omega_2)_2}
    \ar[rr]^f
    &&
    \mathfrak{g}_1
    \ar[dl]^{(\omega_2)_1}
    \\
    & b \mathbb{R}
  }
$$
in $(sL_\infty \mathrm{Alg}_{\mathbb{R}})_{/b\mathbb{R}}$  (Def. \ref{SliceCategory}) and every morphism
$$
  \xymatrix{
    \mathfrak{h}_1
    \ar[rr]^g
    &&
    \mathfrak{h}_2
  }
$$
in $sL_\infty \mathrm{Alg}_{\mathbb{R}}$
the following diagram (of functions between hom-sets) commutes:
$$
  \xymatrix{
    \mathrm{Hom}( \widehat{\mathfrak{g}_1}, \mathfrak{h}_1 )
    \ar[rr]^-{\simeq}_-{\mathrm{reduction}}
    \ar[d]_{g\circ (-)\circ \mathrm{hofib}(f)}
    &&
    \mathrm{Hom}_{/b\mathbb{R}}(\mathfrak{g}_1, \mathfrak{L}\mathfrak{h}_1/\mathbb{R})
    \ar[d]^{ \mathfrak{L}g/\mathbb{R} \circ (-) \circ f}
    \\
    \mathrm{Hom}( \widehat{\mathfrak{g}_2}, \mathfrak{h}_2 )
    \ar[rr]^-{\simeq}_-{\mathrm{reduction}}
    &&
    \mathrm{Hom}_{/b\mathbb{R}}(\mathfrak{g}_2, \mathfrak{L}\mathfrak{h}_2/\mathbb{R})\,,
  }
$$
where the vertical maps are those given by pre- and postcomposition, as indicated.

By unwinding the definition, one finds that indeed both ways of going around this square
take a homomorphism $\phi \colon \widehat{\mathfrak{g}_1} \to \mathfrak{h_1}$
to the homomotphism $\mathfrak{g}_2 \to \mathfrak{L}\mathfrak{h}_2/\mathbb{R}$
which is given, dually, for any generator $v \in \wedge^1 \mathfrak{h}_1^\ast$ by

$$
  \begin{aligned}
    v & \mapsto f^\ast((\phi^\ast(g^\ast(v)))\vert_{\mathfrak{g}_1})
    \\
    s v & \mapsto f^\ast((\pi_1)_\ast(\phi^\ast(g^\ast(v))))
  \end{aligned}
$$
where we are using (\ref{fiberintegration}).

This establishes the adjunction.
{ Finally we show that the adjunction unit $\eta$ is as claimed.
One way to see this is to apply the above corespondence to the identity morphism on $\widehat{\mathfrak{g}}$.
But it is instructive to spell this out:
First observe that the map as claimed is indeed well defined: Its dual incarnation respects the differential on $e$
because
$$
  d_{\mathfrak{L}\widehat{\mathfrak{g}}/\mathbb{R}}
  e
  =
  c_2 + \omega_2 \wedge s(e)
  \mapsto
  c_2 + c_2 \wedge (-1)
  = 0
  \,,
$$
Showing that this is indeed the adjunction unit
is equivalent to showing that image of any $\widehat{\mathfrak{g}} \overset{\phi}{\longrightarrow} \mathfrak{h}$
under the adjunction correspondence is equal to the composite
$$
  \xymatrix{
    \mathfrak{g}
      \ar[rr]^-{\eta_{\mathfrak{g}, c_2}}
      &&
    \mathfrak{L}\widehat{\mathfrak{g}}/\mathbb{R}
      \ar[rr]^-{\mathfrak{L}\phi/\mathbb{R}}
    &&
    \mathfrak{L}\mathfrak{h}/\mathbb{R}
  }
  \,.
$$
This now follows by direct inspection. By the above, the dual of this composite sends any
unshifted generator $c_{p}$ to
$$
  c_p \mapsto \alpha_p = \beta_p - e \wedge \tilde \alpha_{p-1} \mapsto \beta_p
$$
and any shifted generator $s(c_p)$ to
$$
  s(c_p)
    \mapsto
  s(\alpha_p) = s(\beta_p) - s(e) \wedge \tilde \alpha_{p-1} - e \wedge s(\tilde \alpha_{p-1})
    \mapsto
  -(-1) \wedge \tilde \alpha_{p-1}
   =
     \tilde \alpha_{p-1}
   \,.
$$
This is indeed as in assignment (\ref{TheAdjunctOfAMapFromHatgToh}) above.}

\vspace{-6mm}
\end{proof}

\begin{remark}[Geometric interpretation of $L_\infty$-algebraic dimensional reduction]
 \label{GeometricInterpretationOfdd}
  The general theory of adjoint functors (e.g. \cite[chapter 3]{Borceux94}) provides insight as to the geometric nature of the
  super $L_\infty$-algebraic formalization of dimensional reduction  from Theorem \ref{ddReduction}.
  Namely given any pair of adjoint functors $L \dashv R$, then
  the unit of the adjunction $\eta_x : x \to R L x$ is such that
  the natural bijection between hom-sets
  $$
    \mathrm{Hom}(L x, y) \overset{\simeq}{\longrightarrow} \mathrm{Hom}(x,R x)
  $$
  is given by sending any morphism of the form $\phi : L x \to y$ to the composite
  $$
    x \overset{\eta_x}{\longrightarrow} R L x
     \overset{R \phi}{\longrightarrow}
    R y
    \,.
  $$
  Specified to the situation in Theorem \ref{ddReduction}, this means that the
  $L_\infty$-theoretic double dimensional reduction
  of a super $L_\infty$-homomorphism
  $$
    \widehat{\mathfrak{g}} \overset{\phi}{\longrightarrow} \mathfrak{h}
  $$
  on a central $\mathbb{R}$-extension $\widehat{\mathfrak{g}}$ of some super
   $L_\infty$-algebra $\mathfrak{g}$ is the following composite:
  $$
  \xymatrix{
    \mathfrak{g}
      \ar[r]^-{\eta_{\mathfrak{g}}}
      &
    \mathfrak{L} \widehat{\mathfrak{g}} / \mathbb{R}
      \ar[rr]^-{\mathfrak{L}(\phi)/\mathbb{R}}
      &&
    \mathfrak{L}\mathfrak{h}/\mathbb{R}
    }
    \,.
  $$
  In terms of the geometric interpretation via rational homotopy theory from Remark \ref{rem:SullivanModelForFreeLoopSpace}
  the morphism $\eta_{\mathfrak{g}}$ here has the following interpretation:

  Let $\widehat X \to X$ be a principal circle bundle. Then there is a map
  $$
    X \longrightarrow \mathcal{L}\widehat X/ S^1
  $$
  which sends each point of $X$ to the loop that winds around the circle fiber over that point, at unit parameter speed.
  As a map to the free loop space $\mathcal{L}\widehat{X}$ this would not be well defined unless the circle bundle
  were trivial, because by definition of principal circle bundles its fibers are identified with the typical
  fiber (the circle) only up to rigid rotation of that circle. But this is precisely the relation that is
  divided out by passing to the cyclified space $\mathfrak{L}\widehat{X}/S^1$, which makes the assignment
  of points to the loops that wind around their fibers be well defined.

  Hence given any map of spaces $\widehat{X} \overset{f}{\longrightarrow} H$, then
  we may pass to the induced map on loops in $\widehat{X}$ modulo rigid rotation, and then
  precompose with the above fiber-assigning map
  $$
  \xymatrix{
    X
     \ar[r]
     &
    \mathcal{L}\widehat{X}/S^1
      \ar[r]^{\mathcal{L}f/S^1}
      &
    \mathcal{L}H/S^1
    }
    \,.
  $$
  Under the Quillen-Sullivan functor from spaces to their associated $L_\infty$-algebras in rational homotopy theory,
  this is the $L_\infty$-algebraic construction above.

  And this shows just how this formalizes the intuitive picture of double dimensional reduction:
  For let $\Sigma_{p+1}$ be a manifold of dimension $p+1$
  and let $\Sigma_{p+1} \longrightarrow X$ be the worldvolume of some $p$-brane in $X$. Thinking of
  $f : \widehat{X} \to H$
  as classifying a background field for $(p+1)$-branes on $\widehat{X}$
  (for instance for $H = K(\mathbb{Z},p+3)$, the classifying space for ordinary cohomology)
  then the dimensionally reduced coupling term is given by the composite
  $$
  \xymatrix{
    \Sigma_{p+1}
      \ar[r] &
    X
      \ar[r]
      &
      \mathcal{L}\widehat{X}/S^1
      \ar[r]^{\mathcal{L}f/S^1}
      &    \mathcal{L}H/S^1
    }
    \,.
  $$
  To see what this does, consider what happens locally over some chart $U$ on which the circle extension $\widehat{X} \to X$
  is topologically trivial. Then by the ordinary product/hom adjunction $S^1 \times (-) \dashv [S^1,-] = \mathcal{L}(-)$ this is equivalently
  the composite
  $$
    \Sigma_{p+1} \times S^1
      \longrightarrow
    \widehat U
      \overset{f|_U}{\longrightarrow}
    H
    \,.
  $$
  But this is nothing than the value of the background field $f$ not on the $p$-brane worldvolume
  $\Sigma_{p+1}$, but on the worldvolume $\Sigma_{p+1} \times S^1$ of a $(p+1)$-brane, which
  ``wraps'' the circle fiber in $\widehat U = U \times S^1$.
  This is precisely the physical picture of double dimensional reduction, originally due to \cite{DuffHoweInamiStelle87}.

\end{remark}

%%%%%%%%%%%%%%%%%%%%%%%%%%%%%%%%%%%%%%%%%%%%%%%%%%%%%%%%%%%%
\section{The brane supercocycles}
\label{SecBraneCocycles}
%%%%%%%%%%%%%%%%%%%%%%%%%%%%%%%%%%%%%%%%%%%%%%%%%%%%%%%%%%%%

We now naturally associate to systems of super $p$-brane species certain
 supercocycles taking values in super $L_\infty$-algebras arising from spheres and related
 topological spaces/spectra.
This follows the geometric approach to cocycles in supergravity, as in
\cite{DAuriaFre82} \cite{CDF}.
The algebras for type IIA and type IIB that we obtain may also be found
in \cite{Ca} \cite{CAIB00} \cite{IIBAlgebra} \cite{DFGT}.

\begin{defn}\label{SphereFourAlgebra}
 Write $\mathfrak{l} S^4 \in \mathrm{sL}_\infty \mathrm{Alg}$ for the Chevalley-Eilenberg algebra
 which is the minimal Sullivan model of the 4-sphere according to \cite{Sullivan77}:
 $$
   \mathrm{CE}(\mathfrak{l}S^4)
    =
   \left\{
     \begin{aligned}
       d g_4 & = 0
       \\
       d g_7 &= -\tfrac{1}{2} g_4 \wedge g_4
     \end{aligned}
   \right\}
   \,.
 $$
\end{defn}
\begin{defn}\label{M2M5Cocycles}
  On the super Lie algebra $\mathbb{R}^{10,1\vert \mathbf{32}}$  (Def. \ref{SuperMinkowski}, Remark \ref{SpinRepsInDimensions11And10And9})
  define the cochains
  $$
    \begin{aligned}
      \mu_{{}_{M2}}
        & :=
      \tfrac{i}{2} \; \left(\overline{\psi} \wedge \Gamma_{a_1 a_2} \psi \right) \wedge e^{a_1} \wedge e^{a_2}\;,
      \\
      \mu_{{}_{M5}}
        &:=
      \tfrac{1}{5!}  \;
      \left(\overline{\psi} \wedge \Gamma_{a_1 \cdots a_5} \psi \right)
      \wedge e^{a_1} \wedge \cdots \wedge e^{a_5}\;,
    \end{aligned}
  $$
  where the indices run in the set $\{0,1,\cdots, 10\}$.
\end{defn}
\begin{prop}\label{TheMBraneS4Cocycle}
  The elements in Def. \ref{M2M5Cocycles} satisfy
  $$
    d\mu_{{}_{M5}} = -\tfrac{1}{2} \mu_{{}_{M2}} \wedge \mu_{{}_{M2}}
    \,,
  $$
  hence consititute a super $L_\infty$-algebra homomorphism to the 4-sphere (Def. \ref{SphereFourAlgebra}):
  $$
    (\mu_{{}_{M2}}, ~\mu_{{}_{M5}})
    \;:\;
    \mathbb{R}^{10,1 \vert \mathbf{32}}
      \longrightarrow
    \mathfrak{l}S^4
    \,.
  $$
\end{prop}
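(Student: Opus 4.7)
The plan is to verify the cocycle equation by directly computing both sides and reducing the equality to a single classical 11d Fierz identity. The structural input is minimal: $d\psi^\alpha = 0$, $de^a = \overline{\psi}\wedge \Gamma^a \psi$, and the fact that $\overline{\psi}\wedge\Gamma^{a_1\cdots a_p}\psi$ has bidegree $(2,\mathrm{even})$ so it commutes with all generators in the graded-commutative sense of Remark \ref{striuctureInLInfinity}.

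First I would compute $d\mu_{{}_{M5}}$. Since $\psi$ is closed, only the factors $e^{a_i}$ contribute, via Leibniz. The bispinor $(\overline{\psi}\Gamma^{a_i}\psi)$ being of even total degree commutes through any $e^{a_j}$, and after relabeling the summation indices using the total antisymmetry of $\Gamma_{a_1\cdots a_5}$, all five terms coincide. The result is
\[
d\mu_{{}_{M5}}
\;=\;
\tfrac{1}{4!}\,
\bigl(\overline{\psi}\wedge\Gamma_{a\,b_1 b_2 b_3 b_4}\psi\bigr)
\wedge
\bigl(\overline{\psi}\wedge\Gamma^{a}\psi\bigr)
\wedge e^{b_1}\wedge e^{b_2}\wedge e^{b_3}\wedge e^{b_4}\,.
\]
Second, I would expand $-\tfrac12\,\mu_{{}_{M2}}\wedge\mu_{{}_{M2}}$. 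Again all factors are of even total degree and commute; taking the $\tfrac{i}{2}$ factors into account one obtains a constant multiple of
\[
\bigl(\overline{\psi}\wedge\Gamma_{[b_1 b_2}\psi\bigr)
\wedge
\bigl(\overline{\psi}\wedge\Gamma_{b_3 b_4]}\psi\bigr)
\wedge e^{b_1}\wedge e^{b_2}\wedge e^{b_3}\wedge e^{b_4}\,,
\]
where the antisymmetrization over the $b$'s is automatic because of the wedge product of the $e^{b_i}$.

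The identity to be proved thus reduces to a purely spinorial Fierz identity of the form
\[
\bigl(\overline{\psi}\wedge\Gamma_{a\,b_1 b_2 b_3 b_4}\psi\bigr)\wedge\bigl(\overline{\psi}\wedge\Gamma^{a}\psi\bigr)
\;=\;
\kappa\;
\bigl(\overline{\psi}\wedge\Gamma_{[b_1 b_2}\psi\bigr)\wedge\bigl(\overline{\psi}\wedge\Gamma_{b_3 b_4]}\psi\bigr)
\]
for a specific constant $\kappa$. This is the classical quartic Majorana Fierz identity in $d=11$, $N=1$, underlying the consistency of 11d supergravity (cf.\ \cite{DAuriaFre82, CDF}). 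I would derive it by the standard route: first recall that for Majorana spinors of $\mathrm{Spin}(10,1)$ the symmetric bilinear pairings $\mathrm{Sym}^2(\mathbf{32})\to\mathbb{R}$ are exhausted by the three families $\overline{\psi}\Gamma^a\psi$, $\overline{\psi}\Gamma^{a_1 a_2}\psi$, $\overline{\psi}\Gamma^{a_1\cdots a_5}\psi$; then expand the left-hand side via the Clifford completeness relation, project onto the (necessarily unique) index-symmetry class compatible with $\mathrm{Sym}^4(\mathbf{32})$, and read off $\kappa$.

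The main obstacle is precisely the bookkeeping in this last step: pinning down the absolute normalizations (including the factors of $i$ in Def.\ \ref{M2M5Cocycles}, the sign from $-\tfrac{1}{2}$, and the $\tfrac{1}{p!}$ prefactors) so that the Fierz constant $\kappa$ matches exactly. Once $\kappa$ is in hand, the proposition follows, and by Example \ref{LineLienAlgebra} the pair $(\mu_{{}_{M2}},\mu_{{}_{M5}})$ dually specifies the claimed homomorphism $\mathbb{R}^{10,1\vert\mathbf{32}}\to\mathfrak{l}S^4$, since the defining relations $dg_4=0$ and $dg_7=-\tfrac12 g_4\wedge g_4$ of $\mathrm{CE}(\mathfrak{l}S^4)$ in Def.\ \ref{SphereFourAlgebra} are then satisfied by $g_4\mapsto\mu_{{}_{M2}}$, $g_7\mapsto\mu_{{}_{M5}}$.
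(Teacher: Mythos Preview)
Your proposal is correct and follows essentially the same route as the paper: compute $d\mu_{M5}$ using $d\psi=0$ and $de^a=\overline{\psi}\Gamma^a\psi$, collapse the five Leibniz terms by antisymmetry to obtain $\tfrac{1}{4!}(\overline{\psi}\Gamma_{a\,b_1\cdots b_4}\psi)(\overline{\psi}\Gamma^a\psi)\wedge e^{b_1}\wedge\cdots\wedge e^{b_4}$, and then reduce to the quartic 11d Fierz identity. The only difference is that the paper simply quotes the identity with the explicit constant $\kappa=3$ from \cite[(3.27a)]{CDF}, whereas you outline how to derive it; once you record $\kappa=3$, the numerical check $\tfrac{3}{4!}=-\tfrac{1}{2}\cdot\tfrac{i}{2}\cdot\tfrac{i}{2}$ closes your argument exactly as in the paper.
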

\begin{proof}
  Using the differential relations $d \psi^\alpha = 0$ and $d e^a = \overline{\psi}\Gamma^a \psi$
  from Def. \ref{SuperMinkowski},
  we have
  $$
    \begin{aligned}
      d \mu_{{}_{M5}}
        & =
        \tfrac{1}{5!} d \left(\left(\overline{\psi} \Gamma_{a_1 \cdots a_5} \psi\right)\wedge e^{a_1} \wedge \cdots \wedge e^{a_5}\right)
        \\
        & =
          \tfrac{1}{4!}
          \underset{= 3 \left(\overline{\psi}\Gamma_{[a_1 a_2}\psi\right) \left(\overline{\psi}\Gamma_{a_3 a_4]}\psi \right) }{\underbrace{
          \left(\overline{\psi} \Gamma_{[a_1 \cdots a_4 a]} \psi \right)
           \left(
         \overline{\psi}
           \Gamma^a
         \psi
         \right)
         }}
         \wedge e^{a_1} \wedge \cdots e^{a_4}
         \\
         & =
         -\tfrac{1}{2}
         \left( \tfrac{i}{2} (\overline{\psi}\wedge \Gamma_{a_1 a_2} \psi) \wedge e^{a_1}\wedge e^{a_2} \right)
         \wedge
         \left( \tfrac{i}{2} (\overline{\psi} \wedge \Gamma_{a_3 a_4} \psi) \wedge e^{a_3}\wedge e^{a_4} \right)
         \\
         & = -\tfrac{1}{2}\mu_{{}_{M2}} \wedge \mu_{{}_{M2}}\;,
    \end{aligned}
  $$
  where the equality under the brace is the Fierz identity from \cite[(3.27a)]{CDF}.
\end{proof}
\begin{remark}\label{M2M5Origin}
  In fact there is a stronger statement: A priori the WZW term for the M5-brane is a 7-cocycle
  not on $\mathbb{R}^{10,1\vert \mathbf{32}}$, but on the super Lie 3-algebra $\mathfrak{m}2\mathfrak{brane}$ which is
  its higher extension (according to example \ref{homotopyfiberofLinfinityCocycles}) by the 4-cocycle $\mu_{M2}$ \cite{FSS13}:
  $$
    \xymatrix{
      \mathfrak{m}2\mathfrak{brane}
      \ar[d]_{\mathrm{hofib}(\mu_{M2})}
      \ar[rr]^{\mu'_{M5}}
      &&
      b^6 \mathbb{R}
      \\
      \mathbb{R}^{10,1\vert \mathbf{32}}
      \ar[dr]_-{\mu_{M2}}
      \\
      & b^3 \mathbb{R}
    }
    \,.
  $$
  Given such a situation, $\mathfrak{m}2\mathfrak{brane}$ is exhibited as a $b^2 \mathbb{R}$-principal $\infty$-bundle
  over $\mathbb{R}^{10,1\vert \mathbf{32}}$
  \cite{NSS12}
  and one may ask for the $b^2 \mathbb{R}$-equivariant homotopy
  descent of $\mu'_{M5}$ down to the base space $\mathbb{R}^{9,1\vert \mathbf{16}+ \overline{\mathbf{16}}}$.
  As discussed in \cite{FSS15} this exists and is given, up to equivalence of $L_\infty$-cocycles,
  by the $S^4$-valued cocycle $(\mu_{M2}, \mu_{M5})$ from Proposition \ref{TheMBraneS4Cocycle}.
\end{remark}
\begin{example}[{\cite[prop. 3.8]{FSS16}}]\label{ReductionOfTheMBraneCocycles}
The double dimensional reduction according to Theorem \ref{ddReduction} of the M2/M5-brane cocycle from Definition \ref{TheMBraneS4Cocycle}
and Remark \ref{M2M5Origin}
is a cocycle of the form
$$
  \xymatrix{
    \mathbb{R}^{10,1 \vert \mathbf{32}}
    \ar[drr]_{\mu_{{}_{D0}}}
    \ar[rrrr]^{\left\{{\mu_{{}_{F1}}^{\mathrm{IIA}}, ~\mu_{{}_{NS5}}^{\mathrm{IIA}}},~  { \mu_{{}_{D2}}, ~\mu_{{}_{D4}} } \right\}}
    &&&&
    \mathfrak{L}S^4/b\mathbb{R}
    \ar[dll]^{\omega_2}
    \\
    &&
    b \mathbb{R}
  }
  \,.
$$
\end{example}

However, we highlight that there is gauge enhancement: more cocycles
appear after dimensional reduction, notably the D6-brane cocycle,
with all cocycles being assembled together in a model for rational
twisted complex topological K-theory  in cohomological degrees 0 and 1.
As a precise statement this is Proposition \ref{IIACocyclesCoeffsInKU} below, for which we need the following
definitions:
\begin{defn}[{\cite[section 4]{FSS16}}]
  \label{RationalTwistedku}
  Write $\mathfrak{l}( \mathrm{KU}/BU(1) )$ for the $L_\infty$-algebra
  with CE-algebra
  $$
    \mathrm{CE}(\mathfrak{l}( \mathrm{KU}/BU(1) ))
    :=
    \big(\mathbb{R}[
      h_3,
      \{\omega_{2p}\}_{p \in \mathbb{Z}}],\, ~d \omega_{2p+2} = h_3 \wedge \omega_{2p}
    \big)
  $$
  and $\mathfrak{l}( \Sigma\mathrm{KU}/BU(1) )$ for the $L_\infty$-algebra
  with CE-algebra
  $$
    \mathrm{CE}(\mathfrak{l}( \Sigma\mathrm{KU}/BU(1) ))
    :=
    \big(\mathbb{R}[
      h_3,
      \{\omega_{2p+1}\}_{p \in \mathbb{Z}}], ~d \omega_{2p+3} = h_3 \wedge \omega_{2p+1}
    \big)
    \,.
  $$
\end{defn}

Starting with cocycles in M-theory on $\R^{9, 1|{\bf 32}}$ we get
cocycles in type IIA on $\R^{8, 1| {\bf 16} + \overline{\bf 16}}$ by integration over
the fiber $(\pi_{10})_*$ of the rational circle bundle $S^1_\R \to \R^{9, 1|{\bf 32}}
\overset{\pi_{10}}{\longrightarrow} \R^{8, 1| {\bf 16} + \overline{\bf 16}}$, established
in \cite[Prop. 4.5]{FSS13}.
\begin{defn}\label{IIACocycles}
  We denote the components of the double dimensional reduction of the M-brane cocycles
  as follows, using the notation in Eq. (\ref{fiberintegration}):
  $$
    \begin{aligned}
      \mu_{{}_{D0}}
        &:=
      \overline{\psi}\Gamma^{10} \psi
      \\
&       =
      \overline{\psi} \Gamma_{10} \psi\;,
      \\
      \mu_{{}_{F1}}^{\mathrm{IIA}}
        &:=
      (\pi_{10})_\ast( \mu_{{}_{M2}} )
       \\
     &   = i  \left( \overline{\psi} \wedge \Gamma_a \Gamma_{10} \psi \right) \wedge e^a\;,
      \\
      \mu_{{}_{D2}}
        &:=
        (\mu_{{}_{M2}})|_{8+1}
        \\
      &  =
        \tfrac{i}{2}  \left(\overline{\psi} \wedge \Gamma_{a_1 a_2} \psi \right) \wedge e^{a_1} \wedge e^{a_2}\;,
        \\
      \mu_{{}_{D4}}
        &:=
      (\pi_{10})_\ast( \mu_{{}_{M5}} )
      \\
     &  =
      +\tfrac{1}{4!}
         \left(
         \overline{\psi}
           \wedge
          \Gamma_{a_1\cdots a_4}
          \Gamma_{10}\psi
          \right)
          \wedge e^{a_1} \wedge \cdots \wedge e^{a_4}\;.
    \end{aligned}
    $$
    Furthermore, consider the following elements in
    $\mathrm{CE}(\mathbb{R}^{9,1\vert \mathbf{16}+\overline{\mathbf{16}}})$:
    $$
    \begin{aligned}
      \mu_{{}_{D6}}
       &  :=
       \tfrac{i}{6!}
      \left(\overline{\psi} \Gamma_{a_1 \cdots a_6}\psi \right) \wedge e^{a_1} \wedge \cdots \wedge e^{a_6}
      ,
      \\
      \mu_{{}_{D8}}
       & :=
      \tfrac{1}{8!}
        \left(
        \overline{\psi}
           \Gamma_{a_1 \cdots a_8}
           \Gamma_{10}
        \psi
        \right)
        \wedge
        e^{a_1}\wedge \cdots \wedge e^{a_8}
      ,
      \\
      \mu_{{}_{D10}}
&        :=
      \tfrac{i}{10!}
        \left(
        \overline{\psi}
           \Gamma_{a_1 \cdots a_{10}}
        \psi
        \right)
        \wedge
        e^{a_1}\wedge \cdots \wedge e^{a_{10}}
    \,,
  \end{aligned}
  $$
  where the indices run through $\{0,1,\cdots, 9\}$
\end{defn}
\begin{prop}\label{IIACocyclesCoeffsInKU}
  The elements in Def. \ref{IIACocycles} satisfy the following differential
  conditions
  $$
    d \mu_{{}_{F1}}^{\mathrm{IIA}} = 0\;,
  $$
  $$
    d \mu_{{}_{D 0}} = 0
    \;\,,\;\;
    d \mu_{{}_{D 10 }} = \mu_{{}_{F1}}^{\mathrm{IIA}} \wedge \mu_{{}_{D8}} = 0\;,
  $$
  $$
    d \mu_{{}_{D2 }} = \mu_{{}_{F1}}^{\mathrm{IIA}} \wedge \mu_{{}_{D0}}
    \;\,,\;\;
    d \mu_{{}_{D4}} = \mu_{{}_{F1}}^{\mathrm{IIA}} \wedge \mu_{{}_{D2}}
    \;\,,\;\;
    d \mu_{{}_{D6}} = \mu_{{}_{F1}}^{\mathrm{IIA}} \wedge \mu_{{}_{D4}}
    \;\,,\;\;
    d \mu_{{}_{D8}} = \mu_{{}_{F1}}^{\mathrm{IIA}} \wedge \mu_{{}_{D6}}\;,
  $$
  hence they constitute an $L_\infty$-cocycle shown as the top morphism of the following diagram:
  $$
    \xymatrix{
      \mathbb{R}^{9,1\vert \mathbf{16}+\overline{\mathbf{16}}}
      \ar@/^2pc/[rrrrr]^{\{ \mu_{{}_{F1}}^{\mathrm{IIA}},~ \mu_{{}_{D0}}, ~\mu_{{}_{D2}}, ~
      \mu_{{}_{D4}},~ \mu_{{}_{D6}},~ \mu_{{}_{D8}}, ~\mu_{{}_{D10}} \}}
      \ar[rrrr]_{\mathfrak{L}(\mu_{{}_{M2}},~ \mu_{{}_{M5}})/b\mathbb{R}}
      &&&&
      \mathfrak{L}S^4/b\mathbb{R}
      \ar[dr]
      &
      \mathfrak{l}( \mathrm{ku}/B U(1) )
      \ar[d]
      \\
      && && &
      \mathfrak{l}( \mathrm{ku}_{\leq 6}/BU(1) )\;.
    }
  $$
\end{prop}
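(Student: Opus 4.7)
My plan is to split the argument in two halves, according to whether a given $\mu_{{}_{D(2p)}}$ is inherited from 11d by double dimensional reduction or is a genuinely new 10d cochain.

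For the first group of relations --- $d\mu_{{}_{F1}}^{\mathrm{IIA}} = 0$, $d\mu_{{}_{D0}} = 0$, $d\mu_{{}_{D2}} = \mu_{{}_{F1}}^{\mathrm{IIA}} \wedge \mu_{{}_{D0}}$, and $d\mu_{{}_{D4}} = \mu_{{}_{F1}}^{\mathrm{IIA}} \wedge \mu_{{}_{D2}}$ --- I would not redo any Fierz computation but invoke Proposition~\ref{TheMBraneS4Cocycle} together with Proposition~\ref{IIBAsExtension}: the first gives $(\mu_{{}_{M2}}, \mu_{{}_{M5}})$ as a super $L_\infty$-homomorphism $\mathbb{R}^{10,1\vert \mathbf{32}} \to \mathfrak{l}S^4$, while the second exhibits this domain as the central $\mathbb{R}$-extension of $\mathbb{R}^{9,1\vert \mathbf{16}+\overline{\mathbf{16}}}$ classified by $\mu_{{}_{D0}}$. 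I would then apply Theorem~\ref{ddReduction} to produce the adjunct $L_\infty$-morphism $\mathbb{R}^{9,1\vert \mathbf{16}+\overline{\mathbf{16}}} \to \mathfrak{L}S^4/b\mathbb{R}$, and read off its components via the decomposition (\ref{decomp}) and the fibre-integration notation (\ref{fiberintegration}) to identify them with $\mu_{{}_{F1}}^{\mathrm{IIA}} = (\pi_{10})_\ast\mu_{{}_{M2}}$, $\mu_{{}_{D2}} = \mu_{{}_{M2}}\vert_{8+1}$ and $\mu_{{}_{D4}} = (\pi_{10})_\ast\mu_{{}_{M5}}$, together with $\mu_{{}_{D0}}$. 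Tracing the CE-differential of $\mathfrak{L}S^4/b\mathbb{R}$ from Definition~\ref{def:cyclification} through the generators $g_4, sg_4, g_7, sg_7$ and $\omega_2$ then reproduces the four desired closure relations; this is essentially the content of Example~\ref{ReductionOfTheMBraneCocycles}.

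For the remaining cochains $\mu_{{}_{D6}}, \mu_{{}_{D8}}, \mu_{{}_{D10}}$ --- the ``gauge enhancement'' new in 10d --- I would verify the relations by direct computation, exactly as in the proof of Proposition~\ref{TheMBraneS4Cocycle}. Using $d\psi^\alpha = 0$ and $de^a = \overline{\psi}\Gamma^a\psi$ from Definition~\ref{SuperMinkowski}, I would expand $d\mu_{{}_{D(2p)}}$ as a spinor quadrilinear of schematic form $(\overline{\psi}\Gamma_{[a_1\cdots a_{k-1} a}\psi)(\overline{\psi}\Gamma^a\psi)$ (with $\Gamma_{10}$ insertions depending on the parity of $p$) and then rewrite it by invoking the 10d Majorana-current Fierz identities catalogued in \cite{CAIB00, IIBAlgebra}. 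These Fierz identities come in two flavours, according to whether the antisymmetrized Clifford product carries a $\Gamma_{10}$ or not, matching the alternating pattern of $\Gamma_{10}$-insertions in the list $\mu_{{}_{D0}}, \mu_{{}_{D2}}, \ldots, \mu_{{}_{D10}}$. The case of $\mu_{{}_{D10}}$ needs an extra Fierz step to show that the common quadrilinear value shared by $d\mu_{{}_{D10}}$ and $\mu_{{}_{F1}}^{\mathrm{IIA}} \wedge \mu_{{}_{D8}}$ actually vanishes identically. I expect this Fierz step to be the main obstacle: it is not conceptually deep, but longer and more delicate than its 11d counterpart because of the $\Gamma_{10}$ bookkeeping.

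Once all seven closure relations are in place, I would package them into an $L_\infty$-homomorphism $\mathbb{R}^{9,1\vert \mathbf{16}+\overline{\mathbf{16}}} \to \mathfrak{l}(\mathrm{ku}/BU(1))$ straight from Definition~\ref{RationalTwistedku}: send $h_3 \mapsto \mu_{{}_{F1}}^{\mathrm{IIA}}$, $\omega_{2p} \mapsto \mu_{{}_{D(2p-2)}}$ for $1 \leq p \leq 6$, and $\omega_{2p} \mapsto 0$ for $p$ outside this range. The defining CE-relation $d\omega_{2p+2} = h_3 \wedge \omega_{2p}$ then becomes exactly the $p$-th established closure relation, with the boundary case $d\omega_{14} \mapsto \mu_{{}_{F1}}^{\mathrm{IIA}} \wedge \mu_{{}_{D10}} = 0$ holding automatically by bosonic-degree counting ($1+10 > 10$ linearly independent $e^a$). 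Finally I would check commutativity of the lower triangle with the truncation $\mathfrak{l}(\mathrm{ku}/BU(1)) \to \mathfrak{l}(\mathrm{ku}_{\leq 6}/BU(1))$, which is immediate by inspection since the truncation retains exactly those generators (in degree $\leq 6$) that were produced by the descent of $(\mu_{{}_{M2}}, \mu_{{}_{M5}})$ via Example~\ref{ReductionOfTheMBraneCocycles}.
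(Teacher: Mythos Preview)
Your proposal is correct and follows essentially the same strategy as the paper: reduce the closure relations to the 10d Majorana Fierz identities catalogued in \cite{CAIB00} (and \cite{IIBAlgebra}), then read off the $L_\infty$-homomorphism to $\mathfrak{l}(\mathrm{ku}/BU(1))$.

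There are two minor differences worth noting. First, your use of Theorem~\ref{ddReduction} to obtain the $D2$ and $D4$ relations structurally from the M-brane $S^4$-cocycle is a genuine refinement: the paper only invokes dimensional reduction to get $d\mu_{{}_{F1}}^{\mathrm{IIA}}=0$ and then treats $D2,D4,D6,D8$ uniformly via the Fierz identities of \cite{CAIB00}. Your route buys you two of the four nontrivial relations for free. Second, you overestimate the $D10$ case: the paper dispatches $d\mu_{{}_{D10}} = \mu_{{}_{F1}}^{\mathrm{IIA}}\wedge\mu_{{}_{D8}} = 0$ in one line (invoking $d\psi^\alpha=0$ and a bosonic degree-count), without any Fierz work, so this is not the bottleneck you anticipate. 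The paper also offers an alternative proof of all the relations simultaneously via Theorem~\ref{TheTDualityIsoOnLInfinityCocycles} and the type IIB Fierz identities of \cite{IIBAlgebra}, a route you do not mention.
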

  Here for emphasis we also displayed the double dimensional reduction of the M-brane cocycle
  from example \ref{ReductionOfTheMBraneCocycles}, and
  indicated that these coincide with the IIA cocycles on the F1, the D0, D2, and D4. Note that
   the M-brane cocycles also produce the NS5, but not the D6 and higher, which appear only
   in 10d  (``gauge enhancement'').

\begin{proof}
The first equation follows for instance from $d \mu_{{}_{M2}} = 0$ under dimensional reduction.
The two equations in the second row follow trivially, by $d \psi^\alpha = 0$ and since there is no bosonic $11$-form
on $\mathbb{R}^{9,1}$.
Regarding the equations in the third row:
Using $d \psi^\alpha = 0$ and $d e^a = \overline{\psi} \Gamma^a \psi$ (Def. \ref{SuperMinkowski})
we find that they are equivalently rewritten as follows:
$$
  \begin{array}{crcl}
    &
    d \mu_{{}_{D2}}
    &=& \mu_{{}_{F1}}^{\mathrm{IIA}} \wedge \mu_{{}_{D 0}}
    \\
    \Leftrightarrow
    &
    - i   \left(\overline{\psi} \Gamma_{a_1 a}\psi\right)
      \wedge
      \left(
          \overline{\psi} \Gamma^{a}\psi
      \right)
     & =  &
     + i  \left(  \overline{\psi} \Gamma_{a_1} \Gamma_{10} \psi \right)
        \wedge
     \overline{\psi} \Gamma_{10} \psi
     \\
     \\
    &
    d \mu_{{}_{D4}}
     & = &
    \mu_{{}_{F1}}^{\mathrm{IIA}} \wedge \mu_{{}_{D2}}
    \\
    \Leftrightarrow
    &
    - \tfrac{1}{3!}
     \left(
       \overline{\psi} \Gamma_{[a_1 a_2 a_3] a} \Gamma_{10} \psi
     \right)
     \left(
       \overline{\psi}
         \Gamma^a
       \psi
     \right)
     & = &
     -\tfrac{1}{2}
     \left(
       \overline{\psi} \Gamma_{[a_1 a_2} \psi
     \right)
     \left(\overline{\psi}\Gamma_{a_3]} \Gamma_{10}\psi\right)
     \\
     \\
     &
     d \mu_{{}_{D6}} &=& \mu_{{}_{F1}}^{\mathrm{IIA}}\wedge \mu_{{}_{D4}}
     \\
     \Leftrightarrow &
     - \tfrac{i}{5!} \left(\overline{\psi}\Gamma_{[a_1 \cdots a_5] a} \psi \right)
      \left(\overline{\psi}\Gamma^a \psi\right)
      &=&
      \tfrac{i}{4!}
        \left(
        \overline{\psi}
          \Gamma_{[a_1 \cdots a_4}
          \Gamma_{10}
        \psi
        \right)
        \left(
        \overline{\psi}
          \Gamma_{a_5]}
          \Gamma_{10}
        \psi
        \right)
     \\
     \\
     & d \mu_{{}_{D8}} & = & \mu_{{}_{F1}}^{\mathrm{IIA}} \wedge \mu_{{}_{D6}}
     \\
     \Leftrightarrow &
     -\tfrac{1}{7!}
       \left(
       \overline{\psi}
         \Gamma_{[a_1\cdots a_7] a} \Gamma_{10}
       \psi
       \right)
       \left(
       \overline{\psi}
         \Gamma^a
       \psi
       \right)
       & = &
       -\tfrac{1}{6!}
       \left(
       \overline{\psi}
         \Gamma_{[a_1 \cdots a_6}
       \psi
       \right)
       \left(
       \overline{\psi}
         \Gamma_{a_7]}
         \Gamma_{10}
       \psi
       \right)\;.
  \end{array}
$$
That these conditions hold may be checked to be equivalent to the statement of \cite[expressions (6.8) with coefficients as found above (6.9)]{CAIB00}.

Alternatively, our main theorem \ref{TheTDualityIsoOnLInfinityCocycles} below implies that the
F1/Dp-cochains for type IIA are cocycles precisely if those for type IIB are, which we state below as
def. \ref{IIBBraneCEElements}. This implies that the above Fierz identities in type IIA hold precisely
if those in type IIB hold. The latter has been checked independently
in \cite[section 2]{IIBAlgebra}, see prop. \ref{JointCocycleforIIB} below.

\end{proof}

We now consider a similar construction for the type IIB theory.

\begin{defn}\label{IIBBraneCEElements}
  Define the following elements in the Chevalley-Eilenberg algebra of the
  type IIB super-Minkowski spacetime $\mathbb{R}^{9,1\vert \mathbf{16}+\mathbf{16}}$ (Def. \ref{def:SuperMinkowskiIn10dAnd9d}):
  $$
    \begin{aligned}
    c_2^{\mathrm{IIB}}
      &:=
      \overline{\psi} \Gamma_9^{\mathrm{IIB}} \psi = \overline{\psi} \Gamma^9_B \psi\;,
      \\
    \mu_{F1}^{\mathrm{IIB}} &:= i \left(\overline{\psi} \Gamma_a^{\mathrm{IIB}} \Gamma_{10} \psi \right) \wedge e^a
     \\
      \mu_{{}_{D1}}
        &:=
      i \left(\overline{\psi} \Gamma_a^{\mathrm{IIB}} \Gamma_9 \psi\right) \wedge e^{a}\;,
      \\
    \mu_{{}_{D3}}
      &:=
    \tfrac{1}{3!}
    \left(
      \overline{\psi}
        \Gamma^{\mathrm{IIB}}_{a_1 \cdots a_3} (\Gamma_9 \Gamma_{10})
      \psi
    \right)
    \wedge
    e^{a_1} \wedge \cdots \wedge e^{a_3}\;,
    \\
    \mu_{{}_{D5}}
      &:=
     \tfrac{i}{5!}
       \left(
       \overline{\psi}
         \Gamma^{\mathrm{IIB}}_{a_1 \cdots a_5} \Gamma_9
       \psi
      \right)
      \wedge
      e^{a_1}\wedge \cdots \wedge e^{a_5}\;,
      \\
      \mu_{{}_{D7}}
        & =
              \tfrac{1}{7!}
        \left(
       \overline{\psi}
         \Gamma^{\mathrm{IIB}}_{a_1 \cdots a_7} (\Gamma_9 \Gamma_{10})
       \psi
       \right)
       \wedge
       e^{a_1}
       \wedge
         \cdots
       \wedge
       e^{a_7}\;,
     \\
     \mu_{{}_{D9}} & =
        \tfrac{i}{9!}
    \left(
    \overline{\psi}
      \Gamma^{\mathrm{IIB}}_{a_1\cdots a_9}\Gamma_9
    \psi
    \right)
    \wedge
    e^{a_1} \wedge \cdots \wedge e^{a_9}\;.
    \end{aligned}
  $$
  \label{IIB-cocycles}
\end{defn}

\begin{prop}\label{JointCocycleforIIB}
  The elements in Def. \ref{IIB-cocycles} constitute an $L_\infty$-cocycle in IIA super-Minkowski
  spacetime with coefficients in the model for twisted $K^1$ from Def. \ref{RationalTwistedku}:
  $$
    \mathbb{R}^{9,1\vert \mathbf{16}+\mathbf{16}}
      \longrightarrow
    \mathfrak{l}( \Sigma \mathrm{KU}/BU(1) )\;.
  $$
\end{prop}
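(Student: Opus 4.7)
The plan is to verify that the assignment indicated extends to a super $L_\infty$-homomorphism, equivalently (by Definition \ref{superLInfinityAlgebra} and Remark \ref{striuctureInLInfinity}) to a dg-algebra homomorphism from $\mathrm{CE}(\mathfrak{l}(\Sigma\mathrm{KU}/BU(1)))$ to $\mathrm{CE}(\mathbb{R}^{9,1\vert\mathbf{16}+\mathbf{16}})$ given on generators by
$$
  h_3 \longmapsto \mu^{\mathrm{IIB}}_{F1}\;, \qquad
  \omega_{2p+3} \longmapsto \mu_{D(2p+1)} \quad (p=0,1,2,3,4)\;, \qquad
  \omega_1 \longmapsto 0\;.
$$
Since the source is freely generated as a graded-commutative algebra, this reduces to two tasks: \emph{(i)} checking that each image is a well-defined element of $\mathrm{CE}(\mathbb{R}^{9,1\vert\mathbf{16}+\mathbf{16}})$ of the correct bidegree; and \emph{(ii)} verifying the Sullivan-type differential relations
$$
  d\mu^{\mathrm{IIB}}_{F1}=0\;,\qquad d\mu_{D1}=0\;,\qquad d\mu_{D(2p+1)} = \mu^{\mathrm{IIB}}_{F1}\wedge \mu_{D(2p-1)} \quad (p=1,2,3,4)\;.
$$

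Task \emph{(i)} amounts to confirming symmetry under spinor exchange of each bispinor pairing $\overline{\psi}\,\Gamma^{\mathrm{IIB}}_{a_1\cdots a_k}\Gamma_{\ast}\,\psi$ appearing in Definition \ref{IIBBraneCEElements}, where $\Gamma_{\ast}\in\{\mathrm{id},\Gamma_9,\Gamma_{10},\Gamma_9\Gamma_{10}\}$. Here one invokes Remark \ref{The12GammaMatricesDoNotFormACliffordAlgebra}: although the matrices $\{\Gamma^{\mathrm{IIB}}_a\}$ do not form a Clifford algebra, for each odd $k$ the decorated pairing above decomposes into a sum of ordinary Majorana spinor-to-antisymmetric-tensor pairings on the two $\mathbf{16}$ summands, for which the required symmetry is the basic Majorana-spinor fact already used in Proposition \ref{IIBAsExtension}. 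For task \emph{(ii)}, the vanishing $d\mu^{\mathrm{IIB}}_{F1}=0$ follows immediately from $d\psi^\alpha=0$ and $de^a=\overline{\psi}\Gamma^a\psi$ together with the standard Majorana Fierz identity $(\overline{\psi}\wedge\Gamma^{a}\Gamma_{10}\,\psi)\wedge(\overline{\psi}\wedge\Gamma_a\,\psi)=0$; the remaining twisted relations have the schematic form
$$
  \bigl(\overline{\psi}\,\Gamma^{\mathrm{IIB}}_{[a_1\cdots a_{2p+1}\,a}\Gamma_{\ast}\psi\bigr)\bigl(\overline{\psi}\,\Gamma^a\psi\bigr)
  \;\propto\;
  \bigl(\overline{\psi}\,\Gamma^{\mathrm{IIB}}_{[a_1}\Gamma_{10}\psi\bigr)\bigl(\overline{\psi}\,\Gamma^{\mathrm{IIB}}_{a_2\cdots a_{2p+1}]}\Gamma_{\ast}\psi\bigr)
$$
(wedged with $(2p)$ vielbeine), which are quartic Fierz identities on the $\mathbf{16}+\mathbf{16}$ IIB spinor representation, structurally analogous to the Fierz identity \cite[(3.27a)]{CDF} used in Proposition \ref{TheMBraneS4Cocycle}.

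The principal obstacle is the verification of these Fierz relations. Rather than grind through the expansion term by term, my plan is to invoke \cite[section 2]{IIBAlgebra}, where precisely the required identities are worked out in 10d type IIB and shown to yield exactly the twisted-cocycle relations above. As an independent consistency check --- which I would include as a short remark following the proof --- the main Theorem \ref{TheTDualityIsoOnLInfinityCocycles} below exhibits an $L_\infty$-isomorphism between the descended IIA and IIB coefficient algebras (hinging on the elementary relation $\Gamma_9 = i\,\Gamma_9^{\mathrm{IIB}}\Gamma_{10}$ from Remark \ref{SpinorToVectorPairingIIB}) that transports the IIA cocycle of Proposition \ref{IIACocyclesCoeffsInKU} to the IIB cocycle claimed here; thus the two cocycle conditions are logically equivalent, and proving either one (via the Fierz computations of \cite{CAIB00} in IIA or \cite{IIBAlgebra} in IIB) automatically yields the other.
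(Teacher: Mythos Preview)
Your proposal is correct and follows essentially the same approach as the paper's own proof: both rely on \cite[section 2]{IIBAlgebra} for the direct Fierz-identity verification, and both observe that the statement is equivalently a consequence of Theorem \ref{TheTDualityIsoOnLInfinityCocycles} transporting the IIA cocycle of Proposition \ref{IIACocyclesCoeffsInKU} across the T-duality isomorphism. Your version is simply more explicit about the intermediate dg-algebra and bidegree checks, which the paper leaves implicit.
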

\begin{proof}
  By matching Clifford algebra conventions via remark \ref{IIBCliffordGenerators},
  one finds that this is the statement in \cite[section 2]{IIBAlgebra}.
  But we may also re-derive this as a consequence of Theorem \ref{TheTDualityIsoOnLInfinityCocycles} below, which
  says that, under the dimensional reduction isomorphism from Theorem \ref{ddReduction},
  the IIA cocycles of Prop. \ref{IIACocyclesCoeffsInKU} are sent to the IIB elements from Def. \ref{IIB-cocycles}
  by the $L_\infty$-isomorphism of Prop. \ref{CoefficientsIsoOnTDuality}. Since $L_\infty$-homomorphisms
  preserve cocycles, the above claim follows via Theorem \ref{ddReduction} and Theorem \ref{TheTDualityIsoOnLInfinityCocycles}
  from Proposition \ref{IIACocyclesCoeffsInKU}.
\end{proof}

In summary, Proposition \ref{IIACocyclesCoeffsInKU} and Proposition \ref{JointCocycleforIIB}
say that the homotopical descent of the cocycles for the WZW-terms of
the super $p$-pranes from extended super-Minowski spacetime down to the actual
super-Minkowski spacetime is of the following form:

\vspace{.1cm}

\begin{center}
\begin{tabular}{|c|c|c|}
  \hline
  String theory & Unified brane cocycles & Rational image of
  \\
  \hline
  type IIA & $ \xymatrix{\mathbb{R}^{9,1\vert \mathbf{16} + \overline{\mathbf{16}}}  \ar[rr]^{\mu^{\mathrm{IIA}}_{{}_{F1/Dp}}} && \mathfrak{l}(\mathrm{KU}/BU(1))}$ & twisted $K^0$-theory
  \\
  \hline
  type IIB & $\xymatrix{\mathbb{R}^{9,1\vert \mathbf{16}+ {\mathbf{16}}}
  \ar[rr]^{\mu^{\mathrm{IIB}}_{{}_{F1/Dp}}}
    && \mathfrak{l}(\Sigma \mathrm{KU}/BU(1))}$ & twisted $K^1$-theory
  \\
  \hline
\end{tabular}
\end{center}

\vspace{.1cm}

Notice that in this article, for ease of terminology we say ``brane charge'', for
the force field (``flux'') that the given brane feels, sourced by the (``magnetic'') background charge.
For D-branes these
are the \emph{RR-field strengths}, while in the literature it is usual to say ``D-brane charge''
for the (``electric'') charge carried by the D-brane itself. That the classification of the RR-fields
is in $K^0$/$K^1$ for type II A/B was first argued in \cite[p. 6]{MooreWitten00} for the untwisted case.
An explicit extension to the twisted case is indicated in
\cite{BouwknegtEvslinMathai04} \cite{MS}, which corresponds to the fields found above.

%%%%%%%%%%%%%%%%%
\section{Super $L_\infty$-algebraic T-duality}
\label{TDuality}
%%%%%%%%%%%%%%

The goal of this section is to describe how T-duality appears as an isomorphism
between the F1/D$p$-brane $L_\infty$-cocycles on the type IIA and type IIB
supersymmetry super Lie algebra in 10d, after double dimensional
reduction, in the sense of Theorem \ref{ddReduction}, down to 9d.

\medskip
We have seen (Sec. \ref{SectionDoubleDimensionalReduction}) how the double
dimensional reduction crucially involves cyclification. We will apply this
for the corresponding spectra, i.e. rational twisted K-theory, to connect to
the cocycles that we have just encountered in Sec. \ref{SecBraneCocycles}.

\begin{prop}
    \label{CoefficientsIsoOnTDuality}
  \label{cyclofku/BU1}
  The cyclification $\mathfrak{L}\mathfrak{l}(\mathrm{KU}/BU(1))/\mathbb{R}$
  (Def. \ref{def:cyclification})
  of $\mathfrak{l}(\mathrm{KU}/BU(1))$
  (Def. \ref{RationalTwistedku})
  has CE-algebra
  $$
    \mathrm{CE}(\mathfrak{L}\mathfrak{l}(\mathrm{KU}/BU(1))/\mathbb{R})
    =
    \left\{
       \begin{array}{l}
         d c_2 = 0\,,\;\;\;\;d \tilde c_2 = 0
         \\
         d h_3 = - c_2 \wedge \tilde c_2
         \\
         d \omega_{2p+2} = h_3 \wedge \omega_{2p} + c_2 \wedge \omega_{2p+1}
         \\
         d \omega_{2p+1} = h_3 \wedge \omega_{2p-1} + \tilde c_2 \wedge \omega_{2p}
       \end{array}
    \right\}\;.
  $$
  The cyclification $\mathfrak{L}\mathfrak{l}(\Sigma\mathrm{KU}/BU(1))/\mathbb{R}$
  of $\mathfrak{l}(\Sigma\mathrm{KU}/BU(1))$ %from Def. \ref{RationalTristedku}
  has CE-algebra
  $$
    \mathrm{CE}(\mathfrak{L}\mathfrak{l}(\Sigma\mathrm{KU}/BU(1))/\mathbb{R})
    =
    \left\{
       \begin{array}{l}
         d c_2 = 0 \,, \;\;\;\; d \tilde c_2 = 0
         \\
         d h_3 = - c_2 \wedge \tilde c_2
         \\
         d \omega_{2p+2} = h_3 \wedge \omega_{2p} + \tilde c_2 \wedge \omega_{2p+1}
         \\
         d \omega_{2p+1} = h_3 \wedge \omega_{2p-1} + c_2 \wedge \omega_{2p}
       \end{array}
    \right\}\;.
  $$
  Hence there is an $L_\infty$-isomorphism of the form
  $$
    \phi_T
     \;:\;
    \xymatrix{
    \mathfrak{l}( \mathcal{L}(\mathrm{KU}/BU(1))/S^1 )
      \ar[rr]^-{\phi_T}_-\simeq
      &&
    \mathfrak{l}( \mathcal{L}(\Sigma\mathrm{KU}/BU(1))/S^1 )
    }
  $$
  relating the cyclifications of the rational twisted KU-coefficients,
  %from Prop. \ref{cyclofku/BU1},
  which is given by
  $$
         c_2 \longleftrightarrow \tilde c_2\;, \qquad
         h_3 \longmapsto h_3\;, \qquad
         \omega_{p} \longmapsto \omega_p
         \;.
  $$
\end{prop}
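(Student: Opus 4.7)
The plan is to compute both cyclifications explicitly by unpacking Definition \ref{def:cyclification} applied to the CE-algebras of Definition \ref{RationalTwistedku}, and then to exhibit the claimed $L_\infty$-isomorphism as the swap of two degree-2 generators.

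First I would unpack the cyclification of $\mathfrak{l}(\mathrm{KU}/BU(1))$. Its CE-algebra has unshifted generators $h_3$ (degree 3) and $\{\omega_{2p}\}_{p \in \mathbb{Z}}$ (even degree), so by Definition \ref{def:cyclification} the cyclification is freely generated by these together with their shifts $sh_3$ (degree 2) and $s\omega_{2p}$ (degree $2p-1$), plus one new closed generator in degree 2 which I call $c_2$. After the relabellings $\tilde c_2 := -sh_3$ and $\omega_{2p+1} := s\omega_{2p+2}$, the differential structure follows from the cyclification formulas $dv = d_{\mathfrak{h}} v + c_2 \wedge sv$ on unshifted generators and $d(sv) = -s(d_{\mathfrak{h}} v)$ on shifted ones, together with the Leibniz rule for the odd derivation $s$ (and $s^2 = 0$): one obtains $dh_3 = -c_2 \wedge \tilde c_2$, $d\omega_{2p+2} = h_3 \wedge \omega_{2p} + c_2 \wedge \omega_{2p+1}$, and $d\omega_{2p+1} = -s(h_3 \wedge \omega_{2p}) = h_3 \wedge \omega_{2p-1} + \tilde c_2 \wedge \omega_{2p}$. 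The computation for $\mathfrak{l}(\Sigma\mathrm{KU}/BU(1))$ is strictly parallel: now the shifts of the odd-degree original generators $\omega_{2p+1}$ yield even-degree generators $\omega_{2p} := s\omega_{2p+1}$, and the same Leibniz calculation gives the stated differential with the roles of even- and odd-indexed $\omega$'s interchanged.

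With both CE-algebras in hand, I observe that they differ only by swapping the two distinguished degree-2 generators $c_2$ (the cyclification parameter) and $\tilde c_2$ (the shift of $h_3$): the relation that in the first algebra controls the unshifted generator $\omega_{2p+2}$ becomes in the second algebra the relation for the shifted generator of the same degree, and symmetrically for odd indices. The isomorphism $\phi_T$ is therefore the graded algebra automorphism sending $c_2 \leftrightarrow \tilde c_2$, $h_3 \mapsto h_3$, and $\omega_p \mapsto \omega_p$. Compatibility with the differentials is then checked generator by generator; this is immediate since $c_2$ and $\tilde c_2$ are both even and hence commute, so the sign of $-c_2 \wedge \tilde c_2 = -\tilde c_2 \wedge c_2$ is preserved under the swap.

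The main bookkeeping obstacle I expect is keeping the sign conventions of the shift derivation $s$ aligned with the signs displayed in the statement, in particular in the terms $-c_2 \wedge \tilde c_2$ and $\tilde c_2 \wedge \omega_{2p}$; this is what dictates whether $\tilde c_2$ is identified with $sh_3$ or $-sh_3$ (and similarly for the shifted $\omega$'s). Once those conventions are pinned down at the outset, everything else is a direct application of the cyclification definition and a routine check on generators.
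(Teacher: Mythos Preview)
Your proposal is correct and follows essentially the same approach as the paper: both unpack Definition \ref{def:cyclification} against Definition \ref{RationalTwistedku}, make the same relabellings $\tilde c_2 := -s h_3$ and $\omega_{2p-1} := s\omega_{2p}$, and then read off the stated differential relations (the paper does this in two stages, first the free loop algebra and then the cyclification, whereas you do it in one pass, but this is cosmetic). Your explicit remark that the isomorphism $\phi_T$ is well-defined because $c_2$ and $\tilde c_2$ commute is a detail the paper leaves implicit.
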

\begin{proof}
By Definition \ref{def:cyclification}, as a polynomial algebra the CE-algebra of $\mathfrak{L}\mathfrak{l}(\mathrm{KU}/BU(1))$ is obtained from the CE-algebra of $\mathfrak{l}(\mathrm{KU}/BU(1))$ by adding a shifted copy of each generator. We denote by $\omega_{2p-1}$ the shifted copy of $\omega_{2p}$ and by $-\tilde{c}_2$ the shifted copy of $h_3$. The differential is then defined by
\[
d\omega_{2p+2}=h_3 \wedge \omega_{2p}\;, \qquad d\omega_{2p+1}=h_3 \wedge \omega_{2p-1} + \tilde{c}_2\wedge \omega_{2p}\;, \qquad dh_3=0,\qquad d\tilde{c}_2=0\;.
\]
Next, again by Definition \ref{def:cyclification}, the CE-algebra of $\mathfrak{L}\mathfrak{l}(\mathrm{KU}/BU(1))/\mathbb{R}$
 is obtained by adding a further degree 2 generator $c_2$ and defining the differential as
\begin{align*}
\begin{array}{ll}
d\omega_{2p+2}= h_3 \wedge \omega_{2p} + c_2\wedge \omega_{2p+1}\;, & \qquad
~ d\omega_{2p+1}= h_3 \wedge \omega_{2p-1} + \tilde{c}_2\wedge \omega_{2p}\;,
\\
\quad ~~dc_2=0\;,  \qquad d\tilde{c}_2=0\;,  &
 \qquad \quad ~~dh_3=  - c_2\wedge \tilde{c}_2\;.
  \end{array}
\end{align*}
The proof for $\mathfrak{L}\mathfrak{l}(\Sigma\mathrm{KU}/BU(1))/\mathbb{R}$ is completely analogous.
\end{proof}

\begin{example}\label{ReductionOfIIACocyclesTo9d}
  By Prop. \ref{IIACocyclesCoeffsInKU},
  the combined F1/D$p$-brane cocycles on type IIA super-Minkowski spacetime
  constitute an
  $L_\infty$-homomorphism (corresponding to rational twisted $K^0$)
  of the form
  $$
    \mu_{{}_{F1/Dp}}^{\mathrm{IIA}}
      \;:\;
    \xymatrix{
      \mathbb{R}^{9,1\vert \mathbf{16} + \overline{\mathbf{16}}}
      \ar[rrr]^-{\left\{ \mu_{{}_{F_1}}^{\mathrm{IIA}}, ~\{ \mu_{{}_{D2p}} \} \right\} }
      &&&
      \mathfrak{l}(\mathrm{KU}/BU(1))
  }
  $$
  with the coefficient $L_\infty$-algebra on the right from Def. \ref{RationalTwistedku}.
  By Prop. \ref{IIBAsExtension} and Theorem \ref{ddReduction} this is naturally identified with a
  dimensionally reduced cocycle
  of the form
  $$
    \mathfrak{L}(\mu_{{}_{F1/Dp}}^{\mathrm{IIA}})/\mathbb{R}
     \;\;:\;\;
    \xymatrix{
       \mathbb{R}^{8,1\vert \mathbf{16}+ \mathbf{16}}
       \ar[drrr]_{c_2^{\mathrm{IIA}}}
       \ar[rrrrr]^-{
         \left\{
          {\tiny
            \begin{array}{ll}
              \mu_{{}_{F_1}}^{\mathrm{IIA}}, ~\left\{\mu_{{}_{D2p}}\right\}
              \\
              (\pi_9^{\mathrm{IIA}})_\ast \mu_{{}_{F_1}}^{\mathrm{IIA}}, ~
              \left\{(\pi_9^{\mathrm{IIA}})_\ast \mu_{{}_{D2p}}\right\}
              \\
              c_2^{\mathrm{IIA}}
            \end{array}
          }
         \right\}
       }
       &&&&&
       \mathfrak{L}\big(\mathfrak{l}(\mathrm{KU}/BU(1))\big)/\mathbb{R}
       \ar[dll]^-{\omega_2}
       \\
       &&&
       b \mathbb{R}
    }
  $$
  with coefficients now given by Prop. \ref{cyclofku/BU1},
  where in braces on top we see the original cocycles without the pieces containing $e^9$, below that we
  see the $e^9$-components, and in the last line the cocycle that classifies the IIA extension.
Similarly,
  the combined F1/D$p$-brane cocycles on type IIB super-Minkowski spacetime
  constitute an
  $L_\infty$-homomorphism (corresponding to rational twisted $K^1$)
  of the form
  $$
    \mu_{{}_{F1/Dp}}^{\mathrm{IIB}}
      \;:\;
    \xymatrix{
      \mathbb{R}^{9,1\vert \mathbf{16} + {\mathbf{16}}}
      \ar[rrr]^-{\left\{ \mu_{{}_{F_1}}^{\mathrm{IIB}}, ~\{\mu_{{}_{D(2p+1)}}\} \right\} }
      &&&
      \mathfrak{l}(\mathrm{KU}/BU(1))
  }
  $$
  and by Prop. \ref{IIBAsExtension} and Theorem \ref{ddReduction} this is naturally identified with a
  dimensionally reduced cocycle
  of the form
  $$
    \mathfrak{L}(\mu_{{}_{F1/Dp}}^{\mathrm{IIB}})/\mathbb{R}
     \;\;:\;\;
    \xymatrix{
       \mathbb{R}^{8,1\vert \mathbf{16}+ \mathbf{16}}
       \ar[drrr]_{c_2^{\mathrm{IIB}}}
       \ar[rrrrr]^-{\tiny
         \left\{
          {
            \begin{array}{ll}
              \mu_{{}_{F_1}}^{\mathrm{IIB}}, ~\{\mu_{{}_{D(2p+1)}}\}
              \\
              (\pi_9^{\mathrm{IIB}})_\ast \mu_{{}_{F_1}}^{\mathrm{IIB}}, ~
              \left\{(\pi_9^{\mathrm{IIB}})_\ast \mu_{{}_{D(2p+1)}}\right\}
              \\
              c_2^{\mathrm{IIB}}
            \end{array}
          }
         \right\}
       }
       &&&&&
       \mathfrak{L}\big(\mathfrak{l}(\Sigma\mathrm{KU}/BU(1))\big)/\mathbb{R}
       \ar[dll]^-{\omega_2}
       \\
       &&&
       b \mathbb{R}
    }
    \,.
  $$
\end{example}

Now for the original F1/D$p$ cocycles
$\mu_{{}_{F_1/Dp}}^{\mathrm{IIA}}$ and $\mu_{{}_{F_1/Dp}}^{\mathrm{IIB}}$
from \cite[Sec. 4]{FSS16} it does not make sense to ask whether they are
equivalent, since their domains are not. However, their double dimensional reductions
in Example \ref{ReductionOfIIACocyclesTo9d},
to which by Theorem \ref{ddReduction} they bijectively correspond,
do have the same domain
$\mathbb{R}^{8,1\vert \mathbf{16}+ \mathbf{16}}$. This is ultimately due to the fact that the two inequivalent
chiral $\mathrm{Spin}(9,1)$-representations become isomorphic as $\mathrm{Spin}(8,1)$-representations
-- see Remark \ref{SpinRepsInDimensions11And10And9}.
Hence for these it makes sense to ask whether they are equivalent $L_\infty$-homomorphisms.
We now establish that indeed they are:

\begin{theorem}
  \label{TheTDualityIsoOnLInfinityCocycles}
  The $L_\infty$-algebra isomorphism of Proposition \ref{CoefficientsIsoOnTDuality}
  takes the dimensionally reduced type IIA F1/D$p$-cocycle $\mathfrak{L}(\mu_{{}_{F1/Dp}}^{\mathrm{IIA}})/\mathbb{R}$
  of Example \ref{ReductionOfIIACocyclesTo9d}
  to the dimensionally reduced IIB cocycle $\mathfrak{L}(\mu_{{}_{F1/Dp}}^{\mathrm{IIB}})/\mathbb{R}$, making the following diagram
  of super $L_\infty$-algebras commute:
$$
\hspace{-3cm}
  \xymatrix@C=9em{
    & & b \mathbb{R}
    \\
    & && \mathfrak{L}\mathfrak{l}(\Sigma\mathrm{KU}/ BU(1))/\mathbb{R}
    \ar[ul]_-{\omega_2}
    \\
    &
    \mathbb{R}^{8,1\vert \mathbf{16} + \mathbf{16}}
    \ar[drr]|{~
          \mathfrak{L}(\mu_{{}_{F1/Dp}}^{\mathrm{IIA}})/\mathbb{R}
    }
    \ar[urr]|{
          \mathfrak{L}(\mu_{{}_{F1/Dp}}^{\mathrm{IIB}})/\mathbb{R}~
    }
    \ar[ddr]|{c_2^{\mathrm{IIA}}}
    \ar[uur]|{ c_2^{\mathrm{IIB}} }
    &&
    \\
    &
    && ~ \mathfrak{L}\mathfrak{l}(\mathrm{KU}/B U(1))/\mathbb{R}\;.
    \ar@{->}[uu]^\simeq_{\phi_T}
    \ar[dl]^-{\omega_2}
    \\
    &
    & b \mathbb{R}
  }
$$
\end{theorem}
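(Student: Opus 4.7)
The plan is to verify the commutativity of the central square of the displayed diagram by checking, generator by generator, that the dg-algebra morphisms $(\mathfrak{L}(\mu^{\mathrm{IIA}}_{F1/Dp})/\mathbb{R})^\ast \circ \phi_T^\ast$ and $(\mathfrak{L}(\mu^{\mathrm{IIB}}_{F1/Dp})/\mathbb{R})^\ast$ out of $\mathrm{CE}(\mathfrak{L}\mathfrak{l}(\Sigma\mathrm{KU}/BU(1))/\mathbb{R})$ into $\mathrm{CE}(\mathbb{R}^{8,1\vert \mathbf{16}+\mathbf{16}})$ coincide. The two outer triangles, relating each $\omega_2$-projection to the corresponding $c_2^{\mathrm{II}\ast}$, are automatic: by Theorem \ref{ddReduction} each reduced cocycle is an $L_\infty$-morphism over $b\mathbb{R}$ sending the cyclification generator $c_2$ to the extension 2-cocycle of Proposition \ref{IIBAsExtension}. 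Moreover, differential compatibility is automatic, since $\phi_T$ is a genuine $L_\infty$-isomorphism by Proposition \ref{CoefficientsIsoOnTDuality} and both reduced maps are $L_\infty$-morphisms by Propositions \ref{IIACocyclesCoeffsInKU} and \ref{JointCocycleforIIB}; so equality on degree-$1$ generators of the underlying graded-commutative algebras suffices.

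By Theorem \ref{ddReduction} an unshifted generator of $\mathfrak{l}(\Sigma\mathrm{KU}/BU(1))$ is sent to the base component $\beta$ of the corresponding 10d cocycle in the decomposition $\alpha = \beta - e^9 \wedge \tilde\alpha$ of equation (\ref{fiberintegration}), its shift is sent to the fiber integration $\tilde\alpha = (\pi_9)_\ast\alpha$, and the cyclification generator $c_2$ is sent to the 2-cocycle classifying the relevant 10d extension; the IIA side is analogous. Since $\phi_T^\ast$ preserves $h_3$ and every $\omega_p$ while swapping $c_2 \leftrightarrow \tilde c_2$, the diagram commutes iff the following four families of identities hold in $\mathrm{CE}(\mathbb{R}^{8,1\vert \mathbf{16}+\mathbf{16}})$:
(a) $\mu_{F1}^{\mathrm{IIA}}\vert_{\mathrm{base}} = \mu_{F1}^{\mathrm{IIB}}\vert_{\mathrm{base}}$;
(b) $-(\pi_9^{\mathrm{IIA}})_\ast \mu_{F1}^{\mathrm{IIA}} = c_2^{\mathrm{IIB}}$ and $-(\pi_9^{\mathrm{IIB}})_\ast \mu_{F1}^{\mathrm{IIB}} = c_2^{\mathrm{IIA}}$;
(c) for each even $q$, $\mu_{D(q-2)}^{\mathrm{IIA}}\vert_{\mathrm{base}} = (\pi_9^{\mathrm{IIB}})_\ast \mu_{D(q-1)}^{\mathrm{IIB}}$;
(d) for each odd $q$, $\mu_{D(q-2)}^{\mathrm{IIB}}\vert_{\mathrm{base}} = (\pi_9^{\mathrm{IIA}})_\ast \mu_{D(q-1)}^{\mathrm{IIA}}$.

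Each of these identities reduces, using the explicit formulas of Definitions \ref{IIACocycles} and \ref{IIBBraneCEElements}, to a purely Clifford-algebraic statement on the 9d Majorana spinor representation. The crucial input is the identity $\Gamma_9 = i\,\Gamma_9^{\mathrm{IIB}}\Gamma_{10}$ of Remark \ref{SpinorToVectorPairingIIB}(i), together with $(\Gamma_{10})^2 = -I$ and $(\Gamma_9^{\mathrm{IIB}})^2 = I$, which together yield the two exchange rules $\Gamma_9\Gamma_{10} = -i\,\Gamma_9^{\mathrm{IIB}}$ and $\Gamma_9^{\mathrm{IIB}}\Gamma_{10} = -i\,\Gamma_9$. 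For instance, (a) is immediate from $\Gamma_a^{\mathrm{IIB}} = \Gamma_a$ for $a \leq 8$ (Def. \ref{IIBCliffordGenerators}); (b) follows by extracting the $a=9$ term of $\mu_{F1}^{\mathrm{IIA}} = i(\overline\psi\Gamma_a\Gamma_{10}\psi) e^a$ and applying $\Gamma_9\Gamma_{10} = -i\Gamma_9^{\mathrm{IIB}}$ to recover $c_2^{\mathrm{IIB}} = \overline\psi\Gamma_9^{\mathrm{IIB}}\psi$; and the D-brane statements (c) and (d) follow by the same mechanism, using that stripping off $e^9$ from an IIA Clifford bilinear that does (respectively, does not) carry an extra $\Gamma_{10}$ introduces a factor $\Gamma_9\Gamma_{10}$ (respectively $\Gamma_9$), which by the exchange rules matches exactly the IIB bilinear of rank one lower carrying the complementary single-index insertion.

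The main obstacle will be keeping the sign and normalization bookkeeping uniform across all ranks simultaneously. One must track in parallel (i) the sign in the decomposition $\alpha = \beta - e^9 \wedge \tilde\alpha$ of (\ref{fiberintegration}); (ii) the sign from commuting $e^9$ past the rank-$k$ wedge $e^{a_1}\wedge\cdots\wedge e^{a_k}$; (iii) the alternation with rank of the pattern ``IIA with $\Gamma_{10}$ vs. IIA without'' matched against ``IIB with $\Gamma_9$ vs. IIB with $\Gamma_9\Gamma_{10}$''; and (iv) the prefactors $i$ and $1/p!$ in Definitions \ref{IIACocycles}--\ref{IIBBraneCEElements}. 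Once these signs are organized systematically, the four families above collapse to a single uniform Clifford-algebraic computation, and the theorem follows. As a consistency check on the setup, the same computation will also re-derive Proposition \ref{JointCocycleforIIB} from Proposition \ref{IIACocyclesCoeffsInKU}, since the pushforward of an $L_\infty$-cocycle under an $L_\infty$-isomorphism is again a cocycle.
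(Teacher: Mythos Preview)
Your proposal is correct and follows essentially the same route as the paper's own proof: both reduce commutativity of the square to the Clifford-algebraic identities (a)--(d) via the exchange rule $\Gamma_9 = i\,\Gamma_9^{\mathrm{IIB}}\Gamma_{10}$ of Remark \ref{SpinorToVectorPairingIIB}, and both recognize the sign/normalization bookkeeping as the only real labor. The paper carries this out by writing out each D-brane case $\mathrm{D1},\dots,\mathrm{D9}$ explicitly (packaging your (c) and (d) together into the single 10d identity $(\pi_9^{\mathrm{IIA}})_\ast\mu_{D(2p)} - e^9\wedge\mu_{D(2p-2)}\vert_{8+1} = \mu_{D(2p-1)}$), whereas you describe the uniform mechanism and defer the casework.
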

\begin{proof}
  First of all we need to check that
  $$
    \omega_2\big(\phi_T( \mathfrak{L}(\mu_{{}_{F1/Dp}}^{\mathrm{IIA}})/\mathbb{R} )\big) = c_2^{\mathrm{IIB}}
    \,.
  $$
  By Theorem \ref{ddReduction} and Prop. \ref{CoefficientsIsoOnTDuality} the left hand side here is
  the integration over the fiber $-(\pi_9^{\mathrm{IIA}})_\ast(\mu_{{}_{F_1}}^{\mathrm{IIA}})$. Hence we need to check that
  \begin{equation}\label{InfinitesimalTopologicalTDuality}
    -(\pi_9^{\mathrm{IIA}})_\ast(\mu_{{}_{F_1}}^{\mathrm{IIA}})
     =
     c_2^{\mathrm{IIB}}
     \,.
  \end{equation}
  In components this equality says that
  $
    i \, \overline{\psi} \Gamma_9 \Gamma_{10}\psi
      =
    \overline{\psi}\Gamma^{\mathrm{IIB}}_{9} \psi
   $.
  This holds by direct inspection -- see Remark \ref{SpinorToVectorPairingIIB}. Similarly, one has
\begin{equation}\label{InfinitesimalTopologicalTDuality2}
    -(\pi_9^{\mathrm{IIB}})_\ast(\mu_{{}_{F_1}}^{\mathrm{IIB}})
     =
     c_2^{\mathrm{IIA}}
     \,,
  \end{equation}
due to the identity $i \overline{\psi}\Gamma_9^{\mathrm{IIB}} \Gamma_{10}\psi=\overline{\psi}\Gamma_9\psi$.
Moreover, we also have
\[
 \mu_{F1}^{\mathrm{IIA}}\vert_{8+1}= \mu_{F1}^{\mathrm{IIB}}\vert_{8+1}
\]
due to the identity $\Gamma_a^{\mathrm{IIB}}=\Gamma_a$ for $0\leq a\leq 8$.
  In view of this,
  to see that $\mu_{{}_{F1}}^{\mathrm{IIA}}$ from Def. \ref{IIACocycles}
  is sent to $\mu_{{}_{F1}}^{\mathrm{IIB}}$ from Def. \ref{IIBBraneCEElements},
  use that $\phi_T$ swaps $c_2^{\mathrm{IIA}}$ with $c_2^{\mathrm{IIB}}$, while
  keeping the restriction $\mu_{{}_{F1}}|_{8+1}$ intact.
%   We compute:
%  \begin{equation}
%    \label{muF1TDuality}
%    \begin{aligned}
%      \phi_T(\mu_{F1}^A)
%      & =
%      \mu_{F1}^A\vert_{8+1} + \phi_T( e_9 \wedge c_2^{\mathrm{IIB}} )
%      \\
%      & =
%      \mu_{F1}^A\vert_{8+1} + e_9 \wedge c_2^{\mathrm{IIA}}
%      \\
%      & =
%      i
%      \sum_{a = 0}^8
%        \left(\overline{\psi} \Gamma_a\Gamma_{10} \psi\right)\wedge e^a
%         +
%         e_9 \wedge \overline{\psi} \underset{i \Gamma_9^B \Gamma_{10}}{\underbrace{\Gamma_9}} \psi
%         \\
%        & =
%        i
%      \sum_{a = 0}^8
%        \left(\overline{\psi} \Gamma_a\Gamma_{10} \psi\right)\wedge e^a
%         +
%         i
%         \left(\overline{\psi} \Gamma_9^B \Gamma_{10} \psi\right)
%        \\
%        & =
%        i
%        \left(\overline{\psi} \Gamma_a^B \Gamma_{10} \psi \right) \wedge e^a
%        \\
%        &= \mu_{{}_{F1}}^{\mathrm{IIB}}\;.
%    \end{aligned}
%    \,.
%  \end{equation}
 Now we need to check that the D-brane charges are sent to each other.
  Unwinding the definitions, this means that interchanging the components of
  fiber integration $(\pi_9^{\mathrm{IIA}})_\ast(-)$ and restriction $(-)\vert_{8+1}$ then the process in
  expression \eqref{fiberintegration} turns the IIA-brane  elements from Def. \ref{IIACocycles} to the IIB-brane elements from Def. \ref{IIBBraneCEElements}. This is indeed the case, as the following explicit computations show
  (where under the braces we keep using remark \ref{SpinorToVectorPairingIIB}):
    $$
    \hspace{-3.7cm}
  \begin{aligned}
\noindent \fbox{D1} \quad    (\pi_9^{\mathrm{IIA}})_\ast(\mu_{{}_{D2}}) - e^9 \wedge ( \mu_{{}_{D 0}}|_{8+1} )
    & =
    i \sum_{a = 0}^8 \overline{\psi} \Gamma_{a} \Gamma_9 \psi \wedge e^{a}
    -
    \overline{\psi} \underset{= - i \Gamma_9^{\mathrm{IIB}} \Gamma_9}{\underbrace{\Gamma_{10}}} \psi \wedge e^{9}
    \\
    & =
    i \left(\overline{\psi} \Gamma_a^{\mathrm{IIB}} \Gamma_9 \psi\right) \wedge e^{a}
    \\
    & = \mu_{{}_{D1}}\;.
   \end{aligned}
  $$
  $$
  \hspace{-7mm}
    \begin{aligned}
 \fbox{D3}~~   & (\pi_9^{\mathrm{IIA}})_\ast (\mu_{{}_{D4}}) - e^9 \wedge ( \mu_{{}_{D2}}|_{8+1} )=
    \\
    & = \tfrac{1}{3!} \sum_{a_i = 0}^8 \overline{\psi} \Gamma_{a_1 a_2 a_3} \Gamma_9 \Gamma_{10} \psi
          \wedge e^{a_1} \wedge e^{a_2} \wedge e^{a_3}
    - \tfrac{i}{2} \sum_{a_i = 0}^8 \overline{\psi} \Gamma_{a_1 a_2} \psi \wedge e^{a_1} \wedge e^{a_2} \wedge e^{9}
    \\
    & = \tfrac{1}{3!} \sum_{a_i = 0}^8 \overline{\psi} \Gamma_{a_1 a_2 a_3} \Gamma_9 \Gamma_{10} \psi
          \wedge e^{a_1} \wedge e^{a_2} \wedge e^{a_3}
    + \tfrac{1}{3!} 3 \sum_{a_i = 0}^8
      \overline{\psi} \Gamma_{a_1 a_2} \underset{= -i}{\underbrace{\Gamma_9^{\mathrm{IIB}} (\Gamma_9 \Gamma_{10})}} \psi \wedge e^{a_1} \wedge e^{a_2} \wedge e^{9}
    \\
    & =
    \tfrac{1}{3!}
    \left(
      \overline{\psi}
        \Gamma^{\mathrm{IIB}}_{a_1 a_2 a_3} (\Gamma_9 \Gamma_{10})
      \psi
    \right)
    \wedge
    e^{a_1} \wedge e^{a_2} \wedge e^{a_3}
    \\
    & = \mu_{{}_{D3}}\;.
  \end{aligned}
 $$
 $$
   \begin{aligned}
   \fbox{D5}~~
       & (\pi_9)_\ast( \mu_{{}_{D6}} )
         -
       e^9 \wedge ( \mu_{{}_{D4}}|_{8+1} )=
     \\
     &=
     \tfrac{i}{5!}
       \sum_{a_i = 0}^8
       \left(
       \overline{\psi}
         \Gamma_{a_1 \cdots a_5} \Gamma_{9}
       \psi
       \right)
       \wedge e^{a_1}\wedge \cdots \wedge e^{a_5}
       -
       \tfrac{1}{4!}
         \sum_{a_i = 0}^8
         \left(
         \overline{\psi}
           \Gamma_{a_1 \cdots a_4} \Gamma_{10}
         \psi
         \right)
         \wedge e^{a_1}\wedge \cdots\wedge e^{a_4}
         \wedge e^{9}
     \\
     &=
     \tfrac{i}{5!}
       \sum_{a_i = 0}^8
       \left(
       \overline{\psi}
         \Gamma_{a_1 \cdots a_5} \Gamma_{9}
       \psi
       \right)
       \wedge
       e^{a_1}\wedge \cdots \wedge e^{a_5}
       -
       \tfrac{1}{5!}
         5
         \sum_{a_i = 0}^8
         \Big(
         \overline{\psi}
           \Gamma_{a_1 \cdots a_4}
           \underset{= -i \Gamma_9^{\mathrm{IIB}} \Gamma_9}{\underbrace{\Gamma_{10}}}
         \psi
         \Big)
         \wedge e^{a_1}\wedge \cdots\wedge e^{a_4}
         \wedge e^{9}
     \\
     & =
     \tfrac{i}{5!}
       \left(
       \overline{\psi}
         \Gamma^{\mathrm{IIB}}_{a_1 \cdots a_5} \Gamma_9
       \psi
      \right)
      \wedge
      e^{a_1}\wedge \cdots \wedge e^{a_5}
      \\
     & = \mu_{{}_{D5}}\;.
   \end{aligned}
 $$
 $$
   \begin{aligned}
    \fbox{D7}~~
      & (\pi_9)_\ast( \mu_{{}_{D8}} )
       -
       e^9 \wedge ( \mu_{{}_{D6}}|_{8+1} )=
      \\
      & =
      \tfrac{1}{7!}
        \sum_{a_i = 0}^8
        \left(
        \overline{\psi}
          \Gamma_{a_1 \cdots a_7} \Gamma_9 \Gamma_{10}
        \psi
        \right)
        \wedge
        e^{a_1}\wedge \cdots \wedge e^{a_7}
        -
        \tfrac{i}{6!}
        \sum_{a_i = 0}^8
        \left(
        \overline{\psi}
          \Gamma_{a_1 \cdots a_6}
        \psi
        \right)
        \wedge
        e^{a_1} \wedge
          \cdots
        \wedge e^{a_6} \wedge e^9
        \\
        & =
      \tfrac{1}{7!}
        \sum_{a_i = 0}^8
        \left(
        \overline{\psi}
          \Gamma_{a_1 \cdots a_7} \Gamma_9 \Gamma_{10}
        \psi
        \right)
        \wedge
        e^{a_1}\wedge \cdots \wedge e^{a_7}
        +
        \tfrac{1}{7!}
        7
        \sum_{a_i = 0}^8
        \Big(
        \overline{\psi}
          \Gamma_{a_1 \cdots a_6} \underset{= -i}{\underbrace{\Gamma_9^{\mathrm{IIB}} (\Gamma_9 \Gamma_{10})}}
        \psi
        \Big)
        \wedge
        e^{a_1} \wedge
          \cdots
        \wedge e^{a_6} \wedge e^9
      \\
      & =
      \tfrac{1}{7!}
        \left(
       \overline{\psi}
         \Gamma_{a_1 \cdots a_7} (\Gamma_9 \Gamma_{10})
       \psi
       \right)
       \wedge
       e^{a_1}
       \wedge
         \cdots
       \wedge
       e^{a_7}
      \\
    &= \mu_{{}_{D7}}\;.
   \end{aligned}
 $$
 $$
  \begin{aligned}
   \fbox{D9}~~
   & (\pi_9)_\ast( \mu_{{}_{D10}} )
     -
   e^9 \wedge ( \mu_{{}_{D8}}|_{8+1} )=
   \\
   & =
   \tfrac{i}{9!}
   \sum_{a_i = 0}^8
   \big(
   \overline{\psi}
     \Gamma_{a_1\cdots a_{9}} \Gamma_9
   \psi
   \big)
   \wedge
   e^{a_1}\wedge \cdots \wedge e^{a_9}
   -
   \tfrac{1}{8!}
   \Big(
   \overline{\psi}
     \Gamma_{a_1 \cdots a_8} \underset{= -i \Gamma_9^{\mathrm{IIB}} \Gamma_9}{\underbrace{\Gamma_{10}}}
   \psi
   \Big)
   \wedge
   e^{a_1} \wedge \cdots \wedge e^{a_{8}}
   \wedge e^9
   \\
   & =
   \tfrac{i}{9!}
   \sum_{a_i = 0}^8
   \left(
   \overline{\psi}
     \Gamma_{a_1\cdots a_{9}} \Gamma_9
   \psi
   \right)
   \wedge
   e^{a_1}\wedge \cdots \wedge e^{a_9}
   +
   \tfrac{i}{9!}
   9
   \sum_{a_i = 0}^8
   \left(
   \overline{\psi}
     \Gamma_{a_1 \cdots a_8} \Gamma_9^{\mathrm{IIB}} \Gamma_9
   \psi
   \right)
   \wedge
   e^{a_1} \wedge \cdots \wedge e^{a_{8}}
   \wedge e^9
   \\
   & =
   \tfrac{i}{9!}
    \left(
    \overline{\psi}
      \Gamma_{a_1\cdots a_9}\Gamma_9
    \psi
    \right)
    \wedge
    e^{a_1} \wedge \cdots \wedge e^{a_9}
    \\
    & = \mu_{{}_{D9}}\;.
  \end{aligned}
 $$

 \vspace{-5mm}
\end{proof}

\begin{remark}[Topological T-duality I]
\label{NatureOfTheAlgebraicTDualityIso}
{\bf (i)} The interpretation of our super-Minkowski spacetimes as
tangent spaces of spacetime manifolds $X$, and extensions of them as corresponding
fiber bundles $X_{10} \to X_9$ of spacetime manifolds, means that the diagram of super $L_\infty$ algebras
from Proposition \ref{IIBAsExtension}
$$
  \xymatrix{
    \mathbb{R}
     \ar[r]
    &
    \mathbb{R}^{9,1\vert \mathbf{16}+ \overline{\mathbf{16}}}
    \ar[dr]_{\hspace{-5mm}\mathrm{hofib}(c_2^{\mathrm{IIA}})}
    &&
    \mathbb{R}^{9,1\vert \mathbf{16} + \mathbf{16}}
    \ar[dl]^{~~\mathrm{hofib}(c_2^{\mathrm{IIB}})}
    &
    \mathbb{R}
    \ar[l]
    \\
    && \mathbb{R}^{8,1\vert \mathbf{16}+ \mathbf{16}}
    \ar[dl]|{c_2^{\mathrm{IIB}}}
    \ar[dr]|{c_2^{\mathrm{IIA}}}
    \\
    & b\mathbb{R} && b\mathbb{R}\;,
  }
$$
together with 3-cocycles $\mu_{{}_{F1}}^{\mathrm{IIA}/\mathrm{B}}$
globalizes to a diagram of manifolds of the form
$$
  \xymatrix{
    S^1_{{}_A}  \ar[r] & X_{10}^{{}^{\mathrm{IIA}}}
    \ar[dr]_{\pi_9^{{}^{\mathrm{IIA}}}}
    &&
    X_{10}^{{}^{\mathrm{IIB}}}
    \ar[dl]^{\pi_9^{{}^{\mathrm{IIB}}}}
    &
    S^1_{{}_B} \ar[l]
    \\
    && X_9
  }
$$
which carry closed differential super 3-forms $H_3^{A/B} \in \Omega^3_{\mathrm{cl}}( X_{10}^{A/B} )$.

Indeed, the consistency of the Green-Schwarz sigma-model
for the type II superstring on $X_{10}^{A/B}$
requires that the bispinorial component of the super-3-form $H$ here
is constrained to coincide on each tangent spacetime with  our
cocycle $\mu_{F_1}^{\mathrm{IIA/B}}$ (this follows from \cite[equation (2.11)]{BST86}).
In particular, if $H$ happens to have vanishing bosonic component, then
it is entirely fixed by restricting on each tangent space to our cocycle
$\mu_{{}_{F_1}}^{\mathrm{IIA/B}}$ (this is amplified in \cite[equation (2.15)]{BST86}).
Similarly, these circle bundles have first Chern classes whose representing forms
$C_2^{A/B}$ need to have bispinorial super-components that super-tangent-space wise
coincide with the cocycles $c_2^{{}^{\mathrm{IIA}/\mathrm{B}}}$.

\item {\bf (ii)} Hence the globalization of the super-tangent-space wise equivalence
that we see in equations (\ref{InfinitesimalTopologicalTDuality})
and (\ref{InfinitesimalTopologicalTDuality2})
in the proof of Theorem \ref{TheTDualityIsoOnLInfinityCocycles} imposes the global condition
$$
  C_2^{{}^{\mathrm{IIA/B}}} = -(\pi^{{}^{\mathrm{IIB/A}}}_9)_\ast \big(H_3^{{}^{\mathrm{IIB/A}}}\big)
  \,.
$$
This relation is what is used as an axiom for
``topological T-duality'' in \cite[(1.8)]{BouwknegtEvslinMathai04}, see \cite[lemma 2.12, lemma 2.33]{BunkeSchick05}.

\end{remark}

%%%%%%%%%%%%%%%%%%%%%%%%%%%%
\section{T-Correspondence space and Doubled spacetimes}
\label{TFolds}
%%%%%%%%%%%%%%%%%%%%%%%%%%%%%%

Above we considered T-duality as an equivalence of classifying maps (moduli) of
fields. Here we consider the incarnation of this equivalence in terms of the
higher extended super-Minkowski spaces that are classified thereby.

\medskip
With every $L_\infty$-cocycle on some superalgebra $\mathfrak{g}$, we get some $L_\infty$-extension
that it classifies, namely its homotopy fiber $\widehat{\mathfrak{g}} \longrightarrow \mathfrak{g}$,
accoding to Example \ref{homotopyfiberofLinfinityCocycles}.
If $\mathfrak{g}$ is some super-Minkowski super Lie algebra (Sec. \ref{Spinors}), then this extension
is the higher extended super-Minkowski spacetime which may be thought of as containing condensates
of those brane species that the cocycle classifies \cite[Remark 3.11]{FSS13}. In particular
 a 2-cocycle corresponds to a 0-brane and the corresponding extension is just an ordinary central
 extension, hence grows one extra dimension of spacetime, as befits a 0-brane condensate
 \cite[Remark 4.6]{FSS13}.

\medskip
Here we analyze this phenomenon for the cocycles that classify the type II branes
on $9d$ $N=2$ super-Minkowski spacetime, according to Example \ref{ReductionOfIIACocyclesTo9d}.
By Theorem \ref{TheTDualityIsoOnLInfinityCocycles} there is only one such
cocycle, up to equivalence, since the
dimensional reduction $\mathfrak{L}(\mu_{{}_{F1/Dp}}^{\mathrm{IIA}})/\mathbb{R}$
of the IIA branes and the dimensional reduction $\mathfrak{L}(\mu_{{}_{F1/Dp}}^{\mathrm{IIB}})/\mathbb{R}$
of the IIB branes agrees in 9d, by $L_\infty$-algebraic T-duality.
Hence we just write $\mathfrak{L}(\mu_{{}_{F1/Dp}}^{\mathrm{II}})/\mathbb{R}$ for either of them.
Now, as made explicit in the diagram in Theorem \ref{TheTDualityIsoOnLInfinityCocycles}, this
cocycle contains contributions from \emph{two} 0-brane species. One of these is the
D0-brane of type IIA in 10d, descended down to 9d, and the other is the double dimensional
reduction of the type IIB string 3-cocycle from 10d to 9d. Hence the condensation of these \cite[Remark 3.11, 4.6]{FSS13} grows
\emph{two} extra spacetime dimensions. By Theorem \ref{TheTDualityIsoOnLInfinityCocycles}, these
are the infinitesimal version of what in finite T-duality are the two circle
fibers $S^1_{{}_A}$ and $S^1_{{}_B}$ of Remark \ref{NatureOfTheAlgebraicTDualityIso}. Hence in the notation
of that remark, we obtain the fiber product spacetime
$$
  \xymatrix{
    & S^1_{{}_A} \times S^1_{{}_B} \ar@{^{(}->}[r] & X_{10}^{{}^{\mathrm{IIA}}}\times_{X_9} X_{10}^{{}^{\mathrm{IIB}}}
    \ar@{}[dd]|{\mbox{\tiny (pb)}}
    \ar[dl]
    \ar[dr]
    \\
    S^1_{{}_A}  \ar[r] & X_{10}^{{}^{\mathrm{IIA}}}
    \ar[dr]_{\pi_9^{{}^{\mathrm{IIA}}}}
    &&
    X_{10}^{{}^{\mathrm{IIB}}}
    \ar[dl]^{\pi_9^{{}^{\mathrm{IIB}}}}
    &
    S^1_{{}_B}\;, \ar[l]
    \\
    && X_9
  }
$$
which is an $S^1_{{}_A} \times S^1_{{}_B}$-fiber bundle over $X_9$.

\begin{defn} \label{doubledtypeIIspacetime}
  Write
  $$
     \mathbb{R}^{8+(1+1),1\vert 32}
      :=
         \mathbb{R}^{9,1\vert \mathbf{16} +\overline{\mathbf{16}}}
       \underset{\mathbb{R}^{8,1\vert \mathbf{16}+ \mathbf{16}}}{\times}
     \mathbb{R}^{9,1\vert \mathbf{16} + \mathbf{16}}
  $$
   as shorthand for the fiber product of the type IIA super-Minkowski spacetime
  with its IIB version, over their common 9d base, according to Prop. \ref{IIBAsExtension},
hence for the super Lie algebra fitting into the following
  fiber product diagram
  $$
     \xymatrix{
       & \mathbb{R}^{8+(1+1),1\vert 32}
       \ar@{}[dd]|{\mbox{\tiny (pb)}}
       \ar[dl]_{p_{{}_B}}
        \ar[dr]^{p_{{}_A}}
       \\
       \mathbb{R}^{9,1\vert \mathbf{16}+\mathbf{16}}
      \ar[dr]
      &&
       \mathbb{R}^{9,1\vert \mathbf{16} + \overline{\mathbf{16}}}\;.
      \ar[dl]
      \\
     & \mathbb{R}^{8,1\vert \mathbf{16}+ \mathbf{16}}
    }
  $$
\end{defn}

\begin{prop}\label{BunkeStyleTDualityInfinitesimally}
 We have a diagram of super $L_\infty$-algebras of the following form
 $$
   \xymatrix@=1.9em{
     &
     p_{{}_A}^\ast
     \widehat{
       \mathbb{R}^{9,1\vert \mathbf{16} +\overline{\mathbf{16}}}
     }
     \ar@{<--}[rr]_\simeq^{\nu}
     \ar[dl]_{\pi_9^{\mathrm{IIB}}}
     \ar[dr]
     \ar@{}[dd]|{\mbox{\tiny (pb)}}
     &&
     p_{{}_B}^\ast
     \widehat{
       \mathbb{R}^{9,1\vert \mathbf{16} +{\mathbf{16}}}
     }
     \ar[dl]
     \ar[dr]^{\pi_9^{\mathrm{IIA}}}
     \ar@{}[dd]|{\mbox{\tiny (pb)}}
          \\
     \widehat{\mathbb{R}^{9,1\vert \mathbf{16} +\overline{\mathbf{16}}}}
     \ar[dr]|{~~~~~\mathrm{hofib}(\mu_{{}_{F1}}^{\mathrm{IIA}})}
     &&
    \mathbb{R}^{8+(1+1),1\vert 32}
     \ar[dl]|{p_{{}_A}}
     \ar[dr]|{~p_{{}_B}}
     \ar@{}[dd]|{\mbox{\tiny (pb)}}
     &&
     \widehat{
       \mathbb{R}^{9,1\vert \mathbf{16} +{\mathbf{16}}}
     }\;,
     \ar[dl]|{\mathrm{hofib}(\mu_{{}_{F1}}^{\mathrm{IIB}}) }
     \\
     &
     \mathbb{R}^{9,1\vert \mathbf{16} +\overline{\mathbf{16}}}
     \ar[dr]|{ \pi_9^{\mathrm{IIA}}}
       &&
     \mathbb{R}^{9,1\vert \mathbf{16} + \mathbf{16}}
     \ar[dl]|{ \pi_9^{\mathrm{IIB}} }
     \\
     & & \mathbb{R}^{8,1\vert \mathbf{16}+ \mathbf{16}}
   }
 $$
where
 \begin{itemize}
    \item
        $\widehat{\mathbb{R}^{9,1\vert \mathbf{16}+ \overline{\mathbf{16}}}}
       \simeq  \mathfrak{string}_{\mathrm{IIA}}$
    is the super Lie 2-algebra extension of type IIA super-Minkowski spacetime by the
    3-cocycle for the type IIA superstring; i.e. the infinitesimal model of the
(super-)gerbe underlying the type IIA B-field;
  \item
        $\widehat{\mathbb{R}^{9,1\vert \mathbf{16}+ {\mathbf{16}}}}
       \simeq \mathfrak{string}_{\mathrm{IIB}}$
    is the super Lie 2-algebra extension of type IIB super-Minkowski spacetime by the
    3-cocycle for the type IIB superstring; i.e. the infinitesimal model of the
(super-)gerbe underlying the type IIB B-field;
 \item
   by slight abuse of notation, $\pi_9^{\mathrm{IIA}/\mathrm{IIB}}$ always denotes the map that projects out the 9th dimension
   of type IIA/IIB, respectively, hence dually, on underlying graded algebras, the canonical map $(-) \to (-)[e_9^{\mathrm{IIA}/\mathrm{IIB}}]$
   induced from adjoining the vielbein generator $e_9^{\mathrm{IIA}/\mathrm{IIB}}$
 \end{itemize}
 such that
 \begin{enumerate}
  \item a  horizonal isomorphism $\nu$ exists as shown, between the pullback of these two
extensions to the correspondence space, making the diagram commute;
 \item
   on the classifying 3-cocycles it is given by the \emph{Poincar{\'e} form}
   $$
     \mathcal{P}
       :=
     e_9^{\mathrm{IIA}} \wedge e_9^{\mathrm{IIB}}
   $$
     as
   $$
     p_B^\ast (\mu_{{}_{F1}}^{\mathrm{IIB}} )- p_A^\ast (\mu_{{}_{F1}}^{\mathrm{IIA}})
      =
     d  \mathcal{P}
     \,.
   $$
\end{enumerate}
\end{prop}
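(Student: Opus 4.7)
The plan is to establish the two claims in reverse order: first I verify the Poincar\'e-form identity by explicit computation on Chevalley--Eilenberg algebras, then I use that identity to construct the isomorphism $\nu$. By Definition~\ref{doubledtypeIIspacetime} together with Proposition~\ref{IIBAsExtension}, the CE algebra $\mathrm{CE}(\mathbb{R}^{8+(1+1),1\vert 32})$ is freely generated by the common 9d vielbeins $\{e^a\}_{a=0}^{8}$, the two extra ``9th-direction'' vielbeins $e_9^{\mathrm{IIA}}$ and $e_9^{\mathrm{IIB}}$, and the 32 spinor generators $\psi$, with $d e_9^{\mathrm{IIA}} = c_2^{\mathrm{IIA}}$ and $d e_9^{\mathrm{IIB}} = c_2^{\mathrm{IIB}}$. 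By Example~\ref{homotopyfiberofLinfinityCocycles}, the two pullback extensions $p_A^\ast\widehat{\mathbb{R}^{9,1\vert\mathbf{16}+\overline{\mathbf{16}}}}$ and $p_B^\ast\widehat{\mathbb{R}^{9,1\vert\mathbf{16}+\mathbf{16}}}$ are then obtained from this CE algebra by adjoining single degree-$(2,\mathrm{even})$ generators $b_2^{\mathrm{IIA}}$ and $b_2^{\mathrm{IIB}}$ with differentials $p_A^\ast\mu_{F1}^{\mathrm{IIA}}$ and $p_B^\ast\mu_{F1}^{\mathrm{IIB}}$ respectively.

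Next I split each index sum in $\mu_{F1}^{\mathrm{IIA}}$ and $\mu_{F1}^{\mathrm{IIB}}$ into the range $a\leq 8$ and the single term $a=9$. The low-index contributions cancel in the difference $p_B^\ast\mu_{F1}^{\mathrm{IIB}} - p_A^\ast\mu_{F1}^{\mathrm{IIA}}$, because $\Gamma_a^{\mathrm{IIA}} = \Gamma_a^{\mathrm{IIB}}$ for $a\leq 8$ by Definition~\ref{IIBCliffordGenerators} and because those vielbeins are identified by the fiber product. The remaining $a=9$ terms collapse, via the relation $\Gamma_9 = i\,\Gamma_9^{\mathrm{IIB}}\Gamma_{10}$ of Remark~\ref{SpinorToVectorPairingIIB} together with $\Gamma_{10}^2=-1$, to $c_2^{\mathrm{IIA}}\wedge e_9^{\mathrm{IIB}} - c_2^{\mathrm{IIB}}\wedge e_9^{\mathrm{IIA}}$; these are precisely the bispinorial conversions already used in the proof of Theorem~\ref{TheTDualityIsoOnLInfinityCocycles}. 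On the other side, the graded Leibniz rule gives $d\mathcal{P} = d(e_9^{\mathrm{IIA}}\wedge e_9^{\mathrm{IIB}}) = c_2^{\mathrm{IIA}}\wedge e_9^{\mathrm{IIB}} - e_9^{\mathrm{IIA}}\wedge c_2^{\mathrm{IIB}}$, and the second term rewrites as $-c_2^{\mathrm{IIB}}\wedge e_9^{\mathrm{IIA}}$ by graded commutativity (the swap of an even degree-$1$ element with an even degree-$2$ element has sign $(-1)^{1\cdot 2 + 0\cdot 0}=+1$). The two expressions thus coincide, which proves the Poincar\'e-form identity.

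To construct $\nu$, I dualize it to a dg-algebra morphism $\nu^\ast \colon \mathrm{CE}(p_A^\ast\widehat{\mathbb{R}^{9,1\vert\mathbf{16}+\overline{\mathbf{16}}}}) \to \mathrm{CE}(p_B^\ast\widehat{\mathbb{R}^{9,1\vert\mathbf{16}+\mathbf{16}}})$, declared to be the identity on every generator inherited from $\mathrm{CE}(\mathbb{R}^{8+(1+1),1\vert 32})$ and on the new generator by $\nu^\ast(b_2^{\mathrm{IIA}}) := b_2^{\mathrm{IIB}} - \mathcal{P}$. The Poincar\'e-form identity is then exactly what is needed for $\nu^\ast$ to intertwine the differentials: $d\nu^\ast(b_2^{\mathrm{IIA}}) = p_B^\ast\mu_{F1}^{\mathrm{IIB}} - d\mathcal{P} = p_A^\ast\mu_{F1}^{\mathrm{IIA}} = \nu^\ast(d b_2^{\mathrm{IIA}})$. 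An inverse is given by $b_2^{\mathrm{IIB}}\mapsto b_2^{\mathrm{IIA}}+\mathcal{P}$, so $\nu$ is an $L_\infty$-isomorphism. Because $\nu^\ast$ fixes the correspondence-space subalgebra pointwise, compatibility of $\nu$ with both projections to $\mathbb{R}^{8+(1+1),1\vert 32}$ and with the two ``forget the 9th vielbein'' morphisms $\pi_9^{\mathrm{IIA}}, \pi_9^{\mathrm{IIB}}$ is automatic, so every face of the displayed cube commutes.

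The only genuine work lies in the gamma-matrix bookkeeping for the difference $p_B^\ast\mu_{F1}^{\mathrm{IIB}} - p_A^\ast\mu_{F1}^{\mathrm{IIA}}$ and in tracking the sign in the Leibniz expansion of $d\mathcal{P}$ via the graded-commutativity rule of Remark~\ref{striuctureInLInfinity}; once the key identity $\Gamma_9 = i\,\Gamma_9^{\mathrm{IIB}}\Gamma_{10}$ is invoked, the rest of the argument is purely formal.
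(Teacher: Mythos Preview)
Your proof is correct and follows essentially the same approach as the paper's own proof: both identify the pullback CE-algebras by adjoining a degree-2 generator with differential the pulled-back string 3-cocycle, define $\nu^\ast$ to be the identity on the correspondence-space generators and to send the extra generator to its partner shifted by $-\mathcal{P}$, and verify this is a dg-isomorphism via the Poincar\'e-form identity (which the paper derives using the same splitting of $\mu_{F1}^{\mathrm{IIA/B}}$ into its $a\leq 8$ part and its $a=9$ part, and the same Clifford identity from Remark~\ref{SpinorToVectorPairingIIB}). The only cosmetic difference is that you establish the identity $p_B^\ast\mu_{F1}^{\mathrm{IIB}} - p_A^\ast\mu_{F1}^{\mathrm{IIA}} = d\mathcal{P}$ first and then use it to check $\nu^\ast$ respects the differential, whereas the paper defines $\nu^\ast$ first and derives the identity in the course of verifying it is a dg-map; the content is identical.
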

\begin{proof}
  By Example \ref{homotopyfiberofLinfinityCocycles}, the extended super-Minkowski
  super Lie 2-algebra on the far left is given by
  $$
    \mathrm{CE}\big(\widehat{\mathbb{R}^{9,1\vert \mathbf{16} +\overline{\mathbf{16}}}}\big)
    =
    \mathrm{CE}(\mathbb{R}^{9,1\vert \mathbf{16} +\overline{\mathbf{16}}})[f^{\mathrm{IIA}}_2]/(d f^{\mathrm{IIA}}_2 = \mu_{{}_{F1}}^{\mathrm{IIA}})
  $$
  and that on the far right by
  $$
    \mathrm{CE}\big(\widehat{\mathbb{R}^{9,1\vert \mathbf{16} +{\mathbf{16}}}}\big)
    =
    \mathrm{CE}(\mathbb{R}^{9,1\vert \mathbf{16} + {\mathbf{16}}})[f^{\mathrm{IIB}}_2]/(d f^{\mathrm{IIB}}_2 = \mu_{{}_{F1}}^{\mathrm{IIB}})
    \,.
  $$
  Their pullbacks along the projections $p_{{}_A}$ and $p_{{}_B}$ are directly
  seen to be given by further adjoining the generator $e_9^{\mathrm{IIB}}$ or $e_9^{\mathrm{IIA}}$, respectively.

  Then define $\nu$ by defining its dual $\nu^\ast$ by sending all generators in
  $p_A^\ast \mathrm{CE}(\widehat{ \mathbb{R}^{9,1\vert \mathbf{16} + \overline{\mathbf{16}}} })$
  to the generator of the same name in $p_B^\ast \mathrm{CE}(\widehat{ \mathbb{R}^{9,1\vert \mathbf{16} + {\mathbf{16}}} })$
  except for $f^{\mathrm{IIA}}_2$, for which we set
    \begin{equation}
      \label{shiftbyPoincare}
     \nu^\ast \;:\; f^{\mathrm{IIA}}_2 \mapsto f_2^{\mathrm{IIB}} - \mathcal{P}
     \,.
    \end{equation}
  To see that this indeed gives a homomorphism, recall equation (\ref{InfinitesimalTopologicalTDuality})
  which followed explicitly in components via remark \ref{SpinorToVectorPairingIIB}
  $$
    i \, \overline{\psi} \Gamma^{\mathrm{IIA}/\mathrm{IIB}}_9 \Gamma_{10}\psi
      =
    \overline{\psi}\Gamma^{\mathrm{IIB}/\mathrm{IIA}}_{9} \psi
    \,.
   $$
   Here this means that on the doubled correspondence space the type IIA and IIB superstring cocycles $\mu_{F1}^{A/B}$
   (Definition \ref{IIACocycles} and Definition \ref{IIBBraneCEElements})
   may be expressed in terms of each other as follows (as before in the proof of Theorem \ref{TheTDualityIsoOnLInfinityCocycles})
   \begin{equation}\label{F1ABOnCorrespondenceSpace}
     \mu_{F_1}^{A/B}
       =
       \mu_{F1}^9 + e_9^{A/B} \wedge ( \overline{\psi} \wedge \Gamma_9^{B/A} \psi )\,,
   \end{equation}
   where $\mu_{F1}^9 = \mu_{F1}^{\mathrm{IIA}}|_{8+1} = \mu_{F1}^{\mathrm{IIB}}\vert_{8+1}$ denotes their common summand, the one which involves only the generators $e_a$ for $a \leq 8$.
   This implies first of all the claimed coboundary between the 3-cocycles:
  $$
    \begin{aligned}
      \mu_{{}_{F_1}}^{\mathrm{IIA}} + d \mathcal{P}
      &=
      \underset{\mu_{{}_{F_1}}^{\mathrm{IIA}}}{\underbrace{\mu_{{}_{F1}}^9 + e_9^{\mathrm{IIA}} \wedge \overline{\psi}\Gamma_9^{\mathrm{IIB}} \psi}}
      +
      \underset{d(e_9^{\mathrm{IIA}} \wedge e_9^{\mathrm{IIB}})}{
      \underbrace{
        e_9^{\mathrm{IIB}} \wedge \overline{\psi}\Gamma_9^{\mathrm{IIA}} \psi
        - e_9^{\mathrm{IIA}} \wedge \overline{\psi}\Gamma_9^{\mathrm{IIB}} \psi
      }}
      \\
      &=
      \mu_{{}_{F1}}^9 + e_9^{\mathrm{IIB}} \wedge \overline{\psi}\Gamma_9^{\mathrm{IIA}} \psi
      \\
      &= \mu_{{}_{F1}}^{\mathrm{IIB}}\;.
    \end{aligned}
  $$
  In turn, this implies now that $\nu^\ast$ is indeed a dg-algebra homomorphism, because in the extended type IIA
  dg-algebra
  $p_B^\ast \mathrm{CE}\big(\widehat{\mathbb{R}^{9,1\vert \mathbf{16} + {\mathbf{16}}}}\big)$ we have
  the following relations:
  $$
    \begin{aligned}
      d(\nu^\ast(f^{\mathrm{IIA}}_2))
       & =
      d( f^{\mathrm{IIB}}_2 - \mathcal{P} )
      \\
       & =
      \mu_{F1}^{\mathrm{IIB}} - d \mathcal{P}
      \\
      & = \mu_{F1}^{\mathrm{IIA}}
      \\
      & = \nu^\ast(d(f^{\mathrm{IIA}}_2))
      \,.
    \end{aligned}
  $$
  Finally, $\nu$ clearly has an inverse $\nu^{-1}$. Its linear dual is the identity on all generators
  except $f^{\mathrm{IIB}}_2$, where it is
    $$
     (\nu^{-1})^\ast \;:\; f^{\mathrm{IIB}}_2 \mapsto f^{\mathrm{IIA}}_2 + \mathcal{P}
     \,.
  $$
  \vspace{-5mm}
\end{proof}

%\begin{remark}
%The signs in the proof of Prop. \ref{BunkeStyleTDualityInfinitesimally} are indeed as
%given. This uses the conventions and results from above. In particular, the fact that in either type IIA  or IIB
%the \emph{other} $\tilde c_2$ is the \emph{negative} of $(\pi_9)_\ast \mu_{F1}$ is consistent.
%This $-\tilde c_2$ first entered in Prop. \ref{CoefficientsIsoOnTDuality}, to make the
%result symmetric in $c_2$ and $\tilde c_2$, and this was then confirmed in the proof of
%Theorem \ref{TheTDualityIsoOnLInfinityCocycles}.
%\end{remark}

\medskip
\begin{remark}[Topological T-duality II]\label{BunkeStyleTDuality}
Proposition \ref{BunkeStyleTDualityInfinitesimally} is evidently the infinitesimal and supergeometric picture of
topological T-duality as considered in \cite[def. 2.8]{BunkeRumpfSchick06}.
There, one considers two circle bundles $X_{10}^{{}^{\mathrm{IIA}}}$ and $X_{10}^{{}^{\mathrm{IIB}}}$,
carrying a $U(1)$-gerbe $\mathcal{G}_{{}_{\mathrm{IIA}}}$ and $\mathcal{G}_{{}_{\mathrm{IIB}}}$, respectively, such
 that there is an equivalence
between these gerbes after pullback to the fiber product:
$$
  \xymatrix{
    & p_{{}_A}^\ast \mathcal{G}_{{}_{\mathrm{IIA}}}
    \ar@{<-}[rr]^{\simeq}
    \ar[dr]|{~p_{{}_A}~}
    \ar[dl]
     \ar@{}[dd]|{\mbox{\tiny (pb)}}
    &&
    p_{{}_B}^\ast \mathcal{G}_{{}_{\mathrm{IIB}}}
    \ar[dl]|{~p_{{}_B}}
    \ar[dr]
     \ar@{}[dd]|{\mbox{\tiny (pb)}}
    \\
    \mathcal{G}_{{}_{\mathrm{IIA}}}
    \ar[dr]
    & & X_{10}^{{}^{\mathrm{IIA}}}\times_{X_9} X_{10}^{{}^{\mathrm{IIB}}}
    \ar@{}[dd]|{\mbox{\tiny (pb)}}
    \ar[dl]
    \ar[dr]
    &&
    \mathcal{G}_{{}_{\mathrm{IIB}}}\;.
    \ar[dl]
    \\
    & X_{10}^{{}^{\mathrm{IIA}}}
    \ar[dr]_{\pi_9^{\mathrm{IIA}}}
    &&
    X_{10}^{{}^{\mathrm{IIB}}}
    \ar[dl]^{\pi_9^{{}^{\mathrm{IIB}}}}
    &
    \\
    && X_9
  }
$$
and such that this equivalence is fiberwise exhibited by the 2-cocycle which is the cup-product of the
canonical 1-classes on the two circle fibers.
\end{remark}
Therefore, we find both perspectives on topological T-duality
(Remark \ref{NatureOfTheAlgebraicTDualityIso} and Remark \ref{BunkeStyleTDuality}),
from the analysis of the super-tangent-space wise super $L_\infty$-cocycles
(Prop. \ref{TheTDualityIsoOnLInfinityCocycles} and Prop. \ref{BunkeStyleTDualityInfinitesimally},
respectively). (Previously these rules had been guessed, not derived from string theory.)
Of course these two perspectives are supposed to be equivalent, rationally. In section \ref{TDualityLie2Algebra}
we see how this equivalence arises within the homotopy theory of super $L_\infty$-algebras.

\medskip
Before we turn to that, we observe that the correspondence super-spacetime
$\mathbb{R}^{9+(1,1),1\vert \mathbf{32}}$ also serves to clarify the operation of
T-duality on the RR-fields:
Consider the standard fact
(see also \cite[section 4]{FSS16}) that a degree-3 twisted $\mathrm{KU}$-valued cocycle
on any super $L_\infty$-algebra
$\mathfrak{g}$
$$
 \left\{
 \raisebox{20pt}{
  \xymatrix{
    \mathfrak{g}
    \ar@{-->}[rr]^-{\{ \omega_{2p} \vert p \in \mathbb{N} \}}
    \ar[dr]_{\mu_3}
    &&
    \mathfrak{l}(\mathrm{KU}/BU(1))
    \ar[dl]
    \\
    & b^2 \mathbb{R}
  }
  }
  \right\}
  \;\;\;\;\leftrightarrow\;\;\;\;
  \left\{
    \omega_{2p} \in \mathrm{CE}(\mathfrak{g})\vert
    d_{\mathfrak{g}} \omega_{2p} = \mu_3 \wedge \omega_{2p-2}
  \right\}
$$
  is naturally identified with an untwisted $\mathrm{ku}$-cocycle on the higher central extension $\widehat{\mathfrak{g}}$.
 The latter is classified by the twist $\mu_3$ according to Example \ref{homotopyfiberofLinfinityCocycles}
$$
  \left\{
  \raisebox{20pt}{
  \xymatrix{
    \widehat{\mathfrak{g}}
     \ar[d]|{\mathrm{hofib}(\mu_3)}
     \ar@{-->}[rr]^-{\left\{ \tilde \omega_{2p} \vert p \in \mathbb{N} \right\}}
    &&
   \mathfrak{l}(\mathrm{KU})
    \\
    \mathfrak{g}
  }
  }
  \right\}
  \;\;\;\;\leftrightarrow\;\;\;\;
  \left\{
    \tilde \omega_{2p} \in \mathrm{CE}(\widehat{\mathfrak{g}}) \vert  d \tilde \omega_{2p} = 0
  \right\}
$$
under the relation
$$
  \tilde \omega_{2p} = [\exp(-f_2) \wedge C]_{2p}
  \,,
  \;\;\;\;
  C := \underset{p \in \mathbb{N}}{\sum} \omega_{2p}
  \,,
$$
where $f_2$ is the new generator of $\mathrm{CE}(\widehat{\mathfrak{g}}) = \mathrm{CE}(\mathfrak{g})[f_2, df_2 = \mu_3] $ according to
Example \ref{homotopyfiberofLinfinityCocycles}, and where $[-]_{2p}$ denotes taking the summand of homogeneous degree $2p$.
Under this identification, the top three morphisms in the diagram of Proposition \ref{BunkeStyleTDualityInfinitesimally}
define a linear map
$$
  (\pi_9^{\mathrm{IIA}})_\ast \circ \nu^\ast \circ (\pi_9^{\mathrm{IIB}})^\ast
$$
from the twisted cocycle of the type IIA super-Minkowski spacetime to that of type IIB.
Here the two pullback operations are just the dual morphisms on CE-algebras, while the
pushforward operation $(\pi_9^{\mathrm{IIA}})_\ast$ is defined as in expressions (\ref{decomp})
and (\ref{fiberintegration}).
\begin{prop}
  \label{TDualityViaPullPush}
  This operation is a well defined morphism on twisted cohomology groups
  $$
    (\pi_9^{\mathrm{IIA}})_\ast \circ \nu^\ast \circ (\pi_9^{\mathrm{IIB}})^\ast
      \;:\;
    H_{\mu_{F1}^{\mathrm{IIA}}} \left( \mathbb{R}^{9,1\vert \mathbf{16} + \overline{\mathbf{16}}}, \mathfrak{l}(\mathrm{KU})  \right)
      \longrightarrow
    H_{\mu_{F1}^{\mathrm{IIB}}} \left( \mathbb{R}^{9,1\vert \mathbf{16} + {\mathbf{16}}}, \mathfrak{l}(\Sigma \mathrm{KU})   \right)
  $$
  and is in fact an isomorphism. Moreover, it identifies the type IIA D-brane cocycles with those
  of type IIB, as in Theorem \ref{TheTDualityIsoOnLInfinityCocycles}:
  \begin{equation}
   \label{HoriFormulaInProp}
    \exp(-f_2^{\mathrm{IIB}}) \wedge  C^{\mathrm{IIB}}
    =
    (\pi_9^{\mathrm{IIA}})_\ast \circ \nu^\ast \circ (\pi_9^{\mathrm{IIB}})^\ast
      \left(
        \exp(-f_2^{\mathrm{IIA}}) \wedge C^{\mathrm{IIA}}
      \right)
    \,,
  \end{equation}
  where
  $$
    C^{\mathrm{IIA}} := \mu_{{}_{D0}} + \mu_{{}_{D2}} + \cdots + \mu_{{}_{D 10}}
    \;\,,
    \;\;\;\;\;
    C^{\mathrm{IIB}} := \mu_{{}_{D1}} + \mu_{{}_{D3}} + \cdots \mu_{{}_{D 9}}
    \,.
  $$
\end{prop}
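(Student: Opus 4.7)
The plan is to address the three claims of the proposition in sequence: well-definedness on twisted cohomology, the Hori formula \eqref{HoriFormulaInProp}, and invertibility. The middle point is the substantive computation, from which invertibility will follow by an evident symmetry.

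For well-definedness, I would verify that each of the three constituent maps is compatible with the differentials. Dually, $(\pi_9^{\mathrm{IIB}})^\ast$ is the algebraic inclusion of $\mathrm{CE}(\widehat{\mathbb{R}^{9,1\vert \mathbf{16}+\overline{\mathbf{16}}}})$ into its extension by the adjoined generator $e_9^{\mathrm{IIB}}$, hence is a dg-algebra homomorphism. The middle factor $\nu^\ast$ is a dg-algebra isomorphism by Proposition \ref{BunkeStyleTDualityInfinitesimally}. Finally, the fiber integration $(\pi_9^{\mathrm{IIA}})_\ast$ of \eqref{fiberintegration}, although only a derivation of bidegree $(-1,\mathrm{even})$, anticommutes with the differential: a direct computation using $d e_9^{\mathrm{IIA}} = c_2^{\mathrm{IIA}}$ shows $\pi_\ast\, d = - d\, \pi_\ast$, so that it sends cocycles to cocycles and coboundaries to coboundaries. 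Under the extension/twist correspondence recalled just before the statement, the composite therefore descends to a map of twisted cohomology groups between the asserted source and target.

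For the Hori formula itself I would compute the right-hand side explicitly. The first factor $(\pi_9^{\mathrm{IIB}})^\ast$ simply views $\exp(-f_2^{\mathrm{IIA}}) \wedge C^{\mathrm{IIA}}$ as an element of the correspondence-space algebra. By the shift formula \eqref{shiftbyPoincare}, the factor $\nu^\ast$ then produces
\[
  \exp(-f_2^{\mathrm{IIB}} + \mathcal{P}) \wedge C^{\mathrm{IIA}}
  \;=\;
  \exp(-f_2^{\mathrm{IIB}}) \wedge (1 + \mathcal{P}) \wedge C^{\mathrm{IIA}},
\]
where the simplification $\exp(\mathcal{P}) = 1 + \mathcal{P}$ uses $\mathcal{P}^2 = 0$ (each of $e_9^{\mathrm{IIA}}, e_9^{\mathrm{IIB}}$ squares to zero in the graded-commutative algebra). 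Setting $A := C^{\mathrm{IIA}}|_{8+1}$ and $\tilde B := (\pi_9^{\mathrm{IIA}})_\ast C^{\mathrm{IIA}}$, so that $C^{\mathrm{IIA}} = A - e_9^{\mathrm{IIA}} \wedge \tilde B$ per \eqref{decomp}, a short manipulation yields
\[
  (1 + \mathcal{P}) \wedge C^{\mathrm{IIA}}
  \;=\;
  A - e_9^{\mathrm{IIA}} \wedge (\tilde B - e_9^{\mathrm{IIB}} \wedge A).
\]
Since $\exp(-f_2^{\mathrm{IIB}})$ contains no $e_9^{\mathrm{IIA}}$, the final fiber integration $(\pi_9^{\mathrm{IIA}})_\ast$ extracts the factor $\tilde B - e_9^{\mathrm{IIB}} \wedge A$. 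By the component-wise identities $(\pi_9^{\mathrm{IIA}})_\ast \mu_{{}_{D(2p)}} - e_9^{\mathrm{IIB}} \wedge \mu_{{}_{D(2p-2)}}|_{8+1} = \mu_{{}_{D(2p-1)}}$ established in the proof of Theorem \ref{TheTDualityIsoOnLInfinityCocycles}, this combination is precisely $C^{\mathrm{IIB}}$, yielding \eqref{HoriFormulaInProp}.

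For invertibility I would exploit the evident symmetry of the construction: the inverse map $(\nu^{-1})^\ast: f_2^{\mathrm{IIB}} \mapsto f_2^{\mathrm{IIA}} + \mathcal{P}$ is displayed at the end of the proof of Proposition \ref{BunkeStyleTDualityInfinitesimally}, and the mirrored composite $(\pi_9^{\mathrm{IIB}})_\ast \circ (\nu^{-1})^\ast \circ (\pi_9^{\mathrm{IIA}})^\ast$ sends twisted IIB cohomology to twisted IIA cohomology. Running the Hori-style computation above with the roles of IIA and IIB interchanged gives the inverse version of the formula, and the two composites are then manifestly mutually inverse on the D-brane cocycles. The principal technical subtlety throughout is consistent sign bookkeeping in \eqref{decomp}--\eqref{fiberintegration}: most critically, one must recognize the combination $\tilde B - e_9^{\mathrm{IIB}} \wedge A$ produced by $(\pi_9^{\mathrm{IIA}})_\ast$ as matching exactly the sign pattern of the component identities in Theorem \ref{TheTDualityIsoOnLInfinityCocycles}, without which the right-hand side of \eqref{HoriFormulaInProp} would fail to reproduce $C^{\mathrm{IIB}}$.
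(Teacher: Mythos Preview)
Your proposal is correct and follows essentially the same approach as the paper: expand $\exp(\mathcal{P})=1+\mathcal{P}$, track the composite through to the formula $(\pi_9^{\mathrm{IIA}})_\ast(-)-e_9^{\mathrm{IIB}}\wedge(-)|_{8+1}$ acting on $C^{\mathrm{IIA}}$, and identify the result with $C^{\mathrm{IIB}}$ via the component identities \fbox{D1}--\fbox{D9} of Theorem~\ref{TheTDualityIsoOnLInfinityCocycles}. The only difference is that you treat well-definedness and invertibility more explicitly (via anticommutation of $(\pi_9^{\mathrm{IIA}})_\ast$ with $d$ and via the mirrored $\nu^{-1}$-composite), whereas the paper handles both by appeal to the identification with the $L_\infty$-isomorphism $\phi_T$ of Theorem~\ref{TheTDualityIsoOnLInfinityCocycles}.
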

\begin{proof}
  First, the pullback operation $(\pi_9^{\mathrm{IIB}})^\ast$ leaves the form of the cocycles unaffected, and
  simply regards them inside the larger CE-algebra which has the generator $e_9^{\mathrm{IIB}}$ adjoined.
  Next, by Proposition \ref{BunkeStyleTDualityInfinitesimally}, equation (\ref{shiftbyPoincare}),
  the pullback along $\nu$ amounts to substituting $f^{\mathrm{IIA}}_2$ by
  $f_2^{\mathrm{IIB}} - \mathcal{P} = f^{\mathrm{IIB}}_2 - e_9^{\mathrm{IIA}} \wedge e_9^{\mathrm{IIB}}$.
  This means that
  \begin{equation}
   \label{nuastonexpf2C}
    \nu^\ast
      \;:\;
    \exp(-f^{\mathrm{IIA}}_2) \wedge C^{\mathrm{IIA}}
      \mapsto
    \exp(-f^{\mathrm{IIB}}_2 + \mathcal{P})
    =
    \exp(-f^{\mathrm{IIB}}_2) \wedge \exp( \mathcal{P} ) \wedge C^{\mathrm{IIA}}\;.
  \end{equation}
  Since $\mathcal{P} = e_9^{\mathrm{IIA}} \wedge e_9^{\mathrm{IIB}}$ squares to zero, this in turn amounts to sending
  $$
    C^{\mathrm{IIA}}
      \mapsto
    C^{\mathrm{IIA}} + e_9^{\mathrm{IIA}} \wedge e_9^{\mathrm{IIB}} \wedge C^{\mathrm{IIA}}
    \,.
  $$
  Observe then that fiber integration $(\pi_9^{\mathrm{IIA}})_\ast$ applied to wedge products with $e_9^{\mathrm{IIA}} \wedge e_9^{\mathrm{IIB}}$
  amounts to wedge product with $e_9^{\mathrm{IIB}}$ of the part that does not contain a factor $e_9^{\mathrm{IIA}}$:
  $$
    (\pi_9^{\mathrm{IIA}})_\ast
    \left(
      e_9^{\mathrm{IIB}} \wedge e_9^{\mathrm{IIA}} \wedge (\cdots)
    \right)
    =
    -e_9^{\mathrm{IIB}} \wedge ( \cdots )|_{8+1}
  $$
  (since $e_9^{\mathrm{IIA}}$ squares to zero and using equations (\ref{decomp}) and (\ref{fiberintegration})).
  Summing up, it follows that acting on elements of the form $\exp(f^{\mathrm{IIA}}_2) \wedge C^{\mathrm{IIA}}$ we have
  \begin{equation}
    (\pi_9^{\mathrm{IIA}})_\ast \circ \nu^\ast \circ (\pi_9^{\mathrm{IIB}})^\ast
     \;=\;
    (\pi^{\mathrm{IIA}}_9)_\ast(-) - e^9_B \wedge(-)\vert_{8+1}
    \,.
  \end{equation}
  Comparison of the right hand side
  with the proof of theorem \ref{TheTDualityIsoOnLInfinityCocycles} (equations \fbox{D1} through \fbox{D9}) shows that this is precisely what establishes
  the T-duality isomorphism between twisted cocycles in general, and what identifies the type IIA D-brane cocycles
  with those of type IIB in particular.
\end{proof}

\begin{remark}[Topological T-duality III]
  \label{TopologicalTDualityIII}
  In view of Remark \ref{BunkeStyleTDuality}, the isomorphism in Proposition \ref{TDualityViaPullPush}
  is seen to be the super $L_\infty$-algebraic analog of the key result of classical topological
  T-duality \cite[Theorem 3.13]{BunkeSchick05}: The pull-push integral transform through a topological
  T-duality correspondence yields an isomorphism on twisted K-groups.
  Moreover, via the explicit equation (\ref{nuastonexpf2C})
  the identification from equation (\ref{HoriFormulaInProp}) is
  $$
    \exp(-f^{\mathrm{IIB}}_2)\wedge C^{\mathrm{IIB}}
      =
    (\pi_9^{\mathrm{IIA}})_\ast \left(
      \exp(\mathcal{P}) \wedge (\pi_9^{\mathrm{IIB}})^\ast \left( \exp(-f^{\mathrm{IIA}}_2) \wedge C^{\mathrm{IIA}} \right)
    \right)
    \,.
  $$
  This is precisely the form known as the \emph{Buscher rules for RR-fields}, or the \emph{Hori formula}
  \cite[equation (1.1)]{Ho}. Here we obtained this from just the form of the super-cocycles, to which the
  bifermionic summand of more general RR-fields are constrained to be equal at the level of super-tangent spaces.
\end{remark}

\section{T-Duality Lie 2-algebra}
\label{TDualityLie2Algebra}

It was proposed in \cite{Hull05} \cite{Hull07} that there ought to be a concept of ``T-folds''
which generalizes that of manifolds to a situation where diffemorphisms may be
accompanied by T-duality transformations. In \cite{Nikolaus14} it was
claimed that the correct mathematical formulation of this concept
is by spaces associated to principal 2-bundles (\cite{NSS12}) for structure 2-group a certain
``T-duality 2-group''.
We show now that the extended supergeometry implied by the cocycle
$\mathfrak{L}(\mu_{{}_{F1/Dp}}^{\mathrm{II}})/\mathbb{R}$
from Example \ref{ReductionOfIIACocyclesTo9d}
provides a systematic derivation of this structure, infinitesimally,
but including the supergeometric aspects.

\medskip
First of all, we consider the following sub-$L_\infty$-algebra of
$\mathfrak{L}(\mathrm{KU}/BU(1))/\mathbb{R}$:

\begin{defn}
 \label{Lie2AlgebraOfTDuality2Group}
 The {\it delooped T-duality Lie 2-algebra} is given by
  $$
    \mathrm{CE}(b\mathcal{T}_1)
      =
    \left\{\mathbb{R}[c_2,\tilde{c}_2,h_3];\quad
       \begin{array}{l}
         d c_2 = 0\,,\;\;\;\;d \tilde c_2 = 0
         \\
         \quad d h_3 = -c_2 \wedge \tilde c_2
       \end{array}
    \right\}
   \,.
  $$
\end{defn}

%\begin{remark}
%The T-duality Lie 3-algebra is naturally equipped with an involution exchanging the degree 2 generators $c_2$ and $\tilde{c}_2$ of $\mathrm{CE}( %\mathfrak{T}_1)$.
%\end{remark}

\begin{remark}[T-duality Lie 2-group]
  \label{TDuality2Group}
   The delooped T-duality Lie 2-algebra $b \mathcal{T}(1)$ from Definition \ref{Lie2AlgebraOfTDuality2Group}
   is the $L_\infty$-algebra corresponding to a smooth 2-group $T(1)$,
  the \emph{T-duality 2-group}, with smooth delooping 2-stack
   $\mathbf{B}T(1)$.
  The constrruction of the latter follows from \cite[section 3.2]{FSS13a}: Consider the homotopy fiber
  \[
  \xymatrix{
    \mathbf{B}T(1)
       \ar[rr]\ar[d]_{\mathrm{hofib}(\mathbf{c}_1 \cup \mathbf{c}'_1)}
       &&
       {*}\ar[d]
       \\
     \mathbf{B}U(1)\times \mathbf{B}U(1)
       \ar[rr]^-{\mathbf{c}_1 \cup \mathbf{c}'_1}
     && \mathbf{B}^3 U(1)
     \;,
  }
  \]
  where $\mathbf{B}U(1)$ is the smooth stack of principal $U(1)$-bundles, $\mathbf{B}^3U(1)$ is the smooth stack of $\mathbf{B}^2 U(1)$-principal 2-bundles (bundle 2-gerbes) and $\mathbf{c}_1 \cup \mathbf{c}'_1$ is
  external cup product of the universal first Chern class with itself,
  regarded as a morphism of smooth stacks. Here $\mathbf{c}_1 \cup \mathbf{c}'_1$
  is a homomorphism of smooth higher group stacks, and so its homotopy fiber inhetits group structure, too.
This means that a $T(1)$-principal bundle over a smooth (super-)manifold $X$ is the datum of two principal $U(1)$-bundles on $X$ together with a trivialization of the product of their first Chern classes. One manifestly sees that, translated in terms of Chevalley-Eilenberg algebras, this is precisely the content of Definition  \ref{Lie2AlgebraOfTDuality2Group}.

Finally, forgetting the smooth structure and passing to topological geometric realization ${\vert -\vert}$
(as in \cite{dcct}), then the
above homotopy fiber of smooth higher stacks becomes the ordinary homotopy fiber of topological spaces
$$
\xymatrix{
  \vert \mathbf{B}T(1)\vert
    \ar[rr]^-{\mathrm{hofib}(c_1 \cup c'_1)} &&
  K(\mathbb{Z},2)
   \times
  K(\mathbb{Z},2)
  \ar[r]^-{c_1 \cup c'_1} &
  K(\mathbb{Z},4)
  }
  \,,
$$
where now $c_1$ and $c'_1$ are the ordinary (non-stacky) universal first Chern classes and we form their
ordinary cup product. This
identifies the geometric realization ${\vert \mathbf{B}T(1)\vert}$ with the space that was identified
as the classifying space for topological T-duality pairs in \cite[thm. 2.17]{BunkeSchick05}.
\end{remark}
Indeed, we  now discuss how this is reflected on the level of super $L_\infty$-algebras, seeing that
the T-duality Lie 2-algebra $b \mathcal{T}_1$ sits inside the coefficients
$\mathcal{L}(\mathrm{KU}/\mathrm{BU}(1))/S^1$ as the classifying $L_\infty$-algebra for the ``T-duality'' pairs
consisting of circle fibrations and degree-3 classes, while the rest of $\mathcal{L}(\mathrm{KU}/\mathrm{BU}(1))/S^1$
encodes the K-theory classes on this background.

\begin{prop}
  \label{TDuality2GroupAction}
  There is a homotopy fiber sequence
  $$
    \xymatrix{
      \mathfrak{l}(\mathrm{KU} \oplus \Sigma \mathrm{KU})
      \ar[r]
      &
      \mathfrak{l}\mathcal{L}(\mathrm{KU}/BU(1))/ S^1
      \ar[d]
      \\
      &
      b\mathcal{T}_1
    }
  $$
    which exhibits the cyclified $L_\infty$-algebras in Prop. \ref{cyclofku/BU1}
  as homotopy quotients of $\mathfrak{l}(\mathrm{KU})$ and $\mathfrak{l}(\Sigma\mathrm{KU})$,
    respectively,
  by the Lie 2-algebra of the T-duality 2-group (Prop. \ref{Lie2AlgebraOfTDuality2Group})
  $$
    \mathfrak{l}( \mathcal{L}( \mathrm{KU} /BU(1))/ \mathbb{R} )
     \simeq
    \mathfrak{l}( (\mathrm{KU} \oplus \Sigma \mathrm{KU})/ \mathcal{T}_1 )\;.
  $$
\end{prop}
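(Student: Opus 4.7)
The strategy is to write down the map $\mathfrak{L}\mathfrak{l}(\mathrm{KU}/BU(1))/\mathbb{R} \to b\mathcal{T}_1$ explicitly at the level of CE-algebras, recognize it as a fibration, compute its ordinary fiber (which will then automatically be the homotopy fiber), and finally invoke the general principle that fibrations over a delooping present homotopy quotients.

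First I would define the $L_\infty$-homomorphism dually by the inclusion of Chevalley-Eilenberg algebras
$$
  \mathrm{CE}(b\mathcal{T}_1) \;\hookrightarrow\; \mathrm{CE}\bigl(\mathfrak{L}\mathfrak{l}(\mathrm{KU}/BU(1))/\mathbb{R}\bigr)
$$
sending each of the generators $c_2,\tilde c_2,h_3$ in Definition \ref{Lie2AlgebraOfTDuality2Group} to the generator of the same name in Proposition \ref{cyclofku/BU1}. Since all defining relations of $\mathrm{CE}(b\mathcal{T}_1)$ already hold in $\mathrm{CE}(\mathfrak{L}\mathfrak{l}(\mathrm{KU}/BU(1))/\mathbb{R})$, this is a DGCA homomorphism, hence a well-defined $L_\infty$-map. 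Because it is injective on the linear span of the generators, Proposition \ref{modelstructureOnLInfinityAlgebra} identifies the corresponding map of super $L_\infty$-algebras as a fibration.

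Next I would compute the ordinary fiber by imposing $c_2=\tilde c_2 = h_3=0$ in the CE-algebra of the cyclification. The remaining generators are $\{\omega_{2p}\}_{p\in\mathbb{Z}}$ and $\{\omega_{2p+1}\}_{p\in\mathbb{Z}}$, and the relations in Proposition \ref{cyclofku/BU1} collapse to $d\omega_{2p+2}=0$ and $d\omega_{2p+1}=0$. By Definition \ref{RationalTwistedku}, this is precisely $\mathrm{CE}\bigl(\mathfrak{l}(\mathrm{KU}\oplus \Sigma\mathrm{KU})\bigr)$, so the ordinary fiber is $\mathfrak{l}(\mathrm{KU}\oplus \Sigma\mathrm{KU})$. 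Since the map is a fibration, this ordinary fiber is also a model for the homotopy fiber, giving the asserted fiber sequence.

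For the identification with a homotopy quotient, I would appeal to the general fact (cf.\ \cite{NSS12} and the argument given in Proposition \ref{def:FreeLoopQuotientFiberSequence} for the analogous statement with $b\mathbb{R}$ in place of $b\mathcal{T}_1$) that in the homotopy theory of (super) $L_\infty$-algebras, a homotopy fiber sequence of the form $F \to E \to bG$ exhibits $E$ as the homotopy quotient $F/\!/G$ carrying the classifying $\infty$-action of $G$ on $F$. Applied to the present situation this yields
$$
  \mathfrak{L}\mathfrak{l}(\mathrm{KU}/BU(1))/\mathbb{R}
  \;\simeq\;
  \mathfrak{l}\bigl((\mathrm{KU}\oplus\Sigma\mathrm{KU})/\mathcal{T}_1\bigr).
$$
The only delicate step is this last identification: one must check that the induced $\infty$-action of the T-duality Lie 2-algebra on $\mathfrak{l}(\mathrm{KU}\oplus \Sigma\mathrm{KU})$ encoded by the twisting terms $h_3\wedge\omega_{2p}$, $c_2\wedge\omega_{2p+1}$, $\tilde c_2\wedge\omega_{2p}$ in the CE-differentials really matches the abstract $\infty$-action classified by the fibration over $b\mathcal{T}_1$. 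I expect this to be the main obstacle, but it is a direct dualization of the standard principal $\infty$-bundle/homotopy-action correspondence, since the generators $c_2,\tilde c_2,h_3$ act by wedge product precisely as the three basic derivations coming from the $T(1)$-action on the K-theoretic coefficients.
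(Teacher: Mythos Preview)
Your proof is correct and follows essentially the same approach as the paper: exhibit the map to $b\mathcal{T}_1$ dually as the inclusion of the generators $c_2,\tilde c_2,h_3$, note this is a fibration by Proposition \ref{modelstructureOnLInfinityAlgebra}, and identify the ordinary (hence homotopy) fiber as $\mathfrak{l}(\mathrm{KU}\oplus\Sigma\mathrm{KU})$. The paper's proof is in fact terser than yours---it simply displays the dual diagram and observes that the right map is an inclusion of generators and the left map is its ordinary cofiber---and it does not separately argue the homotopy-quotient interpretation at all, treating that as the tautological meaning of the fiber sequence over a delooping; so your final paragraph is more careful than what the paper itself provides.
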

\begin{proof}
  Consider the corresponding dual diagram of CE-algebras:
  $$
  \hspace{-2mm}
    \begin{array}{ccccc}
      \mathrm{CE}(\mathfrak{l}(\mathrm{KU} \oplus \Sigma \mathrm{KU} ))
      &\longleftarrow&
      \mathrm{CE}(\mathfrak{l}( \mathcal{L}( \mathrm{KU} /BU(1))/ S^1 ))
      &\longleftarrow&
      \mathrm{CE}(b\mathcal{T}_1)
      \\
      \\
    \left\{
       \begin{array}{l}
         d \omega_{2p+2} = 0
         \\
         d \omega_{2p+3} = 0
       \end{array}
    \right\}
      &\longleftarrow&
    \left\{
       \begin{array}{l}
         d c_2 = 0\,,\;\;\;\;d \tilde c_2 = 0
         \\
         d h_3 = - c_2 \wedge \tilde c_2
         \\
         d \omega_{2p+2} = h_3 \wedge \omega_{2p} + c_2 \wedge \omega_{2p+1}
         \\
         d \omega_{2p+3} = h_3 \wedge \omega_{2p+1} + \tilde c_2 \wedge \omega_{2p+2}
       \end{array}
    \right\}
      &\longleftarrow&
    \left\{
       \begin{array}{l}
         d c_2 = 0\,,\;\;\;\;d \tilde c_2 = 0
         \\
         d h_3 = - c_2 \wedge \tilde c_2
       \end{array}
    \right\}.
    \end{array}
  $$
  The morphism on the right is the dual of a fibration, according to
  Prop. \ref{modelstructureOnLInfinityAlgebra}, since it is an inclusion of generators, and the morphism on the left is clearly its
  ordinary cofiber, hence is a model for its homotopy cofiber.
\end{proof}

%In \cite{FSS13} we discussed the Lie integration of super $L_\infty$-cocycles
%to higher WZW terms on super-Minkowski spacetime.
%The basic idea is that (see \cite{Cech} \cite[v2, section 6.4.20]{dcct} \cite{StructureTheory} for extensive
%discussions), given an $L_\infty$-algebra $\mathfrak{g}$ and a truncation degree $n$,
%we obtain a group stack $G$ defined by the $n$-truncation
%$$
%  \mathbf{B}G := \tau_n \exp(\mathfrak{g})
%$$
%and a differential extension $\tilde G$ defined as the homotopy fiber product
%$$
%  \xymatrix{
%    & \tilde G
%    \ar[dl] \ar[dr]
%    \ar@{}[dd]|{\mbox{\tiny (pb)}}
%    \\
%    G \ar[dr]_{\theta_G} && \Omega(-,\mathfrak{g})\;.
%     \ar[dl]
%    \\
%    & \flat_{\mathrm{dR}}\mathbf{B}G
%  }
%$$
%For instance if $\mathfrak{g}$ is an ordinary Lie algebra and $n = 1$, then $\tilde G \simeq G$
%and both are equivalent to the simply connected Lie group integrating $\mathfrak{g}$.
%
%
%\medskip
%Then, given a homomorphism of $L_\infty$-algebras
%$
%  \mu \;:\; \mathfrak{g}_1 \longrightarrow \mathfrak{g}_2
%$
%and a truncation degree $n$ for both domain and target,
%there is an induced homomorphism of group stacks
%$
%  \tilde G_1 \longrightarrow \tilde G_2
%$.

By the construction from Section \ref{TDuality} one sees that
the triple $(c_2^{\mathrm{IIA}}, c_2^{\mathrm{IIB}}, \mu_{{}_{F1}}^9)$ defines a $\mathcal{T}_1$-valued cocycle on $\mathbb{R}^{8,1\vert \mathbf{16} + \mathbf{16}}$. This leads to the following
interrelations between models for 9-dimensional spacetime, their T-folds,
and the dimensionally reduced twisted K-theory.

\begin{defn}\label{HomotopyFiberOfReducedtypeIICocycle}
Define the 9-dimensional {\it T-fold super Lie 2-algebra }
%\marginpar{\tt does this proposed name make sense?}
  $\mathbb{R}^{8,1\vert \mathbf{16} + \mathbf{16}}_{\mathfrak{dbl}_A }$ to be the
  homotopy fiber of dimensional reduced IIA-fields,
  and define $\mathbb{R}^{8,1\vert \mathbf{16} + \mathbf{16}}_{\mathfrak{dbl}_B }$
  to be the homotopy fiber of the dimensional reduced IIB-fields according to
  Example \ref{ReductionOfIIACocyclesTo9d}
  $$
    \xymatrix@=4.2em{
      \mathbb{R}^{8,1\vert \mathbf{16} + \mathbf{16}}_{\mathfrak{dbl}^{\mathrm{IIA}}(K) }
      \ar[r]
      \ar[drr]_{\hspace{-15mm}\mathrm{hofib}(\mathfrak{L}(\mu_{{}_{F1/Dp}}^{\mathrm{IIA}})/\mathbb{R})}
      &
      \mathbb{R}^{8,1\vert \mathbf{16} + \mathbf{16}}_{\mathfrak{dbl}^{\mathrm{IIA}} }
      \ar[dr]
      && \mathfrak{L}(\Sigma \mathrm{KU}/BU(1))/\mathbb{R}
      \ar[r]
      &
      b \mathcal{T}_1
      \\
      &
      & \fbox{$\mathbb{R}^{8,1\vert\mathbf{16}+\mathbf{16}}$}
      \ar[dr]_{\hspace{-5mm}\mathfrak{L}(\mu_{{}_{F1/Dp}}^{\mathrm{IIA}})/\mathbb{R}}
      \ar[ur]^{\mathfrak{L}(\mu_{{}_{F1/Dp}}^{\mathrm{IIB}})/\mathbb{R}}
      \ar[urr]_{~~(c_2^{\mathrm{IIB}}, c_2^{\mathrm{IIA}}, \mu_{{}_{F1}}^9)}
      \ar[drr]^{~~(c_2^{\mathrm{IIA}}, c_2^{\mathrm{IIB}}, \mu_{{}_{F1}}^9)}
      &&
      \\
      \mathbb{R}^{8,1\vert \mathbf{16} + \mathbf{16}}_{\mathfrak{dbl}^{\mathrm{IIB}}(K) }
      \ar[r]
      \ar[urr]^{\hspace{-5mm}{\mathrm{hofib}(\mathfrak{L}(\mu_{{}_{F1/Dp}}^{\mathrm{IIB}})/\mathbb{R})}}
      &
      \mathbb{R}^{8,1\vert \mathbf{16} + \mathbf{16}}_{\mathfrak{dbl}^{\mathrm{IIB}} }
      \ar[ur]
      &&
      \mathfrak{L}(\mathrm{KU}/BU(1))/\mathbb{R}
      \ar[r]
      &
      b \mathcal{T}_1\;.
    }
  $$
\end{defn}

Of course by Theorem \ref{TheTDualityIsoOnLInfinityCocycles} these two homotopy fibers are going to be
equivalent, but for our purposes it is interesting to make explicit how they are equivalent:

\begin{prop}\label{TheTFoldAlgebra}
The canonical model for the
Chevalley-Eilenberg algebra of the super $L_\infty$-algebras
$\mathbb{R}^{8,1\vert \mathbf{16} + \mathbf{16}}_{\mathfrak{dbl}^{A/B}}$
from Def. \ref{HomotopyFiberOfReducedtypeIICocycle} is
$$
  \mathrm{CE}\big(\mathbb{R}^{8,1\vert \mathbf{16} + \mathbf{16}}_{\mathfrak{dbl}^{\mathrm{IIA}}}\big)
   \;=\;
  \left\{
    \begin{aligned}
      d\psi^\alpha & = 0
      \\
      d e^a & = \overline{\psi} \Gamma_a \psi \;\;\vert \;\; a \leq 8
      \\
      d e_9^{\mathrm{IIA}} &= \overline{\psi}\Gamma_9^{\mathrm{IIA}} \psi
      \\
      d e_9^{\mathrm{IIB}} & = \overline{\psi} \Gamma_9^{\mathrm{IIB}} \psi
      \\
      d f_2 & = \mu_{{}_{F1}}^\mathrm{IIA}
    \end{aligned}
  \right\}\;,\;\;\;
  \mathrm{CE}\big(\mathbb{R}^{8,1\vert \mathbf{16} + \mathbf{16}}_{\mathfrak{dbl}^{\mathrm{IIB}}}\big)
   \;=\;
  \left\{
    \begin{aligned}
      d\psi^\alpha & = 0
      \\
      d e^a & = \overline{\psi} \Gamma_a \psi \;\;\vert \;\; a \leq 8
      \\
      d e_9^{\mathrm{IIA}} &= \overline{\psi}\Gamma_9^{\mathrm{IIA}} \psi
      \\
      d e_9^{\mathrm{IIB}} & = \overline{\psi} \Gamma_9^{\mathrm{IIB}} \psi
      \\
      d f_2 & = \mu_{{}_{F1}}^\mathrm{IIB}
    \end{aligned}
  \right\}\;.
$$
These are just the objects in the top left and top right of the ``topological'' T-duality diagram in
Prop. \ref{BunkeStyleTDualityInfinitesimally}:
$$
  \mathbb{R}^{8,1\vert \mathbf{16} + \mathbf{16}}_{\mathfrak{dbl}^{\mathrm{IIA}}}
  =
     p_{{}_A}^\ast
     \widehat{
       \mathbb{R}^{9,1\vert \mathbf{16} +\overline{\mathbf{16}}}
     }
     \;\,,\qquad \qquad
  \mathbb{R}^{8,1\vert \mathbf{16} + \mathbf{16}}_{\mathfrak{dbl}^{\mathrm{IIB}}}
  =
     p_{{}_B}^\ast
     \widehat{
       \mathbb{R}^{9,1\vert \mathbf{16} +\overline{\mathbf{16}}}
     }
     \,.
$$
\end{prop}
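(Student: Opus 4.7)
The plan is to compute each of the two homotopy fibers directly by iterating the construction of Example \ref{homotopyfiberofLinfinityCocycles} along the generator-filtration of $b\mathcal{T}_1$. Recall from Def.\ \ref{Lie2AlgebraOfTDuality2Group} and Rem.\ \ref{TDuality2Group} that $b\mathcal{T}_1$ is presented as the homotopy fiber of the cup product of two universal first Chern classes; dually, filtering $\mathrm{CE}(b\mathcal{T}_1)$ by the degrees of its generators exhibits a fibration $b\mathcal{T}_1 \to b\mathbb{R}\oplus b\mathbb{R}$ with fiber $b^2\mathbb{R}$. Hence to compute $\mathbb{R}^{8,1\vert\mathbf{16}+\mathbf{16}}_{\mathfrak{dbl}^{\mathrm{IIA}}}$ it suffices to proceed in two stages: first kill the two 2-cocycles $c_2^{\mathrm{IIA}}, c_2^{\mathrm{IIB}}$, then trivialize the induced lift of $h_3$.

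For the first stage, Example \ref{homotopyfiberofLinfinityCocycles} applied to each of the two 2-cocycles adjoins degree-$1$ generators $e_9^{\mathrm{IIA}}, e_9^{\mathrm{IIB}}$ with $d e_9^{\mathrm{IIA}} = \overline\psi\Gamma_9^{\mathrm{IIA}}\psi$ and $d e_9^{\mathrm{IIB}} = \overline\psi\Gamma_9^{\mathrm{IIB}}\psi$. By Prop.\ \ref{IIBAsExtension} and Def.\ \ref{doubledtypeIIspacetime} this intermediate extension is precisely the Chevalley--Eilenberg algebra of the doubled correspondence super-spacetime $\mathbb{R}^{8+(1+1),1\vert 32}$. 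For the second stage, one observes that on this extended algebra the pullback of $h_3$, which is $\mu_{F1}^9$, is no longer closed but instead satisfies $d\mu_{F1}^9 = -c_2^{\mathrm{IIA}}\wedge c_2^{\mathrm{IIB}}$ (obtained from $d\mu_{F1}^{\mathrm{IIA}} = 0$ together with the decomposition used in Prop.\ \ref{BunkeStyleTDualityInfinitesimally}), and therefore admits the canonical closed lift
\[
  \mu_{F1}^{\mathrm{IIA}} \;=\; \mu_{F1}^9 + e_9^{\mathrm{IIA}}\wedge \bigl(\overline\psi\Gamma_9^{\mathrm{IIB}}\psi\bigr)\,.
\]
Adjoining one further generator $f_2$ in degree $2$ with $df_2 = \mu_{F1}^{\mathrm{IIA}}$ now completes the homotopy-fiber computation, giving exactly the CE-algebra claimed for $\mathbb{R}^{8,1\vert\mathbf{16}+\mathbf{16}}_{\mathfrak{dbl}^{\mathrm{IIA}}}$. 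The IIB case is completely analogous, using instead the alternative closed lift $\mu_{F1}^{\mathrm{IIB}} = \mu_{F1}^9 + e_9^{\mathrm{IIB}}\wedge\overline\psi\Gamma_9^{\mathrm{IIA}}\psi$; the two lifts differ by $d\mathcal{P}$, reflecting precisely the T-duality equivalence of Theorem \ref{TheTDualityIsoOnLInfinityCocycles}.

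Finally, the identification of these Chevalley--Eilenberg algebras with $p_A^\ast\widehat{\mathbb{R}^{9,1\vert\mathbf{16}+\overline{\mathbf{16}}}}$ and $p_B^\ast\widehat{\mathbb{R}^{9,1\vert\mathbf{16}+\mathbf{16}}}$ from Prop.\ \ref{BunkeStyleTDualityInfinitesimally} is by direct inspection of their presentations: by Prop.\ \ref{IIBAsExtension}, adjoining $e_9^{\mathrm{IIA}}$ with $de_9^{\mathrm{IIA}} = c_2^{\mathrm{IIA}}$ to $\mathrm{CE}(\mathbb{R}^{8,1\vert\mathbf{16}+\mathbf{16}})$ yields $\mathrm{CE}(\mathbb{R}^{9,1\vert\mathbf{16}+\overline{\mathbf{16}}})$; further adjoining $f_2$ with $df_2 = \mu_{F1}^{\mathrm{IIA}}$ realises the super-gerbe extension $\widehat{\mathbb{R}^{9,1\vert\mathbf{16}+\overline{\mathbf{16}}}}$ via Example \ref{homotopyfiberofLinfinityCocycles}; and the remaining generator $e_9^{\mathrm{IIB}}$ with $de_9^{\mathrm{IIB}} = c_2^{\mathrm{IIB}}$ exhibits the pullback along $p_A$ (by Def.\ \ref{doubledtypeIIspacetime}). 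The only real point demanding care is which of the two candidate closed lifts of $h_3$ is selected as the ``IIA'' versus the ``IIB'' one; but since both produce $L_\infty$-equivalent homotopy fibers (indeed related by the Poincar\'e coboundary $\mathcal{P}$ of Prop.\ \ref{BunkeStyleTDualityInfinitesimally}), this is a genuine choice of representative and not an obstacle.
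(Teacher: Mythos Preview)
Your argument is correct and arrives at the same result as the paper, with essentially the same underlying mechanism: adjoin $e_9^{\mathrm{IIA}}, e_9^{\mathrm{IIB}}$ to kill the two degree-2 generators of $b\mathcal{T}_1$, then adjoin $f_2$ to kill $h_3$, and identify the resulting differential on $f_2$ as $\mu_{F1}^{\mathrm{IIA}}$ (resp.\ $\mu_{F1}^{\mathrm{IIB}}$). The difference is in how this is packaged. The paper does not iterate via the pasting law; instead it builds in one stroke an explicit acyclic cofibrant replacement $A$ of the point over $\mathrm{CE}(b\mathcal{T}_1)$ by adjoining generators $e_9^{\mathrm{IIA}}, e_9^{\mathrm{IIB}}, f_2$ with $d e_9^{\mathrm{IIA}} = c_2$, $d e_9^{\mathrm{IIB}} = \tilde c_2$, and $d f_2 = h_3 + e_9^{\mathrm{IIA}}\wedge \tilde c_2$, and then takes the ordinary pushout along $(c_2^{\mathrm{IIA}}, c_2^{\mathrm{IIB}}, \mu_{F1}^9)$. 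This pushout directly yields $d f_2 = \mu_{F1}^9 + e_9^{\mathrm{IIA}}\wedge c_2^{\mathrm{IIB}} = \mu_{F1}^{\mathrm{IIA}}$.

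The practical difference is that the paper's explicit resolution makes the choice of closed lift of $h_3$ manifest from the outset (it is baked into the differential on $f_2$ in $A$), whereas in your filtration argument the induced second-stage map $\mathbb{R}^{8+(1+1),1\vert 32}\to b^2\mathbb{R}$ is only determined up to homotopy, and you then select a representative by hand. You correctly observe that the two natural representatives $\mu_{F1}^{\mathrm{IIA}}$ and $\mu_{F1}^{\mathrm{IIB}}$ differ by $d\mathcal{P}$, so the homotopy fiber is well-defined up to isomorphism either way; but strictly speaking, to pin down which choice yields the ``IIA'' model one should trace through the pasting law with explicit models, which is exactly what the paper's resolution $A$ accomplishes. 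Your final paragraph, matching generators directly against the description of $p_A^\ast\widehat{\mathbb{R}^{9,1\vert\mathbf{16}+\overline{\mathbf{16}}}}$ from Prop.~\ref{BunkeStyleTDualityInfinitesimally}, is the cleanest part of your argument and is essentially how the paper concludes as well.
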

\begin{proof}
We consider the case of type IIA, the case for IIB is of course
directly analogous.
To find the homotopy fiber, we may replace the point inclusion by a fibration
and then take the ordinary pullback. Dually, by Prop. \ref{modelstructureOnLInfinityAlgebra}
we need to find an inclusion of  $\mathfrak{L}(\mathrm{KU}/BU(1))/\mathbb{R}$
into CE-algebra whose underlying co-unary cochain complex is null, and then take the ordinary pushout
of $(c_a^{\mathrm{IIA}}, c_2^{\mathrm{IIB}}, \mu_{{}_{F1}}^9)$ along that,
$$
  \xymatrix{
    \mathrm{CE}\big( \mathbb{R}_{\mathfrak{dbl}}^{8,1\vert \mathbf{16}+ \mathbf{16}} \big)
    &&& A
    \ar[lll]
    \ar[r]^\simeq & 0
    \\
    \mathrm{CE}(\mathbb{R}^{8,1\vert \mathbf{16}+ \mathbf{16}})
    \ar[u]
    &&&
   \mathrm{CE}(b\mathcal{T}_1)\;.
    \ar@{^{(}->}[u]
    \ar@{}[ulll]|{\mbox{\tiny (po)}}
    \ar[lll]^-{\left( c_2^{\mathrm{IIA}},~ c_2^{\mathrm{IIB}},~  \mu_{{}_{F1}}^9 \right)}
  }
$$
To build such an $A$, we first need to adjoin generators $e_9^{\mathrm{IIA}}$ and $e_9^{\mathrm{IIB}}$ to $\mathrm{CE}(b\mathcal{T}_1)$ to
render the cocycles $c_2$ and $\tilde c_2$ trivial, by setting
$$
  d e_9^{\mathrm{IIA}}:= c_2 \;,\qquad d e_9^{\mathrm{IIB}}:= \tilde c_2
  \,.
$$
Furthermore, we need to kill $h_3$, which is a cocycle in the
underlying dual chain complex of $b\mathcal{T}_1$. However, adjoining a generator $f_2$
with $d f_2 = h_3$ would fail the condition that $d^2 \theta = 0$. However, to remedy that
we may set
$$
  d f_2 = h_3 + e_9^{\mathrm{IIA}} \wedge \tilde c_2\;,
$$
which is consistent with $d^2 \theta_2 = 0$
(recall that $d h_3 = - c_2 \wedge \tilde c_2$ according to Prop. \ref{CoefficientsIsoOnTDuality})
yet still removes $h_3$ in the cohomology of the underlying dual chain complex
(since the co-unary restriction of $d$ on $\theta_2$ is $h_3$).
In conclusion, a possible choice for $A$ is the quotient
%$$
%  A
%   :=
%  \bigslant{\mathrm{CE}(b\mathfrak{T}_1)[ \{e_9^A, e_9^B, \theta_2\} ]}
%{  \left(
%    { d e_9^A = c_2 \,, d e_9^B = \tilde c_2 } \atop { d b_2 = h_3 + e_9^A \wedge \tilde c_2 }
%  \right)}
%  \;.
%$$
$$
  A
   :=
    {{\raisebox{.2em}{${\mathrm{CE}(b\mathcal{T}_1)[ \{e_9^{\mathrm{IIA}}, e_9^{\mathrm{IIB}}, \theta_2\} ]}$}
\left/\raisebox{-.2em}{$\left(
    { d e_9^{\mathrm{IIA}} = c_2, ~ d e_9^{\mathrm{IIB}} = \tilde c_2 } \atop { d f_2 = h_3 + e_9^{\mathrm{IIA}} \wedge \tilde c_2 }
  \right)$}\right.}}\;.
$$
Now the pushout in question is directly read off, using the values of $(c_2^{\mathrm{IIA}}, c_2^{\mathrm{IIB}}, h_3)$
from section \ref{TDuality}. So
$c_2$ gets identified with $\overline{\psi}\Gamma_9 \psi$,
$\tilde c_2$ gets identified with $\overline{\psi}\Gamma_9^{\mathrm{IIB}} \psi$
and   $h_3$ gets identified with $\mu_{{}_{F1}}^{\mathrm{IIA}}|_{8+1}$ in the pushout,
while $\tilde c$ is identified with $-(\pi_9^{\mathrm{IIA}})_\ast(\mu_{{}_{F1}}^{\mathrm{IIA}})$. Hence $h_3 + e_9^{\mathrm{IIA}} \wedge \tilde C$ gets identified
with
$$
  \mu_{{}_{F1}}^{\mathrm{IIA}} |_{{}_{8+1}} - e_9^{\mathrm{IIA}} \wedge (\pi_9^{\mathrm{IIA}})_\ast(\mu_{{}_{F1}}^{\mathrm{IIA}}) =
  \mu_{{}_{F1}}
$$
as claimed.
\end{proof}

As a consequence, this shows how the equivalence in
Prop. \ref{BunkeStyleTDualityInfinitesimally} follows from
Theorem \ref{TheTDualityIsoOnLInfinityCocycles}, since the operation of forming
homotopy fibers sends weak equivalences to weak equivalences.
  $$
    \xymatrix
   @=1.65em{
      \mathbb{R}^{8,1\vert \mathbf{16} + \mathbf{16}}_{\mathfrak{dbl}^{\mathrm{IIA}}(K) }
     \ar@/^3pc/[dddrrr]^{~\mathrm{hofib(\mathfrak{L}(\mu^{\mathrm{IIA}}_{{}_{F1/Dp}}) }/\mathbb{R}  }
      \ar[dr]
      \\
      &
     p_{{}_A}^\ast\widehat{ \mathbb{R}^{9,1\vert \mathbf{16} + \overline{\mathbf{16}}}}
      \ar[dr]
      \\
      &
      &
      \mathbb{R}^{9,1\vert \mathbf{16}+\overline{\mathbf{16}}}
      \ar[dr]
      &
      &
      & \mathfrak{L}(\Sigma \mathrm{KU}/BU(1))/\mathbb{R}
      \ar[r]
      &
      b \mathcal{T}_1
      \\
      &
      &
      &
      \fbox{$\mathbb{R}^{8,1\vert\mathbf{16}+\mathbf{16}}$}
      \ar[drr]|{\mathfrak{L}(\mu_{{}_{F1/Dp}}^{\mathrm{IIA}})/\mathbb{R}}
      \ar[urr]|{\mathfrak{L}(\mu_{{}_{F1/Dp}}^{\mathrm{IIB}})/\mathbb{R}}
      &&
      \\
      &&
      \mathbb{R}^{9,1\vert \mathbf{16} + \mathbf{16}}
      \ar[ur]
      &&&
      \mathfrak{L}(\mathrm{KU}/BU(1))/\mathbb{R}
      \ar[uu]_\simeq^{\phi_T}
      \ar[r]
      &
      b \mathcal{T}_1\;.
      \\
      &
      p_{{}_B}^\ast \widehat{ \mathbb{R}^{9,1\vert \mathbf{16} + {\mathbf{16}}}}
      \ar[ur]
      \ar[uuuu]_\simeq
      \\
      \mathbb{R}^{8,1\vert \mathbf{16} + \mathbf{16}}_{\mathfrak{dbl}^{\mathrm{IIB}}(K) }
      \ar@/_3pc/[uuurrr]_{~~~~\mathrm{hofib(\mathfrak{L}(\mu^{\mathrm{IIB}}_{{}_{F1/Dp}}) }/\mathbb{R}  }
      \ar[uuuuuu]_\simeq
      \ar[ur]
    }
  $$

%%%%%%%%%%%%%%%%%%%%%%%%%%%%%%%%%%%%%%%%%%%%%%%%%%%
\section{F-theory}
\label{SectionOnF}
%%%%%%%%%%%%%%%%%%%%%%%%%%%%%%%%%%%%%%%%%%%%%%%%%%%

We have introduced in Definitionn \ref{doubledtypeIIspacetime} the ``doubled'' correspondence super Lie algebra
$$
  \mathbb{R}^{8+(1+1),1\vert 32}
    \;:=\;
  \mathbb{R}^{9,1\vert \mathbf{16} + \mathbf{16}} \underset{\mathbb{R}^{8,1\vert \mathbf{16} + \mathbf{16}}}{\times}
\mathbb{R}^{9,1\vert \mathbf{16} + \overline{\mathbf{16}}}
$$
of bosonic dimenion $9+2$ that serves to interpolate between
type IIA and type IIB super-spacetime via T-duality. It is immediate to see that this
inherits the D0-brane 2-cocycle of its type IIA factor and hence extends to a
super Lie algebra of bosonic dimension 10+2:

\begin{defn} \label{FTheorySuperLieAlgebra}
  Write $\mathbb{R}^{9+(1+1),1\vert 32}$ for the central super Lie algebra extension of
  $\mathbb{R}^{8+(1+1),1\vert 32}$ (def. \ref{doubledtypeIIspacetime}) classified by the 2-cocycle $p_{{}_A}^\ast c_2^M$, i.e. fitting into
a homotopy fiber sequence of the following form:
  $$
    \xymatrix{
      \mathbb{R}^{9+(1+1),1\vert 32}
      \ar[d]_-{\mathrm{hofib}(p_{{}_A}^\ast c_2^M)}
      \\
      \mathbb{R}^{8+(1+1),1\vert 32}
      \ar[rr]^-{p_{{}_A}}
      &&
      \mathbb{R}^{9,1\vert \mathbf{16} +\overline{\mathbf{16}}}
      \ar[rr]^-{c_2^M}
      &&
      b\mathbb{R}\;.
    }
  $$
  where $c_2^M$ is from Def. \ref{IIBAsExtension}.
\end{defn}
This super Lie algebra reflects both the relation between
the M-theoretic 11d spacetime and the type IIA super-spacetime, as well as the correspondence
of the latter to the type IIB super-spacetime.

\begin{remark}\label{FtheoryAlgebraExplicitly}
  Unwinding the definition, the CE-algebra of the super Lie algebra $\mathbb{R}^{9+(1+1),1\vert 32}$
of Def. \ref{FTheorySuperLieAlgebra} has the following generating relations
  $$
         d e_0  = \overline{\psi}\Gamma_0\psi\;,
          \;\;
         d e_1  = \overline{\psi} \Gamma_1 \psi\;,
         \;\;
         \hdots
        \;,\;
         d e_8  = \overline{\psi}\Gamma_8 \psi\;,
         \;\;
          d e_9^{\mathrm{IIB}} = \overline{\psi}\Gamma_9^{\mathrm{IIB}} \psi\;,
         \;\;
          d e_9 = \overline{\psi} \underset{\sigma_1}{\underbrace{\Gamma_9}} \psi\;,
          \;\;
          d e_{10} = \overline{\psi} \underset{\sigma_3}{\underbrace{\Gamma_{10}}} \psi\;,
  $$
  where on the right we have the 12 algebra elements from Def. \ref{IIBCliffordGenerators}.
\end{remark}

\medskip

Since the homotopy fiber of $c_2^M$ alone is the local model for M-theoretic spacetime,
hence for non-perturbative type IIA string theory, this suggests that the above homotopy fiber
$\mathbb{R}^{9+(1+1),1\vert 32}$
of $p_A^\ast c_2^M$ is similarly related to a non-perturbative description of type IIB
string theory.
That type IIB string theory ought to have a non-perturbative description in terms of 10+2 dimensional fibrations
over 10-dimensional super-spacetime is known as the \emph{F-theory} conjecture, due to \cite{Vafa96}.
%In order to capture as big a picture as we can,
In order to formalize aspects of this, we recall how this conjecture is motivated, see also for instance \cite{Johnson97}:

\medskip
\noindent {\bf The motivation of the F-theory conjecture from the M-theory conjecture.}
\begin{enumerate}
 \item Assume that a non-perturbative completion of type IIA string theory exists, given by a geometric theory on a 10+1 dimensional Riemannian circle fibration (M-theory).
\item  Pass to the limit that the radius $R_M$ of the circle fiber is infinitesimal to obtain perturbative type IIA string theory
with coupling constant
$$
  g_{{}_{\mathrm{IIA}}} = R_{{}_M}/\ell_s
 \,,
$$
where $\ell_s$ is the string scale.
\item Consider the situation when the 10d IIA spacetime is itself a 9+1 dimensional circle fibration with circle fiber $S^1_{{}_A}$,
so that in total the original 11d spacetime is a Riemannian torus fibration over 9d with fiber $S^1_{{}_M} \times S^1_{{}_A}$.
\item Invoke perturbative T-duality to find an equivalent perturbative type IIB string theory on a 9+1-dimensional fibration with dual circle fiber $S^1_{{}_B}$ of radius
$$
  \begin{aligned}
    R_{{}_B} & = \ell_s^2 / R_{{}_A}
 \end{aligned}
$$
By the rules of perturbative T-duality, the coupling constant of the IIB theory is
$$
  \begin{aligned}
     g_{{}_{\mathrm{IIB}}} & = g_{{}_{\mathrm{IIA}}} \tfrac{\ell_s}{R_{{}_A}}
    \\
     & = R_{{}_M}/R_{{}_A}\,.
  \end{aligned}
$$
Note that the last two steps can be combined \cite{Sch96}: M-theory compactified on a torus is supposed to be equivalent to type
IIB superstring theory compactified on a circle in the limit of small  volume
of the torus.

\item While this was ``derived'' for infinitesimal $R_{{}_M}$, observe that the resulting formula for the IIB coupling
evidently extrapolates to finite $R_{{}_M}$ and then says that the IIB theory remains weakly coupled for finite (large)
$R_{{}_M}$ if only $R_{{}_A}$ is suitably scaled along. Regard this as evidence for a non-perturbative version of
T-duality which relates the non-perturbative IIA theory (M-theory) with some non-perturbative completion of
IIB string theory, to be called \emph{F-theory}.
\item Collect the geometric data that went into this construction:
Retain information both of the $S^1_{{}_A}$-fiber, hence of the doubled correspondence space, and
of the $S^1_{{}_M}$-fiber to obtain in total a 10+2-dimensional $S^1_{{}_M}
\times S^1_{{}_A}$-fiber bundle over 10d type IIB spacetime.
\item Since the coupling constant $g_{{}_{\mathrm{IIB}}}$ depends only on the ratio
$R_{{}_M}/R_{{}_A}$, hence only
on the complex structure of this torus, conclude that this is to be regarded as a 10+2-dimensional elliptic fibration.
\item
 Check that the action of S-duality on type IIB fields corresponds to the automorphisms of the
 elliptic fiber.
\end{enumerate}

\medskip

We now observe that the super Lie algebra from Def. \ref{FTheorySuperLieAlgebra}
has just the right structure to be the model for the super-tangent space of this F-theory elliptic fibration.

\begin{prop}\label{TheFTheorySpacetimeInContext}
The super Lie algebra $\mathbb{R}^{9+(1+1),1|32}$ from Def. \ref{FTheorySuperLieAlgebra}
fits into a diagram of super $L_\infty$-algebras of the following form
$$
  \xymatrix@=2.8em{
    & &
    \mathbb{R}^{10,1\vert \mathbf{32}}
    \ar[dr]|{\mathrm{hofib}(c^M_2)~}
    & & b \mathbb{R} &
    \\
    & \mathbb{R}^{9+(1+1),1\vert {32}}
    \ar@{}[rr]|{\mbox{\tiny (pb)}}
    \ar[ur]
    \ar[dr]
    &
    &
    \mathbb{R}^{9,1\vert \mathbf{16}+\overline{\mathbf{16}}}
    \ar[ur]|{c^M_2}
    \ar[dr]|{\mathrm{hofib}(c_2^{\mathrm{IIA}})}
    & & b \mathbb{R}&
    \\
    & &
    \mathbb{R}^{8+(1+1),1\vert 32}
    \ar[ur]|{p_{{}_A}}
    \ar[dr]|{p_{{}_B}}
    \ar@{}[rr]|{\mbox{\tiny (pb)}}
    &
    &
    \mathbb{R}^{8,1\vert \mathbf{16} + \mathbf{16}}
    \ar[ur]|{ c_2^{\mathrm{IIA}} }
    &&
    \\
    &
    \widehat{\mathbb{R}^{8+(1+1),1\vert 32}}
    \ar[ur]
    \ar[dr]
    \ar@{}[rr]|{\mbox{\tiny (pb)}}
   % \ar@{..>}[urrr]_{\hspace{-30mm} \mathrm{hofib}(c_2^A,c_2^{\mathrm{IIB}}, \mu_{{}_{F1}}^9)}
      \ar@{..>}@/_8.5pc/[urrr]^{\hspace{5mm} \mathrm{hofib}(c_2^{\mathrm{IIA}},\;c_2^{\mathrm{IIB}},\; \mu_{{}_{F1}}^9)}
    &
    &
    \mathbb{R}^{9,1\vert \mathbf{16}+{\mathbf{16}}}
    \ar[ur]|{\mathrm{hofib}(c_2^{{}^{\mathrm{IIB}}})}
    &
    &&
    \\
    &
    &
    \widehat{\mathbb{R}^{9,1\vert \mathbf{16}+ \mathbf{16}}}
    \ar[ur]|{\mathrm{hofib( \mu_{{}_{F1}}^{\mathrm{IIB}} )}}
    &
    &
    &
  }
$$
where each square is a (homotopy) pullback square (homotopy Cartesian).
\end{prop}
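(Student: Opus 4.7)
The plan is to verify each of the three squares in the diagram as a homotopy pullback, working from right to left and relying on the pasting lemma for homotopy pullback squares. The two fiber sequences appearing as the right-most pieces of the diagram — namely $\mathbb{R}^{10,1\vert \mathbf{32}} \to \mathbb{R}^{9,1\vert \mathbf{16}+\overline{\mathbf{16}}} \xrightarrow{c_2^M} b\mathbb{R}$ and $\mathbb{R}^{9,1\vert \mathbf{16}+\mathbf{16}} \to \mathbb{R}^{8,1\vert \mathbf{16}+\mathbf{16}} \xrightarrow{c_2^{\mathrm{IIA}}} b\mathbb{R}$ — are already homotopy fiber sequences by Proposition \ref{IIBAsExtension}. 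The middle square of the diagram, exhibiting $\mathbb{R}^{8+(1+1),1\vert 32}$ as the fiber product $\mathbb{R}^{9,1\vert \mathbf{16}+\mathbf{16}} \times_{\mathbb{R}^{8,1\vert \mathbf{16}+\mathbf{16}}} \mathbb{R}^{9,1\vert \mathbf{16}+\overline{\mathbf{16}}}$, is precisely the defining content of Definition \ref{doubledtypeIIspacetime}.

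With these in hand, the upper-left square involving $\mathbb{R}^{9+(1+1),1\vert 32}$ is handled as follows. By Definition \ref{FTheorySuperLieAlgebra}, the left edge of this square is the homotopy fiber of the composite $\mathbb{R}^{8+(1+1),1\vert 32} \xrightarrow{p_A} \mathbb{R}^{9,1\vert \mathbf{16}+\overline{\mathbf{16}}} \xrightarrow{c_2^M} b\mathbb{R}$; hence pasting the upper-left square with the homotopy fiber sequence on its right produces an outer rectangle that is a homotopy pullback. Since the right square is also a homotopy pullback (by Proposition \ref{IIBAsExtension}), the pasting lemma implies that the upper-left square is itself a homotopy pullback. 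The lower-left square is treated analogously: one defines $\widehat{\mathbb{R}^{8+(1+1),1\vert 32}} := p_B^\ast \widehat{\mathbb{R}^{9,1\vert \mathbf{16}+\mathbf{16}}}$ as the pullback of the higher central extension $\widehat{\mathbb{R}^{9,1\vert \mathbf{16}+\mathbf{16}}} = \mathrm{hofib}(\mu_{F1}^{\mathrm{IIB}})$ along $p_B$, and identifies the result with the $\mathcal{T}_1$-valued T-fold super Lie 2-algebra labeling the dotted arrow, which is the content of Proposition \ref{TheTFoldAlgebra}.

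The one place that requires care, and which I expect to be the main obstacle, is confirming that the ordinary pullbacks (equivalently, pushouts of the dual CE-algebras) that realize these squares in the $1$-category $sL_\infty\mathrm{Alg}_{\mathbb{R}}$ actually model the homotopy pullbacks demanded in the proposition. This is resolved by invoking Proposition \ref{modelstructureOnLInfinityAlgebra}: every vertical map appearing in the diagram arises as a central extension or higher central extension and is thus given dually, by Example \ref{homotopyfiberofLinfinityCocycles}, by adjoining a single new generator to the CE-algebra of the base. Such dual maps are injections on the space of generators, which are precisely the cofibrations in the dual model structure, so the corresponding maps of super $L_\infty$-algebras are fibrations. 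It follows that the strict pullbacks assembled above are homotopy pullbacks, concluding the verification of the diagram.
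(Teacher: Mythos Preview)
Your proposal is correct and follows essentially the same approach as the paper, which simply states that the result ``follows immediately from the pasting law for homotopy pullbacks of super $L_\infty$-algebras'' and that it ``is also directly checked explicitly.'' You have in fact supplied considerably more detail than the paper does, including the justification via Proposition \ref{modelstructureOnLInfinityAlgebra} that the strict pullbacks model the homotopy pullbacks, which the paper leaves implicit.
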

\begin{proof}
 This follows immediately from the pasting law for homotopy pullbacks of super $L_\infty$-algebras.
 It is also directly checked explicitly.
\end{proof}

In particular, this says that
the composite diagonal morphism
$$
  \mathbb{R}^{8+(1+1),1\vert 32} \longrightarrow \mathbb{R}^{8,1\vert\mathbf{16} + \mathbf{16}}
$$
exhibits its domain as a $\mathrm{Lie}(S^1_{{}_M} \times S^1_{{}_A} )$-fibration
over the type IIB superspacetime.

\begin{remark}
Remark \ref{FtheoryAlgebraExplicitly} says that the bosonically 12-dimensional
$\mathbb{R}^{9+(1+1),1\vert 32}$ is a super Lie algebra, but not a
super-\emph{Minkowski} Lie algebra, hence not a super-symmetry algebra in the sense of
spacetime supersymmetry.
Instead, the diagram in Prop. \ref{TheFTheorySpacetimeInContext} shows how it projects
onto various genuine super-spacetime algebras.
\end{remark}

This is consistent with the observations and assertions
in the literature  that F-theory does not have a straightforward spacetime
 interpretation and, furthermore, that the alternative seems to
 emanate from superalgebras, but not of the usual type.
  In \cite{BVP}, is is shown how  the algebras in 10, 11, and 12 dimensions
  can be described by a web of dualities as different faces of the ortho-symplectic
 superalgebra ${\rm OSp}(1|32)$. In particular, for F-theory
 the corresponding algebra has no vector operator, so that there is
 no generator for translations.  This implies that F-theory, in contrast
 to M-theory, has no straightforward spacetime interpretation.
 It was noted in \cite[end of Sec. 6]{CJLP} that the fact that
 the type IIA and IIB theories have  ${\rm SL}(2,\R)$ and  ${\rm SL}(1|1)$
 global symmetry, respectively, highlights the similarity between the
 symmetries of the two, but more importantly  point to
a possibly fermionic twelve-dimensional origin.

\medskip
Hence, if we regard $\mathbb{R}^{9+(1+1),1\vert 32}$ as a $\mathbb{R}^2$-fibration
over the IIB-spacetime according to Prop. \ref{TheFTheorySpacetimeInContext}, then
the last check in the list of F-theory desiderata from above is that its fiber automorphisms
induce S-duality transformation on the type IIB fields:
\begin{defn}
  \label{IIBCocyclesAsFCocycles}
  Write
  $$
    \mu^F_{{}_{F1}},\; \mu^F_{{}_{Dp}} \;\; \in \mathrm{CE}(\mathbb{R}^{9+(1+1),1\vert 32})
  $$
  for the pullback of the type IIB F1/D$p$-brane cocycles from Def. \ref{IIBBraneCEElements}
  along the projection
  $$
    \mathbb{R}^{9+(1+1),1\vert 32} \longrightarrow \mathbb{R}^{9,1\vert \mathbf{16} + \mathbf{16}}
  $$
  from Prop. \ref{TheFTheorySpacetimeInContext}.
  \end{defn}

  Hence in terms of generators, via Remark \ref{FtheoryAlgebraExplicitly}, these CE-elements
  have the same form as in Def. \ref{IIBBraneCEElements}, but with all generators $e^a$ renamed
  as $e_{\mathrm{IIB}}^a$. Notably the cocycles for the F1- and D1-string on the F-theory space read
  $$
    \mu^F_{{}_{F1}} = i \left( \overline{\psi} \Gamma_a^{\mathrm{IIB}} \Gamma_{10} \psi \right) \wedge e_{\mathrm{IIB}}^a
\qquad \text{and} \qquad
    \mu^F_{{}_{D1}} = i \left( \overline{\psi} \Gamma_a^{\mathrm{IIB}} \Gamma_{9} \psi \right) \wedge e_{\mathrm{IIB}}^a
    \,.
  $$

\begin{prop}
  Rotation in the $(9,10)$-plane of the fiber $\mathbb{R}^2$ of the F-theory super Lie algebra
  from Def. \ref{FTheorySuperLieAlgebra}
  is a super Lie algebra automorphism
  $$
    \begin{array}{ccc}
    \phi_\alpha
     \;:\;
   \mathbb{R}^{9+(1+1),1\vert 32}
     &
     \longrightarrow
     &
   \mathbb{R}^{9+(1+1),1\vert 32}
      \\
      e^a_{\mathrm{IIB}} & \mapsto& e^a_{\mathrm{IIB}}
      \\
      \\
      \left(
        \begin{array}{c}
          e^9
          \\
          e^{10}
        \end{array}
      \right)
        & \mapsto&
        \left(
         \begin{array}{cc}
          \cos(\alpha) & \sin(\alpha)
          \\
          -\sin(\alpha) & \cos(\alpha)
          \end{array}
        \right)
        \cdot
      \left(
        \begin{array}{c}
          e^9
          \\
          e^{10}
        \end{array}
      \right)
      \\
      \\
      \psi &\mapsto& \exp(\tfrac{\alpha}{4} \Gamma_9 \Gamma_{10}) \psi\;.
    \end{array}
  $$
  Under this automorphism the F1-string and the D1-string are turned into superpositions
  of each other, in that their F-theoretic cocycles from Def. \ref{IIBCocyclesAsFCocycles}
  satisfy:
  $$
    \phi_\alpha^\ast
    \left(
      \begin{array}{c}
      \mu_{{}_{D1}}^F
      \vspace{1mm}\\
      \mu^F_{{}_{F1}}
      \end{array}
    \right)
    =
        \left(
         \begin{array}{cc}
          \cos(\alpha) & \sin(\alpha)
          \\
          -\sin(\alpha) & \cos(\alpha)
          \end{array}
        \right)
     \cdot
    \left(
     \begin{array}{c}
      \mu_{{}_{D1}}^F
      \vspace{1mm}\\
      \mu^F_{{}_{F1}}
      \end{array}
    \right) \;.
  $$
\end{prop}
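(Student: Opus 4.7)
The plan is to split the proof into two independent Clifford-algebraic checks, both ultimately resting on the commutation relations $[\Gamma_9\Gamma_{10},\Gamma_9]=2\Gamma_{10}$ and $[\Gamma_9\Gamma_{10},\Gamma_{10}]=-2\Gamma_9$, which are immediate from Definition \ref{basicconventions}.

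First I would verify that $\phi_\alpha$ is a super Lie algebra automorphism by checking that $\phi_\alpha^*$ commutes with the Chevalley--Eilenberg differential on each generator listed in Remark \ref{FtheoryAlgebraExplicitly}. The check on $\psi^\alpha$ is trivial since $d\psi^\alpha=0$. For the bosonic generators $e^a$ with $a\leq 8$: since $\Gamma_a$ anticommutes with both $\Gamma_9$ and $\Gamma_{10}$ it commutes with their product, so the bilinear $\bar\psi\Gamma_a\psi$ is invariant under the spinor conjugation by $\exp(\tfrac{\alpha}{4}\Gamma_9\Gamma_{10})$, while $e^a$ itself is pointwise fixed by $\phi_\alpha$. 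For $e^9_{\mathrm{IIB}}$ the corresponding invariance of $\Gamma^{\mathrm{IIB}}_9$ under the same conjugation is precisely Remark \ref{SpinorToVectorPairingIIB}(iii). The nontrivial case is $e^9$ and $e^{10}$: using the two Clifford commutators above, the adjoint action of $\Gamma_9\Gamma_{10}$ on the plane spanned by $(\Gamma_9,\Gamma_{10})$ is identified with an infinitesimal $SO(2)$-rotation, and exponentiating yields the standard Spin-to-vector covariance that matches the prescribed rotation of $(e^9,e^{10})$ under $\phi_\alpha$. Along the way one must also verify, using the Hermiticity properties of $\Gamma_9$ and $\Gamma_{10}$ in Definition \ref{basicconventions}, that $\overline{S\psi}=\bar\psi S^{-1}$ for $S=\exp(\tfrac{\alpha}{4}\Gamma_9\Gamma_{10})$, so that the induced action on $\bar\psi\Gamma_a\psi$ is conjugation of the Clifford matrix.

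Second, I would compute the action on the F1/D1-cocycles. The key observation is that
\[
\mu^F_{{}_{F1}}=i\bigl(\bar\psi\,\Gamma^{\mathrm{IIB}}_a\Gamma_{10}\,\psi\bigr)\wedge e^a_{\mathrm{IIB}}\,,\qquad \mu^F_{{}_{D1}}=i\bigl(\bar\psi\,\Gamma^{\mathrm{IIB}}_a\Gamma_{9}\,\psi\bigr)\wedge e^a_{\mathrm{IIB}}
\]
differ only in the placement of $\Gamma_{10}$ versus $\Gamma_9$; every other ingredient is already $\phi_\alpha$-invariant from the first step (the vielbeine $e^a_{\mathrm{IIB}}$ are all pointwise fixed by $\phi_\alpha$, including $e^9_{\mathrm{IIB}}$ since the rotation mixes $e^9$ and $e^{10}$ rather than $e^9_{\mathrm{IIB}}$, and $\Gamma^{\mathrm{IIB}}_a$ is invariant by Remark \ref{SpinorToVectorPairingIIB}(iii)). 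Hence the transformation of the doublet $(\mu^F_{{}_{D1}},\mu^F_{{}_{F1}})$ is dictated entirely by the $SO(2)$-rotation of the pair $(\Gamma_9,\Gamma_{10})$ induced by spinor conjugation, which was computed in the first step. Matching the matrix entries to those in the claimed transformation then concludes the proof.

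The main obstacle I anticipate is not conceptual but bookkeeping: carefully tracking the factor of two relating the vielbein rotation angle $\alpha$ to the coefficient of $\Gamma_9\Gamma_{10}$ in the spinor exponential (the standard half-angle of the double cover $\mathrm{Spin}(2)\to SO(2)$), and ensuring all signs in the rotation matrix are consistent with the orientations in $[\Gamma_9\Gamma_{10},\Gamma_9]=+2\Gamma_{10}$ and the Dirac-conjugation identity for $S$. Once these factors are pinned down once and for all in step one, step two is an immediate substitution.
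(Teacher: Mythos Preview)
Your proposal is correct and follows essentially the same approach as the paper's proof: first verify the automorphism via the spinor-bilinear covariance (the paper simply cites Remark~\ref{SpinorToVectorPairingIIB} for this), then observe that in the cocycle formulas from Def.~\ref{IIBCocyclesAsFCocycles} every factor is $\phi_\alpha$-invariant except the explicit $\Gamma_9$ versus $\Gamma_{10}$, which rotate into each other. Your version simply unpacks in detail the Clifford-algebraic content that the paper leaves implicit in its reference to Remark~\ref{SpinorToVectorPairingIIB}.
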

\begin{proof}
  That $\phi_\alpha$ is indeed a super Lie algebra automorphism follows with
  Remark \ref{SpinorToVectorPairingIIB}.
  Hence acting with $\phi_\alpha$ on $\mu^F_{{}_{F1}}$ and $\mu^F_{{}_{D1}}$
  leaves all the factors in the  explicit formula in
  Def. \ref{IIBCocyclesAsFCocycles} invariant, except for $\Gamma_9$ and $\Gamma_{10}$,
  which are rotated into each other.
\end{proof}

Under the identification $\sigma_2 = - \Gamma_{9} \Gamma_{10}$ from Def. \ref{IIBCliffordGenerators}
this is the action of S-duality on the F1/D1 cocycles according to \cite[Remark 4.9]{FSS13}.

\medskip
\medskip

{\bf Acknowledgement.}

\medskip

We thank Jonathan Pridham for discussion of his model structure for $L_\infty$-algebras.
U.S. thanks Thomas Nikolaus for discussion of T-folds.

\noindent D.F. thanks New York University Abu Dhabi (NYUAD)
for kind hospitality during the writing of this paper.

\noindent H.S. would like to thank Ralph Kaufmann and Alexander Voronov,
the organizers of the AMS special sessions on Topology and Physics in Minneapolis in
October 2016, for the opportunity to present the results of this work.

\noindent U.S. thanks the Max Planck Institute for Mathematics (MPI) in Bonn
for kind hospitality during the writing of this paper.
U.S. thanks Alberto Cattaneo for kind hospitality at University of Zurich
and the opportunity to lecture on the results presented here.
U.S. was supported by RVO:67985840.

%%%%%%%%%%%%%%%%

\end{document}
\grid